\numberwithin{equation}{section}
\theoremstyle{plain}
\newtheorem*{thm*}{Theorem}
\theoremstyle{plain}% 
\newtheorem{thm}{Theorem}[section]
\newtheorem{lem}[thm]{Lemma}
\newtheorem{cor}[thm]{Corrollary}
\newtheorem{prop}[thm]{Proposition}
\theoremstyle{definition}
\newtheorem{defn}[thm]{Definition}
		\newtheorem{notation}[thm]{Notation}
\newtheorem{rem}[thm]{Remark}
\newcommand{\tr}{{\bf tr} }
\numberwithin{equation}{section}
\tikzset{
  big arrow/.style={
    decoration={markings,mark=at position 1 with {\arrow[scale=1.5,#1]{>}}},
    postaction={decorate},
    shorten >=0.4pt},
  big arrow/.default=black}
\begin{document}

\begin{titlepage}
\begin{center}
\vspace{2cm}
{\Huge\bfseries    The Geometry of SO($3$), SO($5$), and SO($6$)-models \\  }
\vspace{2cm}
{\Large
Mboyo Esole$^{\diamondsuit}$ and Patrick Jefferson$^\clubsuit$\\}
\vspace{.6cm}
{\large $^{\diamondsuit}$ Department of Mathematics, Northeastern University}\par
{\large  360 Huntington Avenue, Boston, MA 02115, USA}\par
\vspace{.3cm}
{{\large $^\clubsuit$ Jefferson Laboratory,  Harvard University}\par
{ 17 Oxford Street, Cambridge, MA 02138, U.S.A}\par}
\vspace{2cm}
{ \bf{Abstract}}\\
\end{center}

SO($3$), SO($5$), and SO($6$)-models are 
singular elliptic fibrations with  Mordell--Weil torsion $\mathbb{Z}/2\mathbb{Z}$ and singular fibers whose dual fibers correspond to affine Dynkin diagrams of type A$_1$, C$_2$, and A$_3$ respectively, where we emphasize the distinction between SO$(n)$ and its universal cover Spin$(n)$. While the SO($3$)-model has been studied before, the SO($5$) and SO($6$)-models are studied here for the first time.
By computing crepant resolutions of their Weierstrass models, we study their fiber structures and topological invariants. 
In the special case that the SO$(n)$-model is an elliptically fibered Calabi-Yau threefold, we compute the Chern-Simons couplings and matter content of a 5D $\mathcal N=1$ supergravity theory with gauge group SO$(n)$, which is related to M-theory compactified on this Calabi-Yau threefold. 
We also verify the 6D lift of the 5D matter content is necessary and sufficient for anomaly cancellation in 6D $(1,0)$ supergravity theories geometrically engineered by F-theory compactified on the same threefold. We find that the associated 5D and 6D supergravity theories with SO$(n)$ gauge symmetry indeed differ from their Spin$(n)$ cousins, with one striking consequence of this distinction being that all such theories must include adjoint matter.

\vfill 
%{ \today\   at \ \currenttime\par}
%{ 2010 Mathematics Subject Classification. Primary 14D06, 14E15, 14C17, 14J27; Secondary  14H52, 14J32, 14J81.}\\
%{Keywords and phrases. Elliptic fibrations, Crepant resolution of singularities, flop transitions, Weierstrass models, Tate's algorithm.}\\

\end{titlepage}

\tableofcontents

\newpage

 \section{Introduction}

The utilization of gauge theory as a tool to study low-dimensional topology has dramatically increased our understanding of three-dimensional and four-dimensional manifolds. 
Similarly, gauge theories geometrically engineered by  elliptic fibrations  are nowadays the major engine of  progress in our mathematical understanding of elliptic fibrations, with key developments being  motivated by ideas and questions from theoretical physics which have helped shape our understanding of their topological invariants \cite{Euler,Char1,Char2,Arras:2016evy}, 
degenerations \cite{Bershadsky:1996nh,AE1,AE2,CDE,Clingher:2012rg,Sen:1997gv,Esole:2012tf}, the structure of their  singular fibers \cite{EY,Cattaneo,Morrison:2011mb,Braun:2013cb,EFY}, their links with the study of Higgs bundles \cite{Anderson:2017zfm}, and the  geography of their networks of flops \cite{Witten,IMS}  as discussed in \cite{G2,Esole:2014dea,SU2SU3,ESY1,ESY2,EJJN1,EJJN2,Hayashi:2014kca}. 
Conversely, M-theory and F-theory compactifications on elliptic fibrations continue to play a key role in the study of supersymmetric gauge theories with eight supercharges in six-dimensional spacetime \cite{Sadov:1996zm,Park,Morrison:2012np,Heckman:2018jxk,Haghighat:2014vxa,DelZotto:2017mee,
Monnier:2017oqd,
GM1,
Tian:2018icz,Taylor:2019ots
} and five-dimensional spacetime \cite{DelZotto:2017pti,
Jefferson:2018irk,Jefferson:2017ahm,Xie:2017pfl,Kimura:2019bzv,Grimm:2011fx,Grimm:2015zea
}.  For  reviews of F-theory, see \cite{Weigand:2018rez,Heckman:2018jxk,Park,Denef:2008wq}. 

Interestingly, the perspective that  gauge theories bring to the study of  elliptic fibrations is not only influenced by the physics of local operators associated with the  gauge algebra, but is also shaped by subtler physics considerations related to the global structure of the gauge group. An illustrative example is the distinction between the Lie groups Spin($n$) and  SO($n$), which despite being isomorphic as Lie algebras have different global structures. The global structures can typically only be distinguished in gauge theories by studying the dynamics of operators associated to non-local objects such as Wilson lines and instantons. For instance, the simple observation that the  volume of SO($3$) is half of the volume of SU($2$) has important consequences for the structure of their respective instantons as beautifully explained  in  \cite[\S 4.3]{DW}.  
The key point here is that while every SU($2$)-bundle naturally gives rise to an SO($3$)-bundle, not all SO($3$)-bundles can be lifted to SU($2$)-bundles.  This intrinsic difference is also reflected in the properties of instanton solutions, as the minimal charge of an SO($3$)-instanton can be a fraction of the minimal charge of an SU($2$)-instanton \cite{DW}.

It turns out that the distinction between Spin$(n)$ and SO$(n)$ can be seen clearly in the setting of elliptic fibrations, as the center of any gauge group geometrically engineered by an elliptic fibration is conjectured to be isomorphic to the Mordell--Weil group of the elliptic fibration \cite{Mayrhofer:2014opa}, and moreover
the $\mathbb{Z}/2\mathbb{Z}$ that appears in the exact sequence connecting Spin($n$) and SO($n$) is the torsion subgroup of the Mordell--Weil group of the elliptic fibration. Thus,  a precise understanding of the Mordell--Weil torsion subgroup is crucial for a thorough understanding of physics related to the global structure of the gauge group.

The subject of this paper is the SO$(n)$-model, and its distinction from the Spin$(n)$-model. While higher rank SO$(n)$ groups are related to I$^*_k$ fibers, the low rank examples $n=3,5,6$ are associated to Kodaira fibers which do not belong to the same infinite family and hence require a case-by-case analysis. The simple groups SO$(n)$ with $n=3,5,6$ are each subject to one of the four accidental isomorphisms of Lie algebras: 
$$
\text{A}_1\cong \text{B}_1 \cong \text{C}_1, \quad \text{B}_2\cong \text{C}_2, \quad   \text{D}_2 \cong\text{A}_1\oplus\text{A}_1, \quad \text{D}_3\cong \text{A}_3,
$$
where the semi-simple, connected and simply-connected, compact Lie groups corresponding to the above Lie algebras are the low dimensional spin groups
$$
\text{Spin}(3)\cong \text{SU}(2)\cong \text{USp}(2),\quad \text{Spin}(4)=\text{SU}(2)\times\text{SU}(2), \quad \text{Spin}(5)=\text{USp}(4), \quad \text{Spin}(6)=\text{SU}(4). 
$$
The quotients of the above spin groups by $\mathbb{Z}/2\mathbb{Z}$ are the special orthogonal groups 
$$
\text{SO}(3),\quad \text{SO}(4), \quad \text{SO}(5) \quad \text{SO}(6). 
$$
Unlike the higher rank cases, the SO$(n)$-models with $n=3,5,6$ have received comparatively less attention. While SO($3$)-models are well understood \cite{Mayrhofer:2014opa}, SO($4$)-models were only recently studied \cite{SO4}, and much less is known about SO($5$) and SO($6$)-models in F-theory; 
one goal of this paper is to fill that gap. 

We define SO($3$), SO($5$), and SO($6$)-models and study their properties through the eyes of M-theory and F-theory compactifications to five-dimensional (5D) \cite{Cadavid:1995bk,IMS}  and six-dimensional (6D) supergravity theories with eight supercharges   \cite{Ferrara:1996wv,Morrison:1996pp,Sadov:1996zm,Park}. We use the geometry of the SO$(n)$-models to explore the Coulomb branch of a 5D $\mathcal N=1$ supergravity theory with gauge group $G = \text{SO}(n)$ for $n=3,5,6$ and hypermultiplets in the representation $\textbf{R}$, where \textbf{R} is determined by the degeneration of the elliptic fibration over codimension two points in the base of the elliptic fibration. In particular, we study the one-loop exact prepotential of this 5D theory in terms of the intersection ring of the SO$(n)$-model. We also analyze the details of anomaly cancellation in the 6D $(1,0)$ theory engineered by F-theory compactified on the same SO$(n)$-model; by enumerating the charged hypermultiplets in terms of geometric quantities, we verify the same 5D content (uplifted to 6D) is both necessary and sufficient for the cancellation of anomalies in the 6D supergravity theory.

The computations described above are strong indications that the SO$(n)$-models studied in this paper ($n=3,5,6$) geometrically engineer consistent 5D and 6D supergravity theories. Unsurprisingly, it turns out that these theories are indeed distinct from their related Spin$(n)$ cousins, as their massless spectra differ in various ways. One distinction is that the geometry of the SO$(n)$-models prevents the existence of 5D and 6D hypermultiplets transforming in the spinor representation of SO$(n)$. Another distinction, somewhat more subtle, is that the structure of these SO$(n)$-models requires the existence of 5D and 6D hypermultiplets in the adjoint representation, which implies that the non-gravitational sectors of these theories cannot be described as gauge theories with ultraviolet fixed points.  

The remainder of the paper is structured as follows. In Section \ref{Sec:preliminaries}, we specify our conventions, discuss some preliminary notions, and summarize the results of our computations. Section \ref{sec:SO3} contains a detailed description of the resolution, fiber structure, and topological invariants of the SO$(3)$-model. Similar details are presented in Sections \ref{sec:SO5} and \ref{sec:SO6} for the SO$(5)$ and SO$(6)$-models, respectively. The Hodge numbers of the SO$(n)$ models are computed and tabulated in Section \ref{sec:hodge}. Finally, the geometric and topological aspects computed in the prior sections are used to study F/M-theory compactifications on special cases of SO$(n)$-models corresponding to elliptically-fibered Calabi-Yau threefolds in Section \ref{sec:stringy}.

\newpage
\begin{table}[htb]
	\begin{center}
	\scalebox{.85}{
	\begin{tabular}{|c|}
	\hline \\
	$
	\begin{array}{c}
	\arraycolsep=1.7pt\def\arraystretch{1.3}
			\begin{array}{|l|c|}
				\multicolumn{2}{c}{		\begin{matrix}	\text{\textbf{SO($3$)-model}}   \end{matrix}  } \\\hline
				\text{Weierstrass equation}  &y^2 z = x^3 + a_2 x^2 z + a_4 x z^2\\\hline
				\text{Discriminant} & \Delta=  a_4^2 (4 a_4 - a_2^2) \\\hline
				\text{Singular fibers}&\begin{array}{c} \begin{tikzpicture} 
					\draw[yshift=20,scale=.45,domain=-1.2:1.2,variable=\x, thick] plot({\x*\x-.5,\x});
					\draw[yshift=20,scale=.45,domain=-1.2:1.2,variable=\x, thick] plot({1-\x*\x-.5,\x});
					\draw[scale=.8,xshift=65,domain=-.7:.7,variable=\x, thick] plot({\x*\x,\x});
					\draw[scale=.8,xshift=65,domain=-.7:.7,variable=\x, thick] plot({-\x*\x,\x});
					\draw[yshift=60,xshift=5,scale=.6,domain=-1.2:1.2,variable=\x, thick] plot({\x*\x-1,\x*\x*\x-\x-5});
					\draw[big arrow] (.5,-.8) to (1.2,-.4);
					\draw[big arrow] (.55,1.5-.8) to (1.2,1.1-.8);
				\end{tikzpicture}\end{array}\\\hline
				\text{Matter representation} &\text{adjoint}:\  ({\bf{3}}),  \quad n_{\bf{3}}=1+6K^2\\\hline
				\text{Euler characteristic} & \frac{ 12 L}{1+ 4 L} c(TB)\\\hline
				\text{Triple intersections} &6 \mathcal F =48 L^2( -  \alpha_0^3 + \alpha_0^2 \alpha_1 +  \alpha_0 \alpha_1^2 - \alpha_1^3)\\\hline
						\multicolumn{2}{c}{		\begin{matrix}\\	\text{\textbf{SO($5$)-model}}   \end{matrix}  } \\\hline
				\text{Weierstrass equation}  & y^2 z = x^3 + a_2 x^2 z + s^2 x z^2\\\hline
				\text{Discriminant} & 	\Delta =s^4 (4 s^2 - a_2^2)\\\hline
				\text{Singular fibers} & \begin{array}{c}
					\begin{tikzpicture}[scale=1.1]
			\node[draw,circle,thick,scale=.9](C0) at (0,0) {};
			\node[draw,circle,thick,scale=.9](C1+) at (-.5,-.5) {};
			\node[draw,circle,thick,scale=.9](C1-) at (.5,-.5) {};
			\node[draw,circle,thick,scale=.9](C2) at (0,-1) {};
			\draw[xshift=5,yshift=85,scale=.45,domain=-1.2:1.2,variable=\x, thick] plot({\x*\x-1,\x*\x*\x-\x-5});
			\draw[xshift=5,yshift=15,scale=.45,domain=-1.2:1.2,variable=\x, thick] plot({\x*\x-1,\x*\x*\x-\x-5});
			\draw[thick] (C0) -- (C1+) -- (C2) -- (C1-) -- (C0);
			\draw[big arrow] (1.2,-.5) -- node[above,midway]{{}} (1.7,-.5);
			\node[draw,circle,xshift=70,thick,scale=.9](C0') at (0,0) {};
			\node[draw,circle,xshift=70,scale=.62,thick,scale=.7](C13) at (0,-.5) {\LARGE 2};
			\node[draw,circle,xshift=70,thick,scale=.9](C2) at (0,-1) {};
			\draw[dashed,scale=1.28,yshift=3,thick,red] (-.5*1.25,-.5*1.3) -- (-.5*1.25,-.5*.7) -- (.5*2.55-.5*1.3,-.5*.7) -- (.5*2.55-.5*1.3,-.5*1.3) -- (-.5*1.3,-.5*1.3);
			\draw[big arrow] (.5,.6) -- (1.72,0) {};
			\draw[big arrow,yshift=-30] (.5,-.6) -- (1.72,0) {};
			\draw[thick] (C0') -- (C13) -- (C2);
		\end{tikzpicture}
				\end{array}\\\hline
				\text{Matter representation} &\text{adjoint+vector}:  (\mathbf{10}\oplus \mathbf{5}),\quad n_{\bf{10}}=1+K^2, \quad n_{\bf{5}}=3 K^2\\\hline 
				\text{Euler characteristic} &  \frac{ 4 L ( 3 + 4 L ) }{(1+2L)^2} c(TB)\\\hline
				\text{Triple intersections} &6 \mathcal F= 8 L^2 (-\alpha_0^3 + 3  \alpha_0 \alpha_1^2 - 4 \alpha_1^3 + 3  \alpha_1^2 \alpha_2 -  \alpha_2^3)\\\hline
%			\end{array}\\ \\ 
%			\arraycolsep=1.7pt\def\arraystretch{1.3}
%			\begin{array}{|l|c|}\hline
%
	\multicolumn{2}{c}{		\begin{matrix}\\	\text{\textbf{SO($6$)-model}}   \end{matrix}  } \\\hline
%\multicolumn{2}{|c|}{				\text{\textbf{SO($6$)-model}}} \\\hline
				\text{Weierstrass equation}  &y^2 z + a_1 x y z= x^3 + mt x^2 z + s^2 x z^2\\\hline
			\text{Discriminant} &\Delta =s^4 ((a_1^2 + 4m s)^2 - 64 s^2 )\\\hline
				\text{Singular fibers}&\begin{array}{c}	\begin{tikzpicture}[scale=1.1]
			\node[draw,circle,thick,scale=.9](C0) at (0,0) {};
			\node[draw,circle,thick,scale=.9](C1) at (-.5,-.5) {};
			\node[draw,circle,thick,scale=.9](C2) at (.5,-.5) {};
			\node[draw,circle,thick,scale=.9](C3) at (0,-1) {};
			\draw[thick] (C0) -- (C2) -- (C3) -- (C1) -- (C0);
			\draw[big arrow] (1,-.5) -- node[above,midway]{{}} (1.6,-.5);
			\node[draw,circle,xshift=80,thick,scale=.9](C0') at (0,0) {};
			\node[draw,circle,xshift=80,thick,scale=.65,scale=.7](C13) at (0,-.5) {\LARGE 2};;
			\node[draw,circle,xshift=80,thick,scale=.9](C2) at (-.5,-.5) {};
			\node[draw,circle,xshift=80,thick,scale=.9](C3-) at (.5,-.5) {};
			\node[draw,circle,xshift=80,thick,scale=.9](C3+) at (0,-1) {};
			\draw[thick] (C0') -- (C13) -- (C3-);
			\draw[thick] (C2) -- (C13)--(C3+);
						\draw[yshift=85,scale=.45,domain=-1.2:1.2,variable=\x, thick,xshift=10] plot({\x*\x-1,\x*\x*\x-\x-5});
			\draw[yshift=15,scale=.45,domain=-1.2:1.2,variable=\x, thick,xshift=10] plot({\x*\x-1,\x*\x*\x-\x-5});
			\draw[big arrow] (.5,.6) -- (1.60,0) {};
			\draw[big arrow,yshift=-30] (.5,-.6) -- (1.60,0) {};
		\end{tikzpicture}\end{array}\\\hline
		\text{Matter representation} & \text{adjoint+vector:}\   (\mathbf{15}\oplus \mathbf{6}),\quad 
		\quad n_{\bf{15}}=1+K^2, \quad n_{\bf{6}}=2 K^2 \\\hline
				\text{Euler characteristic} & \frac{12 L }{1 + 2 L } c(TB)\\\hline
				\text{Triple intersections} &
		  \begin{array}{c}6 \mathcal F^- = 2 L^2(-4 \alpha_0^3 + 3 \alpha_0 \alpha_1^2 + 3 \alpha_0 \alpha_3^2 + 6 \alpha_0 \alpha_1 \alpha_3)\\
		+ 2 L^2 ( -4 \alpha_1^3 - 4 \alpha_2^3  - 6 \alpha_3^3 - 6 \alpha_1^2 \alpha_3+ 3 \alpha_1^2 \alpha_2 + 3 \alpha_3^2 \alpha_2 + 6 \alpha_1 \alpha_2 \alpha_3 ),\\
		6 \mathcal F^+ =6 \mathcal F^- +4 L^2 (\alpha_3 - \alpha_1)^3 \end{array}\\\hline
			\end{array} 
		\end{array}
	$\\
	\\
	\hline
	\end{tabular}
	}
	\end{center}
	\caption{Summary of results for the geometry of the SO(3), SO(5) and SO(6)-models.}
	\label{table:summary}
\end{table}
\clearpage

\newpage

\section{Preliminaries and summary of results} \label{Sec:preliminaries}

\subsection{Weierstrass models}\label{Sec:Conv}
In this section, we introduce our conventions and some basic  definitions. 
We mostly follow the presentation of \cite{G2} with some basic adaptations to the case of a simple, connected compact group $G$.

\begin{defn}[Weierstrass model]
Consider a variety $B$ endowed with a line bundle $\mathscr{L}\rightarrow B$.
A Weierstrass model $Y_0 \rightarrow B$ over $B$ is a hypersurface  cut out by the zero locus of a section of the line bundle of $\mathscr{O}(3)\otimes\pi^* \mathscr{L}^{\otimes 6}$ in the projective bundle  $\mathbb{P}(\mathscr{O}_B \oplus\mathscr{L}^{\otimes 2} \oplus\mathscr{L}^{\otimes 3})\rightarrow B$. 
We denote by $\mathscr{O}(1)$  the dual of the tautological line bundle of the projective bundle, and denote by $\mathscr{O}(n)$ ($n>0$) its $n$th-tensor product $\mathscr O(1)^{\otimes n}$.  
The relative projective coordinates of the $\mathbb{P}^2$ bundle are denoted by $[x:y:z]$. In particular,  $x$ is a section of $\mathscr{O}(1)\otimes\pi^* \mathscr{L}^{\otimes 2}$, $y$ is a section of $\mathscr{O}(1)\otimes\pi^* \mathscr{L}^{\otimes 3}$, and $z$ is a section of $\mathscr{O}(1)$. 
 Following Tate and Deligne's notation, the defining equation of a  Weierstrass model is
$$
Y_0: \quad  zy(y +a_1 x+ a_3 z) -(x^3 +a_2 x^2 z+a_4 xz^2 +a_6 z^3)=0,
$$
 where the coefficient $a_i$ ($i=1,2,3,4,6$) is a section of $\mathscr{L}^{\otimes i}$ on $B$. Such a hypersurface is an elliptic fibration since over the generic point of the base, the fiber is a nonsingular  cubic planar curve with a rational point ($x=z=0$). 
    We use the convention of  Deligne's formulaire \cite{Deligne.Formulaire} and introduce the following definitions: 
\begin{equation}\label{Eq:Formulaire}
\begin{aligned}
b_2 &= a_1^2+ 4 a_2,\quad
b_4 = a_1 a_3 + 2 a_4 ,\quad
b_6  = a_3^2 + 4 a_6 , \quad
b_8  =b_2 a_6 -a_1 a_3 a_4 + a_2 a_3^2-a_4^2,\\
c_4 & = b_2^2 -24 b_4, \quad
c_6 = -b_2^3+ 36 b_2 b_4 -216 b_6,\\
\Delta &= -b_2^2 b_8 -8 b_4^3 -27 b_6^2 + 9 b_2 b_4 b_6,\quad  j =\frac{c_4^3}{\Delta}.
\end{aligned}
\end{equation}
The above quantities are subject to the  relations
$$
4 b_8 =b_2 b_6 -b_4^2,~~ 1728 \Delta=c_4^3 -c_6^2. 
$$ 
\end{defn}

 The discriminant locus is the subvariety of $B$ cut out by the equation $\Delta=0$, and is the locus of points  $p$ of the base $B$ such that the fiber over $p$ (i.e. $Y_{0}|_p$)  is singular. 
 Over a generic point of $\Delta$, the fiber is a nodal cubic that  degenerates to a cuspidal  cubic over the codimension two locus $c_4=c_6=0$. Up to isomorphism, the
 $j$-invariant $j=c_4^3/ \Delta$ uniquely characterizes nonsingular elliptic curves.

\subsection{Intersection theory and blowups} \label{sec:intersectionBL}
The intersection theory discussed in this paper essentially relies on  three theorems from \cite{Euler}, where we
use the fact that each crepant resolution  is expressed by a sequence of  blowups. 
The first theorem, due to Aluffi, expresses the Chern class of a blowup along a local complete intersection. The second theorem describes the pushforward associated to a blowup whose center is a local complete intersection. The third and final theorem  provides an explicit map of an  analytic expression in the Chow ring of a projective bundle to  the Chow ring of its base.

\begin{thm}[Aluffi, {\cite[Lemma 1.3]{Aluffi_CBU}}]
\label{Thm:AluffiCBU}
Let $Z\subset X$ be the  complete intersection  of $d$ nonsingular hypersurfaces $Z_1$, \ldots, $Z_d$ meeting transversally in $X$.  
%%%%
Let  $f: \widetilde{X}\longrightarrow X$ be the blowup of $X$ centered at $Z$. 
%%%
We denote the exceptional divisor of $f$  by $E$. The total Chern class of $\widetilde{X}$ is then:
\begin{equation}
c( T{\widetilde{X}})=(1+E) \left(\prod_{i=1}^d  \frac{1+f^* Z_i-E}{1+ f^* Z_i}\right)  f^* c(TX).
\end{equation}
\end{thm}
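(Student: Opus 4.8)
The plan is to deduce the formula from Whitney's multiplicativity of the total Chern class, applied to a short chain of exact sequences, after realizing the blowup concretely. First I would read off the structure of the target expression. Setting $\mathcal N := \bigoplus_{i=1}^d \mathcal O_X(Z_i)$, a direct sum of line bundles on $X$ whose restriction to $Z$ is the normal bundle $N_{Z/X}$ (this restriction being honest precisely because transversality makes $Z$ a regular embedding), one has $\prod_i (1 + f^* Z_i) = f^* c(\mathcal N)$, $\prod_i (1 + f^* Z_i - E) = c(f^* \mathcal N \otimes \mathcal O(-E))$ since twisting each summand line bundle by $\mathcal O(-E)$ shifts its first Chern class by $-E$, and $1 + E = c(\mathcal O(E))$. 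Thus the claimed identity is exactly the Chern-class shadow of the $K$-theory relation $[T\widetilde X] + f^*[\mathcal N] + [\mathcal O] = [\mathcal O(E)] + [f^*\mathcal N \otimes \mathcal O(-E)] + f^*[TX]$ on $\widetilde X$ (the extra trivial summand accounts for the rank discrepancy and is invisible to $c$), and it suffices to produce enough exact sequences to establish this.

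The concrete input is a description of $\widetilde X$ inside the projective bundle $\pi : P := \mathbb P(\mathcal N) \to X$. The $d$ defining sections assemble into a section $s$ of $\pi^* \mathcal N$; composing with the tautological quotient map $\pi^* \mathcal N \to Q$ (where $0 \to \mathcal O_P(-1) \to \pi^* \mathcal N \to Q \to 0$ and $\mathrm{rk}\, Q = d-1$) yields a section $\bar s$ of $Q$. Away from $Z$ its zero locus is the graph of the rational map $[s_1 : \cdots : s_d]$ from $X$ to $P$, while over $Z$ it is all of $\mathbb P(N_{Z/X})$; I would check that its scheme-theoretic zero locus is precisely $\widetilde X$, with $\bar s$ a regular section, so that $N_{\widetilde X/P} = Q|_{\widetilde X}$. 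Granting this, I would combine the standard exact sequences available on $\widetilde X$: the embedding tangent sequence $0 \to T\widetilde X \to TP|_{\widetilde X} \to Q|_{\widetilde X} \to 0$; the tangent sequence $0 \to T_{P/X} \to TP \to \pi^* TX \to 0$ of the projection; the relative Euler sequence $0 \to \mathcal O_P \to \mathcal O_P(1)\otimes \pi^* \mathcal N \to T_{P/X} \to 0$; and the tautological sequence above restricted to $\widetilde X$. Using $\pi|_{\widetilde X} = f$ together with the identification $\mathcal O_P(1)|_{\widetilde X} \cong \mathcal O_{\widetilde X}(-E)$, these combine in $K$-theory into $[T\widetilde X] = [f^*\mathcal N \otimes \mathcal O(-E)] - [\mathcal O] + f^*[TX] - f^*[\mathcal N] + [\mathcal O(E)]$, and taking total Chern classes via Whitney reproduces $(1+E)\prod_i \frac{1 + f^* Z_i - E}{1 + f^* Z_i}\, f^* c(TX)$ on the nose.

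The two points that carry the real content, and which I expect to be the main obstacles, are both geometric rather than computational. The first is verifying that $\bar s$ vanishes in the expected codimension $d-1$ and cuts out exactly the blowup, so that $N_{\widetilde X/P}$ is genuinely $Q|_{\widetilde X}$ and the embedding tangent sequence is exact; this is where the hypothesis that $Z_1,\ldots,Z_d$ meet transversally is essential, as it is what guarantees the regularity of the embedding and the smoothness of $\widetilde X$. The second is pinning down the sign in $\mathcal O_P(1)|_{\widetilde X} \cong \mathcal O_{\widetilde X}(-E)$, which I would fix either by a local computation near $E$ or by matching the known relation $c_1(T\widetilde X) = f^* c_1(TX) - E$; an incorrect orientation here would flip $E \mapsto -E$ and destroy the formula. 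Once these are secured, the remaining assembly is routine bookkeeping with Whitney's formula.
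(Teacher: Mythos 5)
The paper does not prove this statement: Theorem \ref{Thm:AluffiCBU} is quoted verbatim from Aluffi's paper (his Lemma 1.3) and used as a black box, so there is no in-paper argument to compare yours against. Judged on its own terms, your proposal is correct and is essentially the standard proof of this formula: realizing $\widetilde{X}$ inside $\mathbb{P}(\mathcal{N})$ as the zero scheme of the induced section $\bar s$ of the tautological quotient $Q$ is exactly the ``ideal of linear type'' description of the blowup of a regular sequence (locally, the equations $g_i u_j - g_j u_i=0$), $\bar s$ is indeed regular of codimension $d-1$, and the identification $\mathcal{O}_P(-1)|_{\widetilde X}\cong\mathcal{O}_{\widetilde X}(E)$ follows from the fact that $\pi^*s$ factors through $\mathcal{O}_P(-1)$ on $\widetilde X$ with vanishing locus exactly $E$; chaining the four exact sequences you list through Whitney then gives the stated product on the nose. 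The only slip is in your proposed sanity check: $c_1(T\widetilde X)=f^*c_1(TX)-E$ holds only for $d=2$, the general relation being $c_1(T\widetilde X)=f^*c_1(TX)-(d-1)E$ (which your K-theory identity does reproduce correctly), so you should fix the sign by the local computation rather than by that formula as written.
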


\begin{thm}[Esole--Jefferson--Kang,  see  {\cite{Euler}}] \label{Thm:Push}
Let the nonsingular variety $Z\subset X$ be a complete intersection of $d$ nonsingular hypersurfaces $Z_1$, \ldots, $Z_d$ meeting transversally in $X$. Let $E$ be the class of the exceptional divisor of the blowup $f:\widetilde{X}\longrightarrow X$ centered at $Z$.
%%%
 Let $\widetilde{Q}(t)=\sum_a f^* Q_a t^a$ be a formal power series with $Q_a\in A_*(X)$.
 %%%%%%
 We define the associated formal power series  ${Q}(t)=\sum_a Q_a t^a$, whose coefficients pullback to the coefficients of $\widetilde{Q}(t)$. Then the pushforward $f_*\widetilde{Q}(E)$ is
\begin{equation*}
f_*  \widetilde{Q}(E) =  \sum_{\ell=1}^d {Q}(Z_\ell) M_\ell, \quad \text{where} \quad  M_\ell=\prod_{\substack{m=1\\
 m\neq \ell}}^d  \frac{Z_m}{ Z_m-Z_\ell }.
\end{equation*}
\end{thm}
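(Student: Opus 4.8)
The plan is to reduce the statement to a pushforward computation for powers of the exceptional divisor and then to recognize the answer as a Lagrange--Jacobi symmetric-function identity. First, by linearity of pushforward and the projection formula $f_*(f^*Q_a\cdot E^a)=Q_a\cdot f_*(E^a)$, one has
\[ f_*\widetilde Q(E)=\sum_a Q_a\,f_*(E^a), \]
so it suffices to prove the coefficientwise identity
\[ f_*(E^a)=\sum_{\ell=1}^d Z_\ell^{\,a}\,M_\ell\qquad(a\ge 0); \]
summing this against $Q_a$ and using $Q(Z_\ell)=\sum_a Q_aZ_\ell^a$ then recovers the theorem.

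Next I would compute $f_*(E^a)$ geometrically. Write $i\colon E\hookrightarrow\widetilde X$, $\iota\colon Z\hookrightarrow X$, and $p=f|_E\colon E\to Z$. Because $Z$ is a transverse complete intersection of the $Z_i$, the normal bundle splits as $N:=N_{Z/X}=\bigoplus_{i=1}^d\mathscr O(Z_i)|_Z$ and $E\cong\mathbb P(N)$ is a $\mathbb P^{d-1}$-bundle. The self-intersection formula $i^*E=c_1(N_{E/\widetilde X})=-\zeta$, with $\zeta=c_1(\mathscr O_{\mathbb P(N)}(1))$, gives $E^a=(-1)^{a-1}i_*(\zeta^{a-1})$ for $a\ge 1$, and hence
\[ f_*(E^a)=(-1)^{a-1}\,\iota_*\,p_*(\zeta^{a-1}),\qquad a\ge 1. \]
Applying the Segre pushforward $p_*(\zeta^{a-1})=s_{a-d}(N)$, the expansion $s(N)=\prod_i(1+Z_i)^{-1}=\sum_k(-1)^k h_k(Z_1,\dots,Z_d)$, and the projection formula $\iota_*(\iota^*\beta\cap[Z])=\beta\,Z_1\cdots Z_d$ yields
\[ f_*(E^a)=(-1)^{d+1}\,h_{a-d}(Z)\,Z_1\cdots Z_d\quad(a\ge d), \]
with $f_*(E^a)=0$ for $1\le a\le d-1$ and $f_*(E^0)=1$ since $f$ is birational.

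Then I would match this with the right-hand side. Rewriting the product as
\[ M_\ell=(-1)^{d-1}\,\frac{Z_1\cdots Z_d}{Z_\ell\prod_{m\ne\ell}(Z_\ell-Z_m)}, \]
the desired identity $f_*(E^a)=\sum_\ell Z_\ell^a M_\ell$ becomes, for $a\ge d$, exactly the Jacobi formula for the complete homogeneous symmetric functions,
\[ h_k(Z_1,\dots,Z_d)=\sum_{\ell=1}^d\frac{Z_\ell^{\,d-1+k}}{\prod_{m\ne\ell}(Z_\ell-Z_m)}, \]
applied with $k=a-d$; for $1\le a\le d-1$ both sides vanish, and the degenerate case $a=0$ is the Lagrange identity $\sum_\ell M_\ell=\sum_\ell L_\ell(0)=1$, where $L_\ell$ is the Lagrange basis with nodes $Z_\ell$. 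Comparing with the previous step completes the argument.

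The main obstacle is interpretational rather than computational. Each factor $M_\ell$ contains the differences $Z_m-Z_\ell$ in its denominator, and these are not invertible in $A_*(X)$; the formula must therefore be read as an identity of rational functions in the formal symbols $Z_1,\dots,Z_d$ whose poles cancel upon summation, the net result being the genuine polynomial class $(-1)^{d+1}h_{a-d}(Z)\,Z_1\cdots Z_d\in A_*(X)$. Establishing this cancellation is precisely the symmetric-function identity displayed above. The remaining delicacy is sign bookkeeping: the convention for $\zeta=c_1(\mathscr O_{\mathbb P(N)}(1))$, the Segre sign $s_k=(-1)^k h_k$, and the factor $(-1)^{a-1}$ from the self-intersection formula must all be tracked consistently so that the two $(-1)^{d\pm1}$ prefactors coming from the two sides agree.
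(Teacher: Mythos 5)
The paper states this theorem without proof, citing \cite{Euler}; your argument is a correct, self-contained proof that follows essentially the same route as the cited reference: reduce to the coefficientwise identity for $f_*(E^a)$ via the projection formula, compute $f_*(E^a)=(-1)^{d+1}h_{a-d}(Z)\,Z_1\cdots Z_d$ using the projective-bundle structure of $E=\mathbb{P}(N_{Z/X})$ together with Segre classes and $[Z]=Z_1\cdots Z_d$, and match this against $\sum_{\ell}Z_\ell^a M_\ell$ via the Lagrange--Jacobi identity for complete homogeneous symmetric functions. Your sign bookkeeping, the vanishing in the range $1\le a\le d-1$, the case $a=0$, and the remark that the rational expressions $M_\ell$ only make sense after the symmetric sum cancels the poles are all correct.
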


\begin{thm}[Esole--Jefferson--Kang, see  \cite{Euler}]\label{Thm:PushH}
Let $\mathscr{L}$ be a line bundle over a variety $B$ and $\pi: X_0=\mathbb{P}[\mathscr{O}_B\oplus\mathscr{L}^{\otimes 2} \oplus \mathscr{L}^{\otimes 3}]\longrightarrow B$ a projective bundle over $B$. 
%%%%
 Let $\widetilde{Q}(t)=\sum_a \pi^* Q_a t^a$ be a formal power series in  $t$ such that $Q_a\in A_*(B)$. Define the auxiliary power series $Q(t)=\sum_a Q_a t^a$. 
Then 
\begin{equation*}
\pi_* \widetilde{Q}(H)=-2\left. \frac{{Q}(H)}{H^2}\right|_{H=-2L}+3\left. \frac{{Q}(H)}{H^2}\right|_{H=-3L}  +\frac{Q(0)}{6 L^2},
\end{equation*}
 where  $L=c_1(\mathscr{L})$ and $H=c_1(\mathscr{O}_{X_0}(1))$ is the first Chern class of the dual of the tautological line bundle of  $ \pi:X_0=\mathbb{P}(\mathscr{O}_B \oplus\mathscr{L}^{\otimes 2} \oplus\mathscr{L}^{\otimes 3})\rightarrow B$.
\end{thm}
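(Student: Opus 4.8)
The plan is to reduce the statement to the standard Chow theory of the projective bundle $\pi : X_0 = \mathbb{P}(\mathscr{E}) \to B$, where $\mathscr{E} = \mathscr{O}_B \oplus \mathscr{L}^{\otimes 2} \oplus \mathscr{L}^{\otimes 3}$ is the rank-three bundle whose projectivization defines the ambient Weierstrass space. Recall that $A_*(X_0)$ is a free $A_*(B)$-module with basis $1, H, H^2$, where $H = c_1(\mathscr{O}_{X_0}(1))$. Since $\widetilde Q(H) = \sum_a \pi^*Q_a\, H^a$ and $\pi_*$ is $A_*(B)$-linear, the projection formula gives $\pi_*\widetilde Q(H) = \sum_a Q_a\, \pi_*(H^a)$; thus it suffices to compute $\pi_*(H^a)$ for every $a \ge 0$ and then to check that the claimed right-hand side, viewed as a linear functional of $Q$, reproduces these values term by term. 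Because $\pi_*(H^a)$ has codimension $a-2$ in $B$, only the finitely many $a$ with $2 \le a \le \dim B + 2$ contribute, so there is no convergence issue in passing from polynomials to the formal series $\widetilde Q$.

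First I would pin down the relation satisfied by $H$. In the sub-bundle convention for $\mathbb{P}(\mathscr{E})$ — the one consistent with the stated fact that $z,x,y$ are sections of $\mathscr{O}(1)$, $\mathscr{O}(1)\otimes\pi^*\mathscr{L}^{\otimes 2}$, $\mathscr{O}(1)\otimes\pi^*\mathscr{L}^{\otimes 3}$ — the tautological inclusion $\mathscr{O}(-1)\hookrightarrow \pi^*\mathscr{E}$ produces a nowhere-vanishing section of $\pi^*\mathscr{E}\otimes\mathscr{O}(1)$, whence $c_3(\pi^*\mathscr{E}\otimes\mathscr{O}(1)) = 0$. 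Since $\mathscr{E}$ has Chern roots $0, 2L, 3L$ with $L = c_1(\mathscr{L})$, this is exactly the Grothendieck relation
\[
P(H) := H\,(H+2L)\,(H+3L) = 0 \quad \text{in } A_*(X_0).
\]
The same geometry gives the elementary pushforwards $\pi_*(1) = \pi_*(H) = 0$ and $\pi_*(H^2) = 1$, since $\pi$ has relative dimension two and $H$ restricts to the hyperplane class on each $\mathbb{P}^2$ fiber.

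The key step is a Lagrange-interpolation (equivalently, residue) identity. Dividing with remainder, write $H^a = g_a(H)\,P(H) + r_a(H)$ with $\deg r_a \le 2$; because $P(H) = 0$ in $A_*(X_0)$ and $\pi_*$ annihilates $1$ and $H$, we get $\pi_*(H^a) = [\,\text{coefficient of } H^2 \text{ in } r_a\,]$. As $H^a - r_a$ is divisible by $P$ it vanishes at each root $\rho \in \{0,-2L,-3L\}$, so $r_a$ agrees with the degree-$\le 2$ polynomial interpolating $H\mapsto H^a$ through these three (distinct) points, and its leading coefficient is $\sum_{\rho} \rho^{a}/P'(\rho)$. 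Evaluating the derivative gives $P'(0) = 6L^2$, $P'(-2L) = -2L^2$, and $P'(-3L) = 3L^2$, so that
\[
\pi_*(H^a) = \frac{0^{\,a}}{6L^2} - \frac{(-2L)^a}{2L^2} + \frac{(-3L)^a}{3L^2},
\]
which, after multiplying by $Q_a$ and summing over $a$, is exactly the asserted formula once one recognizes $-2\,Q(H)/H^2\big|_{H=-2L} = -Q(-2L)/(2L^2)$, $3\,Q(H)/H^2\big|_{H=-3L} = Q(-3L)/(3L^2)$, and the isolated $Q(0)/(6L^2)$ as the $\rho = 0$ residue.

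The main obstacle — really the only delicate point — is the formal division by $L$. The three terms individually live in the localization $A_*(B)[L^{-1}]$ and contain spurious negative-degree contributions; these must cancel so the sum lands back in $A_*(B)$. Concretely, the reason the $\rho = 0$ root appears as the lone correction $Q(0)/(6L^2)$ rather than being folded uniformly into the other two terms is that the prefactors $1/H^2$ generate precisely the $a=0,1$ pieces, and one has to confirm the total is consistent with $\pi_*(H^0)=\pi_*(H^1)=0$, i.e. that the residues at $0,-2L,-3L$ sum to zero in the (negative) codimensions $a-2 = -2,-1$. I would verify this cancellation directly for $a=0,1$ and invoke the interpolation identity for $a\ge 2$, where the expression collapses to the honest Segre classes $s_{a-2}(\mathscr{E}) = -2(-2L)^{a-2} + 3(-3L)^{a-2}$ of $\mathscr{E}$; treating $L$ as a formal variable (or working over the localization and noting the final answer is polynomial in $L$) makes the manipulation rigorous.
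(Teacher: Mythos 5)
Your proof is correct, and it is essentially the same argument as the one in the cited reference \cite{Euler} (the present paper only quotes the result): the Grothendieck relation $H(H+2L)(H+3L)=0$, the Segre-class pushforwards $\pi_*(H^a)=s_{a-2}(\mathscr E)$, and the Lagrange/residue identity $\pi_*(H^a)=\sum_{\rho\in\{0,-2L,-3L\}}\rho^a/P'(\rho)$ with $P'(0)=6L^2$, $P'(-2L)=-2L^2$, $P'(-3L)=3L^2$, which reassembles into the three stated evaluations of $Q(H)/H^2$. Your checks that the spurious $L^{-1},L^{-2}$ terms cancel for $a=0,1$ and that the expression reduces to $-2(-2L)^{a-2}+3(-3L)^{a-2}$ for $a\ge 2$ are exactly what is needed to make the localization step rigorous.
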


\begin{notation}[Blowups]\label{Notation:Blowups}
 Let $X$ be a nonsingular variety. 
 Let $Z\subset X$ be a complete intersection defined by the transverse intersection of $r$ hypersurfaces $Z_i=V(g_i)$, where  $(g_1, \cdots, g_r)$ is a regular sequence. 
 We denote the blowup of a nonsingular variety $X$ with center  the complete intersection $Z$ by 
 $$\begin{tikzpicture}
	\node(X0) at (0,-.3){$X$};
	\node(X1) at (3,-.3){$\widetilde{X}.$};
	\draw[big arrow] (X1) -- node[above,midway]{$(g_1,\cdots ,g_{r}|e_1)$} (X0);	
	
	\end{tikzpicture}
	$$
The exceptional divisor is $E_1=V(e_1)$.	
 We abuse notation and use the same symbols for $x$, $y$, $s$, $e_i$ and their successive proper transforms. We do not write the obvious pullbacks. 
\end{notation}

\begin{defn}[Resolution of singularities]
A resolution of singularities of a variety $Y$ is a proper birational morphism $\varphi:\widetilde{Y}\longrightarrow Y$  such that  
$\widetilde{Y}$ is nonsingular
and  $\varphi$ is an isomorphism away  from the singular  locus of $Y$. In other words, $\widetilde{Y}$ is nonsingular and  if $U$ is the singular locus of $Y$, $\varphi$ maps $\varphi^{-1}(Y\setminus U)$ isomorphically  onto $Y\setminus U$.  
\end{defn}

\begin{defn}[Crepant birational map]
A  birational map $\varphi:\widetilde{Y}\to Y$ between two algebraic varieties with  $\mathbb{Q}$-Cartier canonical classes is said to be {\em crepant} if it preserves the canonical class, i.e.  
$
K_{\widetilde{Y}}=\varphi^* K_Y.
$
\end{defn}

\subsection{Hyperplane arrangement} \label{sec:hyper}
 Following a common convention in physics, we denote an  irreducible representation \textbf{R} of a Lie algebra $\mathfrak{g}$ by its dimension (in boldface.) 
The weights are denoted by $\varpi^I_j$ where the upper index I denotes the representation $\textbf{R}_I$ and the lower index $j$ denotes a particular  weight of the representation $\textbf{R}_I$. 
 Let $\phi$ be a vector of the coroot space of $\mathfrak{g}$   in the basis of the fundamental coroots.  
 Each weight $\varpi$ defines a linear form $(\phi, \varpi)$ by the natural evaluation on a coroot. We recall that fundamental coroots are dual to fundamental weights. Hence, with our choice of conventions, 
 $(\phi, \varpi)$ is the usual Euclidian scalar product.

\begin{defn}[Hyperplane arrangement {I$(\mathfrak{g},\mathbf{R})$}]
The hyperplane arrangement I$(\mathfrak{g},\mathbf{R})$ is defined inside the dual fundamental Weyl chamber of $\mathfrak{g}$ and its hyperplanes are the kernel of the weights of the representation $\mathbf{R}$ \cite{Hayashi:2014kca,EJJN1,EJJN2}.  
\end{defn}
A crepant resolution over a Weierstrass model is always a minimal model over the Weierstrass model. Distinct minimal models are connected by a sequence of flops. 
The geography of flops (or equivalently of its extended K\"ahler-cone) of a 
 $G$-model  is conjectured to be given by the chamber structure of the hyperplane arrangement I$(\mathfrak{g},\mathbf{R})$, where $\mathfrak{g}$ is the Lie algebra whose type is dual to the dual graph defined by the fibral divisors of the $G$-model and $\mathbf{R}$ is the representation whose weights are given by minus the intersection numbers of rational curves comprising singular fibers located over codimension-two points in the base \cite{IMS}. 

\subsection{$G$-models and crepant resolutions of Weierstrass models}
One of the  fundamental  insights of F-theory 
is that the geometry of  an elliptic fibration $\varphi:Y\to B$ naturally determines a triple $(\mathbf{R}, G,\mathfrak{g})$ where $\mathbf{R}$ is a representation of a reductive Lie group $G$ with a semi-simple Lie algebra $\mathfrak{g}$. Moreover, the fundamental group  of the Lie group $G$ is isomorphic to the Mordell--Weil group of the elliptic fibration.
We call such an elliptic fibration a $G$-model.  $G$-models are  used in M-theory and F-theory compactifications on elliptically fibered varieties to geometrically engineer gauge theories with gauge group $G$ and matter transforming in the representation $\mathbf{R}$ of the gauge group. Under mild assumptions, a $G$-model is  birational to a singular Weierstrass model \cite{Deligne.Formulaire,MumfordSuominen,Nakayama.OWM}. Starting from a singular Weierstrass model, we can retrieve a smooth elliptic fibration via a resolution of singularities. 
In the best situation, we can ask the resolution to be crepant. 
When crepant resolutions are not possible, we settle for a partial resolution with $\mathbb{Q}$-factorial terminal singularities. 
In both cases, the (partial)-resolution is a minimal model over the Weierstrass model in the sense of Mori's theory. In the physically relevant cases, we are mostly interested in crepant resolutions.
Starting from threefolds, minimal models are not unique but connected by a finite sequence of flops. 
The network of flops of a given $G$-model is described by a  hyperplane arrangement I($\mathfrak{g},\mathbf{R}$) defined in terms of the Lie algebra $\mathfrak{g}$ and the representation $\mathbf{R}$. 
Two chambers of a hyperplane arrangement sharing a common boundary wall of one dimension lower are said to be adjacent. 
  The adjacent graph of the chambers of the hyperplane arrangement I($\mathfrak{g},\mathbf{R}$) of a $G$-model with a triple $(\mathbf{R},G,\mathfrak{g})$
    is conjectured to be isomorphic to the network of flops of the minimal models derived by crepant resolutions of the Weierstrass model birational to the $G$-model. 
 Each of the chambers also coincides with a Coulomb phase of a 5D ${\mathcal N}=1$ supergravity theory with gauge group $G$ and hypermultiplets in the representation \textbf{R}, related to compactification of M-theory on the $G$-model assuming that the $G$-model is a Calabi-Yau threefold. 

 \subsection{Understanding $G$-models for orthogonal groups of small rank}
 Several $G$-models have been studied from the perspective of M-theory and F-theory compactifications.
The SU($2$), SU($3$), SU($4$), and SU($5$)-models are studied in \cite{ESY1,ESY2,EY}.  
Aspects of the general SU($n$)-models are discussed in \cite{ES}. The 
 G$_2$,  Spin($7$), and Spin($8$)-models are  studied  in \cite{G2}, the F$_4$-models in \cite{F4}, the E$_7$-models in \cite{E7}. 
 Semi-simple Lie groups of rank two or three have been studied recently: Spin($4$)=SU($2$)$\times$ SU($2$) and  SO($4$) in \cite{SO4}, 
 SU($2$)$\times$ G$_2$ in \cite{SU2G2},    SU($2$)$\times$ SU($3$) in \cite{SU2SU3}, and SU($2$)$\times$ SU($4$) in \cite{Esole:2017hlw}, 
The Euler characteristics of $G$-models defined by crepant resolutions of Weierstrass models resulting from Tate's algorithm have been computed recently in \cite{Euler}  for $G$ a connected and simple group for characteristic numbers of other type of ellliptic fibrations \cite{ Esole:2017hlw, Esole:2012tf}.  
The Hodge numbers for $G$-models that are Calabi-Yau threefolds are also computed in \cite{Euler}. 
For connected compact simple Lie groups, there are few cases left to study, with one subtle case being that of special orthogonal groups of small rank.

 Special orthogonal groups SO($7+2n$)  or SO($8+2n$) require an elliptic fibration with a Mordell--Weil group $\mathbb{Z}/2\mathbb{Z}$ and a discriminant locus containing an irreducible  component  $S$ such that the generic fiber over $S$ is respectively of Kodaira type  I$_{n}^{*\text{ns}}$ and  I$_{n}^{*\text{s}}$ while the generic fibers over other irreducible components of the discriminant locus are irreducible curves. 
In particular, SO($m$)-models with small rank  ($3\leq m\leq 6$) cannot be achieved with fibers of type I$_n^*$ as $n$ would have to be negative.

   The aim of this paper is to study the geometry of  SO($n$)-models in the case of $n= 3,5,6$:
$$\text{SO}(3), \quad \text{SO}(5), \quad\text{or}\quad  \text{SO}(6).$$
The SO($4$)-model  is studied in \cite{SO4}, and is  more complicated as SO($4$) is not a simple group  and thus requires a collision of singularities. Moreover, the dual graph $\widetilde{\text{A}}_1$ can be realized by several Kodaira types, which implies that there are several distinct ways to construct an SO($4$)-model \cite{SO4}. 

 The geometric engineering of SO($3$), SO($5$), and SO($6$)-models relies on the accidental isomorphisms 
$$
\mathfrak{so}_3 \cong \mathfrak{su}_2\quad 
\mathfrak{so}_5 \cong \mathfrak{sp}_4\quad 
\mathfrak{so}_5 \cong \mathfrak{su}_4. 
$$
The SO($3$), SO($5$), and SO($6$)-models  are respectively derived by imposing the existence of a $\mathbb{Z}/2\mathbb{Z}$ Mordell--Weil group in the SU($2$), Sp($4$), and SU($4$)-models. The corresponding Kodaira fibers are respectively 
$$
\text{I}_2 \  \text{or}  \   \text{III},\quad \text{I}_4^{\text{ns}},\quad \text{and} \quad \text{I}_4^{\text{s}},
$$
and their  dual graphs are respectively the affine Dynkin diagrams (see Figure \ref{fig:fibergraphs})
 $$
\widetilde{\text{A}}_{1},\quad   \widetilde{\text{C}}_2^t,\quad  \text{and}\quad   \widetilde{\text{A}}_{3}.
 $$

    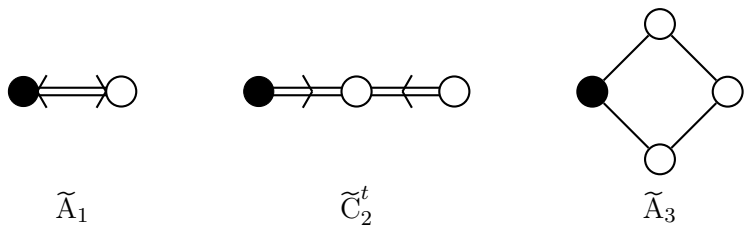
\begin{figure}[htb]
	\begin{center}
		$
			\begin{array}{cccccccc}
			\begin{array}{c}
			\begin{tikzpicture}
			% affine A1 
				\draw[fill=black]	
					(0,0) circle [radius=.2];
				\draw[thick]
					(1.3,0) circle [radius=.2] node {};
				\draw[thick] 
					(0,-.05) --++ (1.1,0)
					(0,+.05) --++ (1.1,0);                      
				\draw[thick,xshift=-9]
					(.5,0) --++ (60:.25)
					(.5,0) --++ (-60:.25);
				\draw[thick,xshift=17]
					(.5,0) --++ (120:.25)
					(.5,0) --++ (-120:.25);
				%\node at (.7,-1) {\large $\widetilde{\text{A}}_1$};
			\end{tikzpicture}
			\end{array}
			&&&
			\begin{array}{c}
			\begin{tikzpicture}
			% affine B2
				\draw[fill=black]	
					(0,0) circle [radius=.2];
				\draw[thick]
					(1.3,0) circle [radius=.2] node{};
				\draw[thick]
					(2.6,0) circle [radius=.2] node{};
				\draw[thick] 
					(0,-.05) --++ (1.1,0)
					(0,+.05) --++ (1.1,0); 
				\draw[thick]          
					(1.5,-.05) --++ (.9,0) 
					(1.5,+.05) --++ (.9,0);        
				\draw[thick,xshift=6]
					(.5,0) --++ (120:.25)
					(.5,0) --++ (-120:.25);
				\draw[thick,xshift=6]
					(1.7,0) --++ (60:.25)
					(1.7,0) --++ (-60:.25);
				%\node at (1.4,-1) {\large $\widetilde{\text{C}}^t_2$};
			\end{tikzpicture}
			\end{array}
			&&&
			% affine A3
			\begin{array}{c}
			\begin{tikzpicture}[]
				\node[draw,circle,thick,scale=1.1] (u) at (0,0){};
				\node[draw,circle,thick,fill=black,scale=1.1] (l) at (-.9,-.9){};
				\node[draw,circle,thick,scale=1.1] (d) at (0,-1.8){};
				\node[draw,circle,thick,scale=1.1] (r) at (.9,-.9){};
				\draw[thick] (u) to (l) to (d) to (r) to (u);
				%\node at (0,-2.7) {\large $\widetilde{\text{A}}_3$};
			\end{tikzpicture}
			\\ 
			\end{array}\\
			\widetilde{\text{A}}_1   & &  & \widetilde{\text{C}}^t_2 &  &&\widetilde{\text{A}}_3  & 
			\end{array}
		$
	\end{center}
	\caption{Affine Dynkin diagrams corresponding to the respective Lie groups SO(3), SO(5), and SO(6). In the above graphs, the black node represents the affine node; deleting this node produces the corresponding finite Dynkin diagrams of type A$_1 \cong $ B$_1$, C$_2 \cong $ B$_2$ and A$_3 \cong $ D$_3$.}
	\label{fig:fibergraphs}
\end{figure}

\subsection{Spin($n$)-models versus SO($n$)-models}

By applying Tate's algorithm in reverse one can construct a $G$-model for any of the following simply connected and simple Lie groups:  
$$\text{SU}(n)\ (n\geq 2),\quad \text{Sp}(2n)\ (n\geq 2), \quad \text{Spin}(n)\  (n\geq 7), \quad \text{G}_2,\quad \text{F}_4, \quad \text{E}_6, \quad \text{E}_7, \quad \text{E}_8$$ 
Spin groups of low rank are retrieved by the following  accidental isomorphisms:
$$
\begin{aligned}
\text{Spin}(3) \cong \text{SU}(2), \quad
 \text{Spin}(4) &\cong SU(2)\times \text{SU}(2),\quad
  \text{Spin}(5)\cong \text{USp}(4), \quad
   \text{Spin}(6) \cong \text{SU}(4).
\end{aligned}
$$
On the other hand, orthogonal groups are much more complicated to handle than Spin groups since they require the Mordell--Weil group to be isomorphic to  $\mathbb{Z}/2\mathbb{Z}$. 
 For $n\geq 3$, Spin($n$) is a simply-connected double-cover of SO($n$):
$$
1\longrightarrow  \mathbb{Z}/2\mathbb{Z} \longrightarrow  \text{Spin($n$)} \overset{\pi}{\longrightarrow}  \text{SO($n$)} \longrightarrow 1.
$$
It follows that an SO($n$)-model  with $n\geq 7$ is given by an elliptic fibration 
 with a  Mordell--Weil  group $\mathbb{Z}/2\mathbb{Z}$ and a  discriminant locus  which contains an  irreducible component $S$ such that the fiber over the generic point of $S$ is of  type I$_k^*$ and the fibers over the remaining generic points of $\Delta$ are irreducible curves (type I$_1$ or II).
More specifically, a fiber of type  I$_{k\geq 0}^{*\text{s}}$ with a Mordell--Weil torsion $\mathbb{Z}/2\mathbb{Z}$ gives an SO($8+2k$)-model while  
a fiber of type  I$_{k\geq 1}^{*\text{ns}}$ gives an SO($7+2k$)-model. An SO($7$)-model requires a fiber of type I$_0^{* \text{ss}}$ and Mordell--Weil torsion $\mathbb{Z}/2\mathbb{Z}$. 
These points are summarized in Table \ref{Table:SO}.

\begin{table}[htb]
\begin{center}
\begin{tabular}{ |c |c |c| }
\hline
$G$ & Generic fiber over $S$ & Mordell--Weil group \\
\hline
Spin($7$) &  I$_{0}^{*\text{ss}}$  & trivial\\
Spin($8+2k$) &  I$_k^{*\text{s}}$ & trivial\\
Spin($9+2k$) &  I$_{1+k}^{*\text{ns}}$  & trivial\\
\hline 
SO($7$) & I$_0^{*ss}$\ &  $\mathbb{Z}/2\mathbb{Z}$\\
SO($8+2k$) & I$_k^{*s}$\ &  $\mathbb{Z}/2\mathbb{Z}$\\
SO($9+2k$) & I$_{1+k}^{*\text{ns}}$ & $\mathbb{Z}/2\mathbb{Z}$  \\
\hline
\end{tabular}
\end{center}
\caption{   
Spin($n$)-models vs SO($n$)-models. 
\label{Table:SO}}
\end{table}

\subsection{Weierstrass models with Mordell--Weil group $\mathbb{Z}/2\mathbb{Z}$ }
A genus-one fibration  $\varphi:Y\to B$ is a proper surjective morphism $\varphi$ between two algebraic varieties $Y$ and $B$ such that the generic fiber  is a smooth projective curve of genus one. 
 A genus-one fibration  $\varphi:Y\to B$ is said to be an elliptic fibration if  $\varphi$  is  endowed with a  choice of rational section.  A rational section of a morphism $\varphi:Y\to B$ is morphism  $\sigma: U\to Y$ such that  $U$ a Zariski open subset  of $B$ and   $\varphi\circ \sigma$ restricts to the identity on $U$. 
The rational section turns the generic fiber into a bona fide elliptic curve: a genus-one curve with a choice of a rational point. 
Given an elliptic curve $E$ defined over a field $k$, the Mordell--Weil group of $E$ is the group of $k$-rational points of $E$.  
The Mordell--Weil theorem states that the Mordell--Weil group is an abelian group of finite rank. This is a theorem proven by Mordell in the case of an elliptic curve defined over the rational numbers $\mathbb{Q}$ and later generalized to number fields and abelian varieties by Weil. For an elliptic fibration, the Mordell--Weil group is the Mordell--Weil of its generic fiber or equivalently, the group of its rational sections. 
Throughout this paper, we work over the complex numbers. We write $V(f_1, \cdots, f_n)$ for the algebraic set  defined by the solutions of $f_1=\cdots =f_n=0$.

An important model of an  elliptic fibration is the Weierstrass model, which expresses an elliptic fibration as a certain hypersurface in a $\mathbb{P}^2$-projective bundle with the generic fibers being a planar cubic curve in the $\mathbb{P}^2$-fiber. 
Weierstrass models are very convenient as they have ready-to-use formulas which compute important data such as the  $j$-invariant and the discriminant locus. 
Tate's algorithm provides a simple procedure to identify the type of  a singular fiber by manipulating the coefficients of the  Weierstrass model  \cite{Papadopoulos,Tate.Alg}. 
Intersection theory on an elliptic fibration is also made much easier in a Weierstrass model as the ambient space is a projective bundle. 
On a Weierstrass model, the Mordell--Weil group law can be expressed geometrically by the chord-tangent algorithm.

   Weierstrass models are written in the notation of Deligne  and Tate \cite{Deligne.Formulaire} reviewed in Section  \ref{Sec:Conv}. 
   The Weierstrass model is expressed as the locus
$$
zy^2+ a_1 xy z+ a_3 xy z =x^3+ a_2 x^2z + a_4 xz^2 + a_6z^3,
$$ 
in a projective $\mathbb{P}^2$-bundle with projective coordinates $[x:y:z]$, as is discussed in more detail in Section \ref{Sec:Conv}. 
The fundamental line bundle of a Weierstrass model is denoted by $\mathscr{L}$ and the coefficient $a_i$ ($i=1,2,3,5,6$) is a section of $\mathscr{L}^{\otimes i}$. 
If the Weierstrass model is Calabi-Yau, its canonical class is trivial and  the line bundle $\mathscr{L}$ is equal to the anti-canonical line bundle of the base $B$ of the fibration. 
Since we work in characteristic zero, we can shift $y$ to eliminate $a_1$ and $a_3$ and write the Weierstrass model as follows. 
$$
zy^2 =x^3+ a_2 x^2z + a_4 xz^2 + a_6z^3.
$$
The zero section is $x=z=0$ and the involution defined by sending a rational point $[x_0:y_0:z_0]$ to its opposite with respect to the Mordell--Weil group takes the following simple form:
$$
[x_0:y_0:z_0]\mapsto [x_0:-y_0:z_0].
$$
In particular, if $a_6$ is identically zero, there is a new rational  section $x=y=0$, which is  its own inverse and it follows that the Mordell--Weil group is generically $\mathbb{Z}/2\mathbb{Z}$.
An elliptic fibration with Mordell--Weil group $\mathbb{Z}/2\mathbb{Z}$ can always be put in the form 
$$
zy^2 =x(x^2+ a_2 xz + a_4 z^2).
$$
The discriminant locus of this elliptic fibration is 
\begin{equation}\nonumber
\Delta= 16 a_4^2 (4a_4 -a_2^2).
\end{equation}
One can check that the fiber over the locus $V(a_4)$ is of Kodaira type I$_2$ while the fiber over the other component of the reduced discriminant is $I_1$. 
It follows that the generic elliptic fibration with a $\mathbb{Z}/2\mathbb{Z}$ Mordell--Weil group is an SO($3$)-model and the class of the divisor supporting the group SO($3$) is necessarily a section of the line bundle $\mathscr{L}^{\otimes 4}$.

A generic  Weierstrass model with a Mordell--Weil torsion subgroup $\mathbb{Z}/2\mathbb{Z}$ is given by the following theorem which is a direct consequence of a classic result in the study of elliptic curves in number theory (see for example \cite[\S5 of Chap 4]{Husemoller}) and was first discussed in a string theoretic setting by Aspinwall and Morrison  \cite{Aspinwall:1998xj}. 
\begin{thm}\label{Thm:WZ2}
An elliptic fibration over a smooth base $B$ and with  Mordell--Weil group $\mathbb{Z}/2\mathbb{Z}$ is birational to the  following (singular) Weierstrass model.
\begin{equation} \nonumber
zy^2= x(x^2 + a_2 x z+ a_4z^2),
\end{equation}
\end{thm}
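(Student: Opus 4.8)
The plan is to reduce to the classical normal form for an elliptic curve carrying a rational point of order two, working over the function field $K=\mathbb{C}(B)$ of the base. First I would invoke the general fact, already used in the excerpt, that any elliptic fibration $\varphi:Y\to B$ is birational to a Weierstrass model, and that in characteristic zero one may complete the square in $y$ to bring it to the form $zy^2=x^3+a_2x^2z+a_4xz^2+a_6z^3$, with zero section $x=z=0$ and fiberwise inverse $[x:y:z]\mapsto[x:-y:z]$. The hypothesis that the Mordell--Weil group is $\mathbb{Z}/2\mathbb{Z}$ means precisely, via the identification of the Mordell--Weil group with the group of rational sections, that the generic fiber $\mathcal E$, viewed as an elliptic curve over $K$, possesses a nontrivial $K$-rational point $P$ with $2P=O$.

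The key step is to translate $P$ to the origin. Since $-[x:y:z]=[x:-y:z]$, a point equals its own inverse exactly when $y=0$ (we are in characteristic zero, so $2\neq 0$); hence the nontrivial two-torsion point is of the form $P=[r:0:1]$, where $r\in K$ is a root of the cubic $g(x)=x^3+a_2x^2+a_4x+a_6$. In other words, the $K$-rationality of $P$ is equivalent to the existence of a rational root $r$ of $g$. Performing the fiberwise translation $x\mapsto x+rz$ — a birational change of the fiber coordinate over $B$ — moves this root to $x=0$ and therefore annihilates the constant term of the cubic, so that $g(x+r)=x\bigl(x^2+a_2'x+a_4'\bigr)$ with $a_2'=3r+a_2$ and $a_4'=3r^2+2a_2r+a_4$ in $K$. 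Homogenizing recovers exactly
\[
zy^2=x\bigl(x^2+a_2'xz+a_4'z^2\bigr),
\]
which is the asserted model, and renaming $a_2',a_4'$ completes the argument.

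The main thing to be careful about is the global/birational bookkeeping rather than any deep obstruction: $r$ is only a rational function on $B$ (a root of a monic cubic with coefficients in $\mathscr L^{\otimes i}$), so the translation $x\mapsto x+rz$ is defined only birationally and need not respect the integral Weierstrass structure along loci where $r$ acquires poles. This costs nothing here, since the statement asserts only birational equivalence and each transformation used — completing the square and translating by a rational section — is a birational isomorphism over $B$. As a consistency check I would verify that the resulting discriminant factors as $\Delta\propto (a_4')^2\bigl(4a_4'-(a_2')^2\bigr)$, matching the computation recorded earlier in the excerpt and confirming the expected $\mathrm{I}_2$ fiber over $V(a_4')$. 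I would also note that the converse refinement — that the displayed model has Mordell--Weil group \emph{exactly} $\mathbb{Z}/2\mathbb{Z}$ for generic $a_2,a_4$, rather than merely containing it — is a separate genericity statement not needed for the birational claim proved here.
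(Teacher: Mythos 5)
Your argument is correct and is precisely the classical normal-form argument that the paper itself invokes by citation (to Husem\"oller and to Aspinwall--Morrison) rather than proving: complete the square, observe that the nontrivial $2$-torsion section is a $\mathbb{C}(B)$-rational root $[r:0:1]$ of the cubic, and translate $x\mapsto x+rz$ to kill the constant term. Your closing caveat about $r$ being only a rational section (so the coefficients $a_2',a_4'$ are a priori rational sections of $\mathscr{L}^{\otimes 2},\mathscr{L}^{\otimes 4}$, to be cleared by the standard rescaling $(x,y)\mapsto(u^2x,u^3y)$ if one wants honest sections) is exactly the right point to flag, and it does not affect the birational statement being proved.
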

The section $x=y=0$ is the generator of the $\mathbb{Z}/2\mathbb{Z}$ Mordell--Weil group  and $x=z=0$ is the neutral element of the  Mordell--Weil group. The discriminant of this Weierstrass model is
\begin{equation}\nonumber
\Delta=16 a_4^2(a_2^2 -4 a_4).
\end{equation}
This model has a fiber of type I$_2^{\text{ns}}$ over $V(a_4)$, and fibers of type I$_1$ over $V(a_2^2-4a_4)$. 
As we will see after a crepant resolution of singularities, at the collision of these two components, namely at $V(a_2,a_4)$, we get a fiber of type III. 
The dual graph of I$_2^{\text{ns}}$ is the affine Dynkin diagram $\widetilde{\text{A}}_1$. It follows that the Lie algebra $\mathfrak{g}$ associated with this elliptic fibration is A$_1$, the Lie algebra of the simply connected group SU($2$). 
The generic Weierstrass model with a $\mathbb{Z}/2\mathbb{Z}$ Mordell--Weil group automatically gives an SO($3$)-model since we have the exact sequence 
$$
1\longrightarrow  \mathbb{Z}/2\mathbb{Z} \longrightarrow  \text{SU($2$)} \longrightarrow  \text{SO($3$)} \longrightarrow 1.
$$
The SO($3$) and the SU($2$)-models are rather different. For example, an SO($3$)-model cannot have matter charged in a spin representation of A$_1$  while an SU($2$)-model can always have matter in the representation $\mathbf{2}$ of SU($2$) \cite{ESY1}, which is a spin representation. 
Generically, the SO($3$)-model only has matter in the adjoint representation while the SU($2$)-model has matter in both the adjoint and  fundamental representations. 
Geometrically, this is due to the existence of fibers of type I$_3$ in codimension two for SU($2$)-model; by contrast, the SO($3$)-model only has fibers of type III over codimension-two points.

\subsection{Canonical forms for SO($3$), SO($4$), SO($5$), and SO($6$)-models}

The SO($3$)-model is the generic case of a Weierstrass model with Mordell--Weil group $\mathbb{Z}/2\mathbb{Z}$  as given in Theorem \ref{Thm:WZ2} and has been studied in \cite{Aspinwall:1998xj,Mayrhofer:2014opa}. 
The SO($4$)-model was constructed in \cite{SO4}. 
 To the authors' knowledge, the SO($5$) and SO($6$)-models have not been constructed explicitly before and were announced in \cite{Euler} where we computed their Euler characteristics  \cite{Euler}  and additional characteristic numbers were 
computed in \cite{Char1}.
The SO($5$)-model is obtained from the SO(3)-model  via a base change that converts the section $a_4$ to a perfect square $a_4=s^2$ of an irreducible and reduced Cartier divisor so that the fiber over $V(s)$ is of type I$_4^{\text{ns}}$. 
If instead we had $a_4=a s^2$, the group would be  semi-simple. 
The SO(6)-model is subsequently derived from the SO(5)-model by requiring that the section $a_2$ is a perfect square modulo $s$ so that the fiber type over $S=V(s)$ is of type I$_4^{\text{s}}$.

The Weierstrass models for the SO($3$), SO($4$), SO($5$), and SO($6$)-models are as follows:
 \begin{align}
\label{eqn:SO3def} \text{SO}(3)\text{-model}: \quad & zy^2= x(x^2 + a_2 x z+ a_4z^2), \\
\text{SO}(4)\text{-model}: \quad & zy^2= x(x^2 + a_2 x z+ s t z^2), \\
\label{eqn:SO5def} \text{SO}(5)\text{-model}: \quad & zy^2= x(x^2 + a_2 x z+ s^2 z^2),\\
\text{SO}(6)\text{-model}: \quad & zy^2+ a_1 y x  z= x(x^2 + m s  x z+ s^2z^2), ~~~ m\neq 0, \pm 2.
\end{align}

The SO($4$)-model is derived from the SO($3$)-model by a base change $a_4\to s t$ where $S=V(s)$ and $T=V(t)$. 
The SO($5$)-model is derived from the SO($3$)-model by a base change $a_4\to s^2$ where $S=V(a_4)$ for SO($3$) and $S=V(s)$ for SO($5$)-models. 
The SO($6$)-model is derived from the SO($5$)-model by forcing the  fiber over the generic point of $S$ to be type I$_4^{\text{s}}$ rather than type I$_4^{\text{ns}}$. 
 For SO($5$) and SO($6$), $a_2$ cannot be identically zero, otherwise the Mordell--Weil group becomes $\mathbb{Z}/2\mathbb{Z}\times \mathbb{Z}/2\mathbb{Z}$. 
 Moreover, in the SO($5$)-model, if $a_2$ is identically zero,  the fiber over the generic point $S=V(s)$ becomes I$_0^{*\text{s}}$ and the Lie algebra becomes D$_4$.
In the SO($6$)-model, we can complete the square in $y$ and end up with 
 $ zy^2= x(x^2 + a_2 x z+ s^2 z^2)$ where $a_2=a_1^2/4+ m s$. This shows that the SO($6$)-model is a limiting case of the SO($5$)-model in which $a_2$ is a perfect square modulo $s$, hence the fiber  over $S$ is of type I$_4^{\text{s}}$ instead of 
  I$_4^{\text{ns}}$.

   $\text{SO}(4)\cong \text{SO}(3) \times \text{SO}(3)$ is the product of two simple groups and therefore requires two irreducible components $S=V(s)$ and $T=V(t)$, which also implies that their product $st$ is a section of $\mathscr{L}^{\otimes 6}$ \cite{SO4}.   
For the SO($3$), SO($5$), and SO($6$)-models, the class of the divisor $S$ supporting the gauge group is completely fixed in terms of the fundamental line bundle $\mathscr{L}$ of the Weierstrass model.\footnote{ 
The same is true for $G=$SU($2n$)/($\mathbb{Z}/2\mathbb{Z}$) and $G=$Sp($4+2n$)/($\mathbb{Z}/2\mathbb{Z}$) where Sp($2m$) is the  compact connected Lie group with Lie algebra C$_m$.
In particular, it follows that in the Calabi-Yau case,  the class of the divisor $S$ depends only on the canonical class of the base. }
The divisor  $S$ is  a section of  $\mathscr{L}^{\otimes 4}$ for SO($3$)-models and of  $\mathscr{L}^{\otimes 2}$ for SO($5$) and SO($6$)-models. 
If we denote by $L=c_1(\mathscr{L})$ the first Chern class of $\mathscr{L}$, then the class of $S$ in the Chow ring of the base  is respectively $4L$ for SO($3$)-models; and $2L$ for SO($5$) and SO($6$)-models. In the Calabi-Yau case, $L=-K$ where $K$ is the canonical class of the base. 
In the case where the elliptic fibration is a Calabi-Yau threefold, 
the curve $S$ supporting the gauge group always has genus $g=1+6K^2$ for the  SO($3$)-model and $1+K^2$ for the SO($5$) and SO($6$)-model. 
Since the number of hypermultiplets transforming in the adjoint representation is given by the genus of the curve $S$, this implies that these models cannot be defined with $S$ a rational curve and hence always has matter transforming in the adjoint representation.

\subsection{Summary of results}

In physics, a $G$-model provides  a geometric engineering of  gauge theories, related to F-theory and M-theory  compactifications, yielding a gauge group $G$ and matter fields transforming in the representation $\mathbf{R}$ of $G$. 
The chambers of the hyperplane arrangement I($\mathfrak{g}, \mathbf{R})$ correspond to the Coulomb branches of  a five-dimensional $\mathcal N=1$ (eight supercharges) gauge theory with gauge group $G$ and hypermultiplets transforming in the representation \textbf{R}. 
 The flats  of the hyperplane arrangement  are the mixed Coulomb-Higgs branches of the gauge theory. The faces of the hyperplane arrangement are identified with partial resolutions of the Weierstrass model corresponding to the $G$-model.

For each of the $G$-models that we consider in this paper, we do the following: 
\begin{enumerate}%[noitemsep, nosep,label=(\alph*)]
\item We discuss their crepant resolutions and a classification of the singular  fibers of the resolved geometries. 
The SO($5$)-model is the only one that has non-Kodaira fibers, namely, a fiber of type 1-2-1, that is a contraction of a Kodaira fiber of type I$_0^*$.
There is a unique crepant resolution for the SO($3$)-model and the SO($5$)-model; and two crepant resolutions connected by a flop for the SO($6$)-model. 
 While it is known that  Spin($9+n$)-models that are crepant resolutions of Weierstrass models are usually not flat\footnote{Non-flatness here means that the fiber can have higher dimensional components.},  Spin($7$) and Spin($8$)-models can be given by flat fibrations that are crepant resolutions of Weierstrass models \cite{G2}. We show that the crepant resolution of the Weierstrass models defining   SO($3$), SO($5$),   and SO($6$)-models also give flat fibrations. 

\item The understanding of the crepant resolutions allows us to compute topological data such as the Euler characteristic of these varieties \cite{Euler} and other characteristic invariants \cite{Char1}, the triple intersection numbers of their fibral divisors, and in the Calabi-Yau threefold case, the Hodge numbers.

\item The discriminant locus has an  irreducible component $S$ over which the generic fiber is reducible while the fibers away from $S$ are irreducible. 
The generic fiber over $S$ is respectively  I$_2^{\text{ns}}$ or III for SO($3$),  I$_4^{\text{ns}}$ for SO($5$), and I$_4^{\text{s}}$ for SO($6$)-models.
 The discriminant of the SO($3$)-model contains two irreducible components, one of type  I$_2^{\text{ns}}$ and one of type I$_1$. 
The discriminant of the SO($5$) model (resp. the SO($6$)-model) contains three irreducible components meeting at the same locus, one of type  I$_4^{\text{ns}}$ (resp. I$_4^{\text{s}}$) and the two others of  type I$_1$; all three components meet at the same codimension-two locus.  

The singular fibers over the intersections of these components are responsible for the representation $\mathbf{R}$.
The representation is determined by computing intersection numbers of the components of the codimension two fibers with the fibral divisors. 
These intersection numbers are interpreted as minus the weights of a representation. The full representation is derived from few weights only by using the notion of {\em saturated set of weights}.  
The adjoint representation is always present and we determine the non-adjoint components of $\mathbf{R}$ geometrically. 
The SO($3$)-model does not have any representation other than the adjoint representation. In contrast, the SU$(2)$-model will have a fundamental representation $\mathbf{2}$. 
For  the SO($5$)-model, the collision I$_4^{\text{ns}}+\text{I}_1$ gives a non-Kodaira fiber of type 1-2-1, which is an incomplete fiber of type I$_0^{*\text{ss}}$ and corresponds to the vector representation $\mathbf{5}$.
$$
 \text{I}_4^{\text{ns}}+\text{I}_1 +\text{I}_1\to 1-2-1.
$$
For  SO($6$)-model, the collision I$_4^{\text{s}}+\text{I}_1$ gives a Kodaira fiber of type I$^{*\text{s}}_0$ whose dual graph is $\widetilde{D}_4$ and 
 corresponds to the vector representation $\mathbf{6}$. 
$$
 \text{I}_4^{\text{s}}+\text{I}_1\to \text{I}^{*\text{s}}_0.
$$
While the Weierstrass models of the SO($3$) or SO($5$)-model have unique crepant resolutions, the SO($6$)-model has two crepant resolutions connected by an Atiyah-like flop.

\item We identify the representation $\mathbf{R}$ from codimension two degenerations of the generic singular fiber using intersection theory. 
  We then  study the hyperplane arrangement I($\mathfrak{g}, \mathbf{R}$). 
  In the SO($3$)-model, the representation $\mathbf{R}$ is the adjoint, thus  I($\mathfrak{g}, \mathbf{R}$) has only one chamber that is the full dual fundamental Weyl chamber of A$_1$. 
  In the SO($5$)-model, the representation $\mathbf{R}$ is the direct sum of the adjoint and the vector representation of SO($5$), since all the weights of the vector representations are also weights of the adjoint representation,  I($\mathfrak{g}, \mathbf{R}$) has only one chamber that is again the full dual fundamental Weyl chamber of A$_1$.
  In the SO($6$)-model, the representation $\mathbf{R}$ is also the  direct sum of the adjoint and the vector representation of SO($6$). However, the vector representation of SO($6$) has two weights whose kernel intersects the interior of the dual fundamental Weyl chamber of A$_3$. It follows that we have two chambers separated by the wall orthogonal to the weight $(1,0,-1)$. 
  
 \item 
In the case where the elliptic fibration is a Calabi-Yau threefold, we study compactifications of M-theory; these compactifications can be used to geometrically engineer a 5D $\mathcal N=1$ (i.e. eight supercharges) supergravity theory with gauge group $G$ and hypermultiplets in the representation \textbf{R}. 
 This ${\cal  N}=1$ 5D supergravity theory has a Coulomb branch, and the dynamics of the theory on the Coulomb branch are controlled by a one-loop exact prepotential  $\mathcal{F}_{\text{IMS}}$. 
We  determine the number of hypermultiplets transforming in each of the irreducible components of $\mathbf{R}$ by comparing the exact prepotential 
of a 5D gauge theory with gauge group $G$ and an undetermined number of hypermultiplets transforming in the representation $\mathbf{R}$, with the triple intersection numbers of the fibral divisors of the elliptic fibration. 
 By making this comparison, we determine explicitly the multiplicity of each irreducible component $\mathbf{R}_i$ in the 5D hypermultiplet representation \textbf{R} in terms of the intersection ring of the elliptic fibration.
\item 
The elliptic fibrations studied here can also be used to define F-theory compactifications geometrically engineering 6D $(1,0)$ supergravity when the elliptic fibration is a Calabi-Yau threefold. 
Since 6D $(1,0)$ theories 
are chiral, the cancellation of anomalies is an important 
consistency condition. We determine the number of charged hypermultiplets in the 6D theory by expressing the anomaly cancellation conditions in terms of geometric quantities and verifying that the allowed matter content matches what we find in the 5D $\mathcal N=1$ theory. 
\end{enumerate}

\section{ The SO(3)-model}
\label{sec:SO3}

The SO($3$)-model is given by a crepant resolution of the generic Weierstrass model with Mordell--Weil torsion $\mathbb{Z}/2\mathbb{Z}$. 
Without loss of generality, we can put the generator of the Mordell--Weil group at the origin $x=y=0$. The Weierstrass model of an SO($3$)-model  can then always be defined by the equation
	\begin{align}
	\label{eqn:basicSO3}
	Y_0 ~:~ y^2z -x( x^2 + a_2 xz + a_4  z^2)=0.
	\end{align}
We assume that the varieties  $V(a_2)$ and $V(a_4)$ are smooth irreducible varieties  intersecting transversally. In particular, $a_4$ is not a perfect square.

	The Mordell--Weil group of the elliptic fibration $\varphi:Y_0\to B$ is isomorphic to $\mathbb{Z}/2\mathbb{Z}$ and generated by the section  $\Sigma_1: x=y=0$.  
	The neutral element of the Mordell--Weil group is the section $\Sigma_0:x=z=0$.  
	 Both elements of the Mordell--Weil group are on the   line $x=0$ tangent to the generic fiber at $x=y=0$.  

Using equations \eqref{Eq:Formulaire}, the short form of the Weierstrass equation is specified by
$$
c_4=16(a_2^2-3a_4), \quad c_6=-32 a_2 (2 a_2^2-9 a_4).
$$
 The discriminant of the elliptic fibration \eqref{eqn:basicSO3}  is then
	\begin{align}
	\label{eqn:SO3disc}
		\Delta =a_4^2 (4 a_4 - a_2^2).
	\end{align}
The following theorem describes the fiber geometry of a minimal crepant resolution of an SO(3)-model. 
\begin{thm}
The reduced discriminant of a SO(3)-model consists of two smooth divisors $V(4 a_4 -a_2^2)$ and $V(a_4)$. 
The intersection $V(a_2,a_4)$ of these two divisors is non-transverse and corresponds to the cuspidal locus  of the elliptic fibration. 
The singular Weierstrass model $Y_0$  has three types of singular fibers: 
\begin{enumerate}
\item The generic fiber over $V(a_4)$ is of  type I$_2^{\text{ns}}$ (two rational curves meeting transversally along a divisor of degree 2 such that a quadratic field extension is needed to split the divisor into two closed rational points).
\item  The generic fiber over  $V(4a_4 -a_2^2)$ is of type I$_1$ (a nodal cubic). 
\item  The generic fiber over the collision $V(a_2,a_4)$ is  of Kodaira type III (two rational curves intersection  along a double point):
\end{enumerate}

\end{thm}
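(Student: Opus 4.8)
The plan is to analyze the geometry locus by locus, starting from the factorization $\Delta = a_4^2(4a_4 - a_2^2)$ of \eqref{eqn:SO3disc}. The reduced discriminant is immediately $V(a_4) \cup V(4a_4 - a_2^2)$. Smoothness of $V(a_4)$ is a hypothesis; for $V(4a_4 - a_2^2)$ I would check the Jacobian criterion, noting that its differential $4\,\mathrm{d}a_4 - 2a_2\,\mathrm{d}a_2$ is nonvanishing along the divisor under the genericity and transversality assumptions on $V(a_2)$ and $V(a_4)$. For the intersection, restricting $4a_4 - a_2^2$ to $V(a_4)$ gives $-a_2^2$, so scheme-theoretically $V(a_4)\cap V(4a_4-a_2^2) = V(a_4, a_2^2)$ is non-reduced; equivalently, along $V(a_2,a_4)$ the differential $4\,\mathrm{d}a_4 - 2a_2\,\mathrm{d}a_2$ collapses to $4\,\mathrm{d}a_4$, which is parallel to $\mathrm{d}a_4$, so the two divisors are tangent and their intersection is non-transverse. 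That this locus is the cuspidal locus follows by solving $c_4 = c_6 = 0$ with $c_4 = 16(a_2^2 - 3a_4)$ and $c_6 = -32 a_2(2a_2^2 - 9a_4)$: the first equation forces $a_2^2 = 3a_4$, whereupon $c_6 = 96\,a_2 a_4$, and both vanish exactly on $V(a_2,a_4)$.

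Next I would locate the singularities of $Y_0$ and resolve them. A direct Jacobian computation on \eqref{eqn:basicSO3} shows that $\mathrm{Sing}(Y_0) = V(x,y,a_4)$, i.e. the restriction of the $2$-torsion section $\Sigma_1 = V(x,y)$ to $V(a_4)$; the model is smooth along the zero section and along $V(4a_4 - a_2^2)\setminus V(a_4)$. Transverse to this singular curve, $Y_0$ has an ordinary double point: taking $a_4$ as the transverse coordinate, the local equation $y^2 = a_2 x^2 + a_4 x + x^3$ has nondegenerate quadratic part $y^2 - a_2 x^2 - a_4 x$. Hence a single blowup $(x,y|e_1)$ centered on $\Sigma_1$, whose crepancy I would confirm with the blowup calculus of Section~\ref{sec:intersectionBL} (Theorem~\ref{Thm:AluffiCBU}), resolves the singularity and introduces one fibral $\mathbb{P}^1$.

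Then I would read off the fiber types. Over the generic point of $V(a_4)$ the fiber of $Y_0$ is the nodal cubic $y^2 z = x^2(x + a_2 z)$; its proper transform is the normalization $\mathbb{P}^1$, which together with the exceptional $\mathbb{P}^1$ gives two rational components, meeting at the two preimages of the node. In the chart $x = e_1,\ y = e_1 y'$ of the blowup these preimages sit at $y'^2 = a_2$, i.e. at the tangent slopes $y = \pm\sqrt{a_2}\,x$ of the node; since $a_2$ is not a square in the function field of $V(a_4)$ (genericity, $a_4$ not a perfect square), they form a single Galois orbit of degree two, giving type I$_2^{\text{ns}}$. Over the generic point of $V(4a_4 - a_2^2)$ the total space is smooth, $\Delta$ vanishes to order one and $c_4 \neq 0$, so the fiber is an irreducible nodal cubic of type I$_1$. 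Finally, over the collision $V(a_2,a_4)$ the two intersection points $y'=\pm\sqrt{a_2}$ merge as $a_2 \to 0$, so the two $\mathbb{P}^1$'s now meet at a single point with intersection multiplicity two; this is Kodaira type III, consistent with the cuspidal cubic $y^2 z = x^3$ of the singular model (type II) being resolved into two tangent rational curves.

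The main obstacle is the bookkeeping at the collision. The ``merging points'' heuristic must be upgraded to an honest computation in the explicit blowup charts, verifying that over $V(a_2,a_4)$ the resolved fiber has exactly two components meeting tangentially (intersection multiplicity precisely two, not higher), that the blowup stays flat and crepant there, and that no further blowup is required. Establishing the non-split statement and the smoothness of $V(4a_4 - a_2^2)$ also rely essentially on the genericity hypotheses and deserve care.
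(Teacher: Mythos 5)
Your proposal is correct and follows essentially the same route as the paper: the cuspidal locus is identified from $c_4=c_6=0$, the I$_2$/I$_1$ types from the orders of vanishing of $(c_4,c_6,\Delta)$ (equivalently, direct inspection of the nodal cubic), the non-split condition from $a_2$ failing to be a square modulo $a_4$, and the type III fiber from the single crepant blowup $(x,y|e_1)$ under which the two intersection points $y_1=\pm\sqrt{a_2}\,x_1$ of $C_0$ and $C_1$ collide to a tangency over $V(a_2,a_4)$. The paper likewise defers the type III verification to the explicit fiber-geometry computation in Section~\ref{Sec:FG}, so the ``bookkeeping at the collision'' you flag is resolved there exactly as you outline.
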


\begin{center}
\begin{tabular}{|c|c|}
\hline 
Locus &  Fiber  \\ 
\hline 
$V(a_4)$ &  I$_2^{\text{ns}}$ \\
\hline 
$V(a_2^2-4a_4)$ & I$_1$ \\
\hline 
$V(a_2, a_4)$ & III  \\
\hline 
\end{tabular}
\end{center}

\begin{proof}

The cuspidal locus is by definition $V(c_4,c_6)$.  	A direct computation gives 
$b_2=4a_2$, $c_4=16 (a_2^2-3 a_4)$,  $c_6=-32 a_2 (2 a_2^2-9 a_4)$, and thus it follows that the cuspidal locus is $V(a_2, a_4)$. 
Since the valuations of $(c_4,c_6, \Delta)$ over $V(a_4)$ and   $V(4 a_4 -a_2^2)$ are respectively  $(0,0,2)$ and $(0,0,1)$, it follows from  Tate's algorithm that  the generic fiber over the divisor $V(a_4)$ is Kodaira  type I$_2$ and the generic fiber over $V(4 a_4 -a_2^2)$ is Kodaira type I$_1$ (a nodal elliptic curve). 
(A fiber of type I$_2$ is characterized by two rational curves intersecting transversally at two closed points---Kodaira fibers of type I$_n$ are described in  Step 2 of Tate's algorithm.) 
In Tate's algorithm one classifies geometric fibers, and in order
to describe the geometric fiber,  it is necessary to work in the splitting field of the polynomial  $s^2-a_1 T + a_2$ whose discriminant is $b_2=a_1^2 -4 a_2$. 
If $b_2$ is a perfect square on the divisor over which we have the I$_n$ fiber, the fiber is split; otherwise, the fiber is non-split. 
In the present case, $a_1=0$ and $b_2 = -4a_2$ is not a perfect square modulo $a_4$, hence  the generic fiber over $V(a_4)$ is a non-split I$_2$ fiber, i.e. type I$_2^{\text{ns}}$. 
We also expect to see the fiber degeneates at the intersection of the two components of the discriminant locus since that intersection is supported on the cuspidal locus of the fibration. 
We prove in the next Section \ref{Sec:FG}  that the fiber over $V(a_2,a_4)$ is Kodaira type III (a type III fiber is composed of two rational curves meeting at a double point.) 
This is a result of the fact that the type I$_2$ fiber, which is located over $V(a_4)$, degenerates to a type III fiber over $V(a_2)$.
\end{proof}

\begin{cor}
If  $a_2$ is a perfect square, then the fiber over $V(a_4)$ is type  I$_2^{\text{s}}$. An SO(3)-model with an I$_2^\text{s}$ fiber over $V(a_4)$ can always be described as:
\begin{equation}
\text{SO(3)-model with I$_2^\text{s}$}: \quad Y_0: zy^2 + a_1 x y z  -x^3-a_4 x z^2=0
\end{equation}
\end{cor}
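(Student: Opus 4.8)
The plan is to handle the two assertions in turn, leaning on the split/non-split criterion already used in the proof of the theorem. Over $V(a_4)$ the generic fiber is of type I$_2$, and its two rational components are interchanged by the residual monodromy exactly when $b_2$ fails to be a perfect square in the residue field of $V(a_4)$. For the SO($3$)-Weierstrass equation $y^2z = x(x^2+a_2xz+a_4z^2)$ one has $b_2 = 4a_2$, so the split/non-split dichotomy is governed by whether $a_2$ is a square modulo $a_4$.

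The first assertion is then immediate: if $a_2$ is a perfect square, then $b_2 = 4a_2 = (2\sqrt{a_2})^2$ is a global square, hence in particular a square modulo $a_4$, and Tate's criterion forces the I$_2$ fiber over $V(a_4)$ to be split, i.e. of type I$_2^{\text{s}}$.

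For the canonical form I would exhibit the explicit change of variables. Writing $a_2 = (a_1/2)^2$ with $a_1/2$ a chosen section of $\mathscr{L}$ squaring to $a_2$, I substitute $y \mapsto y + \tfrac{a_1}{2}x$ into $zy^2 = x^3 + a_2 x^2 z + a_4 xz^2$. The substitution produces a term $\tfrac{a_1^2}{4}x^2z = a_2 x^2 z$ on the left, which cancels the $a_2 x^2 z$ on the right precisely because $a_2 = a_1^2/4$; what remains is $zy^2 + a_1 xyz = x^3 + a_4 xz^2$, the displayed equation. Reversing the substitution (completing the square in $y$) returns $a_2 = a_1^2/4$, confirming that this long form is exactly the SO($3$)-model with perfect-square $a_2$; since $a_6 = 0$ is untouched, the neutral section $x=z=0$ and the $2$-torsion section $x=y=0$ persist, so it remains an SO($3$)-model, now with $b_2 = a_1^2$ a manifest square.

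The delicate point, and the one I expect to need the most care, is reconciling the genuinely local split condition with the global shape of the canonical equation. The honest criterion only requires $a_2$ to be a square modulo $a_4$, whereas the displayed form corresponds to $a_2$ being a global perfect square; the discrepancy is a term divisible by $a_4$, which cannot in general be eliminated by a Weierstrass transformation preserving $a_6=0$ together with the two distinguished sections, since any nontrivial shift $x \mapsto x + \lambda z$ reintroduces an $a_6 z^3$ term. I would therefore read the converse as a statement about the canonical representative: once the global square part of $a_2$ is absorbed into $a_1$ by the substitution above, the model assumes the stated normal form, which is the natural presentation making the split fiber transparent.
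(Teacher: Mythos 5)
Your proof is correct and follows exactly the route the paper intends: the split/non-split dichotomy via $b_2=4a_2$ being a square modulo $a_4$ (already established in the proof of the preceding theorem), together with the shift $y\mapsto y+\tfrac{a_1}{2}x$ relating the long form with $a_6=a_3=a_2=0$ to the short form with $a_2=a_1^2/4$ a perfect square. The paper leaves this corollary unproved, so your explicit verification --- including the observation that the splitting criterion is genuinely local (a square modulo $a_4$) while the displayed normal form encodes the global perfect-square case --- is a faithful and slightly more careful rendering of the intended argument.
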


\begin{cor}
If  $a_2$ is identically zero, then the fiber over $V(a_4)$ is type III. An \textnormal{SO($3$)}-model with a type III fiber can always be described as:
\begin{equation}
\text{SO(3)-model with III}: \quad Y_0: zy^2  -x^3-a_4 x z^2=0
\end{equation}
with $j$-invariant, $j=1728$ over $V(a_4)$. 
\end{cor}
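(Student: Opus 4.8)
The plan is to reduce both the fiber-type and $j$-invariant claims to a one-line specialization of the Deligne invariants already computed for the general SO($3$)-model, and then to obtain the normal-form assertion as a short converse. Setting $a_2\equiv 0$ in \eqref{eqn:basicSO3} immediately produces the displayed equation $zy^2=x^3+a_4xz^2$, so the only real work is to identify the Kodaira type and to evaluate $j$.

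First I would substitute $a_2=0$ into $c_4=16(a_2^2-3a_4)$, $c_6=-32a_2(2a_2^2-9a_4)$ and the discriminant \eqref{eqn:SO3disc}, obtaining $c_4=-48a_4$, $c_6\equiv 0$, and $\Delta$ vanishing to order three along $V(a_4)$. The valuation triple $(\operatorname{ord}c_4,\operatorname{ord}c_6,\operatorname{ord}\Delta)=(1,\infty,3)$ at the generic point of $V(a_4)$ feeds directly into Tate's algorithm and returns Kodaira type III. The one subtlety worth spelling out is that the naive Weierstrass fiber at $a_2=a_4=0$ is the cuspidal cubic $y^2=x^3$, which superficially looks like type II; this is excluded by the valuations, since type II requires $\operatorname{ord}\Delta=2$ and $\operatorname{ord}c_6=1$, whereas here $\operatorname{ord}\Delta=3$ and $c_6$ vanishes identically. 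Conceptually this is just the statement that setting $a_2\equiv 0$ promotes the codimension-two type III fiber found at the collision $V(a_2,a_4)$ in the theorem to the generic fiber along the codimension-one locus $V(a_4)$, so the explicit resolution of Section \ref{Sec:FG} carries over. For the $j$-invariant, $c_6\equiv 0$ gives $1728\,\Delta=c_4^3-c_6^2=c_4^3$, hence $j=c_4^3/\Delta=1728$ wherever $a_4\neq 0$, matching the standard value for type III.

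It remains to justify ``can always be described as.'' Here I would run the converse: a type III fiber over $V(a_4)$ has $j=1728$, so $c_6\equiv 0\pmod{a_4}$; since $c_6\equiv -64a_2^3\pmod{a_4}$ and $V(a_4)$ is smooth and irreducible (so $(a_4)$ is prime), this forces $a_2\equiv 0\pmod{a_4}$. Writing $a_2=a_4\beta$, the remaining term $a_4\beta\,x^2z$ vanishes on $V(a_4)$ and hence does not affect the generic fiber there, so up to such terms the model coincides with the canonical form $zy^2=x^3+a_4xz^2$. I expect this to be the only genuine obstacle: the three computational steps are routine, but making ``always'' precise requires care, because the leftover $a_2=a_4\beta$ term cannot be eliminated by a shift of $x$ without destroying the Mordell--Weil section $x=y=0$ that defines the SO($3$)-model, so one must argue its irrelevance for the fiber over $V(a_4)$ rather than remove it outright.
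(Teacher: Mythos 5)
Your proposal is correct and follows the same route the paper uses for the preceding theorem (specializing Deligne's invariants $c_4$, $c_6$, $\Delta$ and reading off the valuations $(1,\infty,3)$ along $V(a_4)$ via Tate's algorithm, with $c_6\equiv 0$ forcing $j=1728$); the paper states the corollary without a separate proof, so this is exactly the intended argument. Your converse step is a genuine addition and is sound; note only that since $a_2$ is a section of $\mathscr{L}^{\otimes 2}$ and $a_4$ of $\mathscr{L}^{\otimes 4}$, the divisibility $a_4\mid a_2$ forces $a_2\equiv 0$ outright whenever $H^0(B,\mathscr{L}^{\otimes -2})=0$ (as in all the cases of interest here), which lets you land literally on the stated normal form without having to argue that the leftover term $a_4\beta\,x^2z$ is harmless.
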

\subsection{Singularities of the Weierstrass model}

A Weierstrass model can only be singular away from its zero section $x=z=0$.  For that reason, we discuss the singularities in the open patch $z\neq 0$. 
The Weierstrass equation \eqref{eqn:basicSO3} is then a double cover of the base $B$  branched along  $y=x(x^2+a_2 x + a_4)=0$. 
The branch locus consists of two irreducible divisors  $V(y,x)$ and $V(y, x^2+a_2 x+ a_4)$ meeting transversally in codimension two along  $V(y,x,a_4)$.
We notice that the divisor $V(y,x)$ is the generator of the Mordell--Weil group on each smooth curve. 
Since the singularities of a double cover are those of its branch locus, it is immediate to see that the singular  scheme 
 is supported on the locus 
	\begin{align}
		\text{Sing}(Y_0) : \quad  V(x , y, a_4 ).
	\end{align}

\subsection{Crepant resolution}
\label{SO3res}
The minimal crepant resolution of the SO(3)-model requires only a single blowup. 
We can take the center of this blowup to be either of the ideals $(x,y)$ or $(x,y,a_4)$. 
We denote by $X_0=\mathbb{P}[\mathscr{O}_B\oplus \mathscr{L}^{\otimes 2}\oplus\mathscr{L}^{\otimes 3}]$ the original ambient space, which is the $\mathbb{P}^2$ bundle. 
We obtain a crepant resolution by blowing up along the ideal  $(x,y)$.  We denote by $X_1$ the blowup of $X_0$ along the ideal $(x,y)$. 
	The blowup can be implemented explicitly by the  substitution $(x,y) = (x_1 e_1, y_1 e_1)$ where $V(e_1)=E_1$ is the exceptional divisor, which is a $\mathbb P^1$ bundle with fiber  parametrized by $[x_1:y_1]$.
	The projective coordinates describing $X_1$ are 
	\begin{equation}
[ x=e_1 x_1: y=e_1 y_1: z], [ x_1:y_1].
\end{equation}	
The fibers of $X_1\longrightarrow B$ are Hirzerbuch surfaces of degree one. 
\begin{thm}
The blowup 
\begin{align}
		\begin{tikzpicture}
				\node(X1) at (0,0) {$f:\quad X_0$};
				\node(X0) at (3.5,0) {$X_1$} ;
				\draw[big arrow] (X0) -- node[above,midway]{$(x,  y | e_1)$} (X1);
			\end{tikzpicture}
	\end{align}
	provides a crepant resolution of the  SO(3)-model given by the Weierstrass equation  \eqref{eqn:basicSO3}.

The proper transform of the Weierstrass model $Y_0$ under the blowup is   the elliptic fibration $\varphi_1: Y\to B$ given as the following hypersurface in $X_1$: 
	\begin{align}
	\label{eqn:SO3E1}
		Y ~:~ e_1 y^2_1z - x_1 ( e_1^2 x^2_1 +  e_1 a_2x_1z+ a_4z^2 )=0. 
	\end{align}

	\end{thm}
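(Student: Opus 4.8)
The plan is to verify the two assertions of the statement separately: that the displayed hypersurface \eqref{eqn:SO3E1} really is the proper transform of $Y_0$, and that the induced morphism $\varphi_1\colon Y\to Y_0$ is both a resolution and crepant. Since the center $Z=V(x,y)$ is the transverse intersection of the two smooth hypersurfaces $V(x)$ and $V(y)$, it is a smooth codimension-two complete intersection containing the singular locus $\mathrm{Sing}(Y_0)=V(x,y,a_4)$; this is exactly the setting of Theorem~\ref{Thm:AluffiCBU}, and the standard discrepancy formula for a smooth-center blowup is available.

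For the proper transform I would substitute $(x,y)=(e_1 x_1, e_1 y_1)$ into \eqref{eqn:basicSO3} and read off the power of $e_1$ acquired by each monomial: the terms $y^2 z$ and $a_2 x^2 z$ pick up $e_1^2$, the term $x^3$ picks up $e_1^3$, while the term $a_4 x z^2$ picks up only $e_1^1$. Hence the total transform factors as $e_1$ times the expression in \eqref{eqn:SO3E1}, which simultaneously identifies the proper transform and shows that $Y_0$ has multiplicity exactly one along $Z$, i.e. $f^* Y_0 = Y + E_1$.

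Crepancy then follows from a short bookkeeping of divisor classes. The blowup of a smooth variety along a smooth codimension-$c$ center satisfies $K_{X_1}=f^* K_{X_0}+(c-1)E_1$, which here ($c=2$) reads $K_{X_1}=f^* K_{X_0}+E_1$; combining this with $f^* Y_0 = Y + E_1$ gives $K_{X_1}+Y = f^*(K_{X_0}+Y_0)$. Restricting to $Y$ and applying adjunction on both sides yields $K_Y=\varphi_1^* K_{Y_0}$. The essential point is the numerical coincidence that the discrepancy $c-1=1$ of the ambient blowup cancels exactly against the multiplicity $m=1$ of $Y_0$ along the center.

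The substantive step, and the one I expect to be the main obstacle, is checking that $Y$ is nonsingular. Away from $E_1$ the map $f$ is an isomorphism and $Y_0$ is smooth there, since its singular locus lies inside the center, so only smoothness along $E_1$—equivalently over $V(a_4)$—remains. I would work in the two affine charts $x_1=1$ and $y_1=1$ of the exceptional $\mathbb{P}^1$ (with $z=1$), where the defining equations become $e_1 y_1^2-(e_1^2+e_1 a_2+a_4)=0$ and $e_1-x_1(e_1^2 x_1^2+e_1 a_2 x_1+a_4)=0$ respectively, and compute the full Jacobian including derivatives along the base. The smoothness hypothesis on $V(a_4)$ enters precisely here: it guarantees $d a_4\neq 0$, which rescues the Jacobian at the finitely many points of $E_1\cap Y$ where the fibral derivatives vanish, including those over the collision $V(a_2,a_4)$ where the fiber degenerates to Kodaira type III. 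Confirming that the Jacobian has maximal rank at every point of $E_1\cap Y$ is the heart of the argument.
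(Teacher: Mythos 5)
Your proposal is correct and follows exactly the strategy the paper uses (the paper actually omits the proof of this particular theorem, but spells out the same argument for the analogous SO($5$) resolution in Theorem~\ref{thm:res}): substitute $(x,y)=(e_1x_1,e_1y_1)$, factor out one power of $e_1$ to get the proper transform and conclude crepancy from the codimension-two center, and verify smoothness by the Jacobian criterion using that $V(a_4)$ is smooth. Your write-up is in fact more complete than the paper's, since it makes explicit the discrepancy bookkeeping $K_{X_1}=f^*K_{X_0}+E_1$, $f^*Y_0=Y+E_1$ that underlies the phrase ``factoring out powers of $e_1$ ensures the blowup is crepant.''
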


The section $\Sigma_0=V(x,z)=V(x_1,z)$. The generating section of the Mordell--Weil group is  $\Sigma_1=V(e_1,x_1)$.
\subsection{Fiber geometry}\label{Sec:FG}
\begin{thm}
The fiber I$_2^{\text{ns}}$ over $V(a_4)$ is composed of the following two curves: 
	\begin{subequations}
	\begin{align}
		C_0 :&\quad   a_4= y^2_1z -x^2_1(e_1 x_1 +a_2 z)=0, \quad  && [ e_1 x_1 : e_1 y_1 :z] [x_1: y_1].\\
		C_1 :&\quad  a_4=e_1=0,\quad && [0: 0:z][x_1:y_1].
	\end{align}
	\end{subequations}
	\end{thm}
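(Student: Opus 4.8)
The plan is to compute the fiber directly in the resolved coordinates supplied by equation \eqref{eqn:SO3E1}, exploiting the fact that the single blowup $(x,y\mid e_1)$ has already been shown to give a crepant resolution. First I would restrict the proper transform
\[
Y:\ e_1 y_1^2 z - x_1\left(e_1^2 x_1^2 + e_1 a_2 x_1 z + a_4 z^2\right)=0
\]
to the divisor $V(a_4)$ by setting $a_4=0$. The monomial carrying $a_4$ drops out and the two surviving terms share a common factor of $e_1$, so the restricted equation factors as
\[
e_1\left[\, y_1^2 z - x_1^2\left(e_1 x_1 + a_2 z\right)\,\right]=0 .
\]
This immediately displays the fiber over $V(a_4)$ as the union of the two loci named in the statement, $C_1=V(a_4,e_1)$ and $C_0=V\!\big(a_4,\ y_1^2 z - x_1^2(e_1 x_1+a_2 z)\big)$, so the decomposition itself requires no more than this factorization.

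Next I would check that each component is a smooth rational curve. On $C_1$ we have $e_1=0$, hence $x=e_1 x_1=0$ and $y=e_1 y_1=0$, so the point in the $\mathbb{P}^2$-fiber is frozen at $[0:0:z]=[0:0:1]$ while $[x_1:y_1]$ is free; thus $C_1$ is precisely the exceptional $\mathbb{P}^1$ of the blowup, parametrized as $[0:0:z][x_1:y_1]$. The curve $C_0$ is the proper transform of the singular Weierstrass fiber $\bar C_0:\ y^2 z = x^2(x+a_2 z)$ obtained from \eqref{eqn:basicSO3} over $V(a_4)$. Since $\bar C_0$ is a nodal cubic whose node sits exactly at the blown-up point $x=y=0$ (its lowest-order part there is $y^2-a_2 x^2$, giving two distinct tangent lines when $a_2\neq 0$), its proper transform under the blowup of that node is its normalization, which is $\mathbb{P}^1$; hence $C_0$ is a smooth rational curve as well.

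Finally I would verify that $C_0$ and $C_1$ meet in the pattern required for an I$_2^{\text{ns}}$ fiber. Intersecting the two components amounts to imposing $e_1=0$ on the defining equation of $C_0$, which leaves $z\,(y_1^2-a_2 x_1^2)=0$; since $z\neq 0$ on the exceptional locus, the intersection is cut out by $y_1^2=a_2 x_1^2$ inside the $[x_1:y_1]$-line, yielding the two points $[x_1:y_1]=[1:\pm\sqrt{a_2}]$. For generic points of $V(a_4)$, where $a_2\neq 0$, these are distinct but conjugate under the quadratic extension $\kappa(V(a_4))[\sqrt{a_2}]$, so together they form a single degree-two closed point. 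This is exactly the non-split meeting locus, consistent with the I$_2^{\text{ns}}$ assignment already read off from Tate's algorithm via the fact that $b_2=-4a_2$ is not a square modulo $a_4$. I expect the main technical point to be the chart-level bookkeeping: confirming that $C_0$ is irreducible and smooth along $V(a_4)$ and that it crosses $C_1$ transversally at these two conjugate points, thereby ruling out any further degeneration away from the collision $V(a_2,a_4)$, where the two intersection points coalesce and the fiber instead becomes type III.
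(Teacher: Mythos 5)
Your proof is correct and follows essentially the same route as the paper: restricting the proper transform \eqref{eqn:SO3E1} to $V(a_4)$ and factoring out $e_1$, identifying $C_1$ as the exceptional $\mathbb{P}^1$ and $C_0$ as the normalization of the nodal cubic, and exhibiting $C_0\cap C_1$ as the degree-two closed point $y_1^2=a_2x_1^2$, non-split because $a_2$ is not a perfect square modulo $a_4$. This matches the paper's discussion, including its equation \eqref{eqn:C0C1} and the degeneration to type III over $V(a_2,a_4)$.
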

	The curve $C_0$ is the proper transform of the original elliptic fiber and corresponds to the usual normalization of a nodal curve\footnote{	 For the generic curve $C_0$,  $z$ and $x_1$ are units.  To show that $C_0$ is a rational curve, we show that is has a rational parametrization  in the patch $z e_1 \ne 0$. 
	Then fixing $z=1$, introducing the variable $t = y_1/x_1$, and solving for $e_1 x_1$, we find a rational parametrization of $C_0$ given by $t \mapsto [e_1 x_1: e_1 y_1 : 1][1:y_1/x_1]=[t^2 -a_2:t(  {t}^2 -a_2) : 1][1:t]$. 
}.  $C_1$ is the exceptional curve coming from the blowup and is parametrized by $[x_1: y_1]$---notice that $C_1$ corresponds to the zero section of the Hirzebruch surface (i.e. $\mathbb{F}_1$) fiber of $X_1$. 
	Both $C_0$ and $C_1$ are smooth rational curves parametrized by $[x_1:y_1]$.    
When we restrict to $V(a_4)$, we see that the section $V(x_1,z)$ is on  $C_0$ while the section $V(x_1, e_1)$ is located on $C_1$. 
 The generator of the Mordell--Weil group touches the curve $C_1$ but not $C_0$ \cite{Mayrhofer:2014opa}. 
The intersection of the irreducible components $C_0$ and $C_1$ is a zero-cycle of degree two:
	\begin{align}
	\label{eqn:C0C1}
		C_0 \cap C_1 : \quad   a_4=e_1  =y^2 - a_2 x^2 =0\quad \quad [0:0:z][x:y],
	\end{align}
As we move over  $V(a_4)$, $C_0 \cap C_1$ defines a double  double cover of $V(a_4)$ branched at $V(a_4,a_2)$, which splits into two distinct divisors in the special case that $a_2$ is a perfect square.
  Over the branch locus $V(a_2,a_4)$, the intersection \eqref{eqn:C0C1} collapses to a double point (i.e. a tangent) and hence the generic type I$_2^{\text{ns}}$ fiber degenerates to a type III fiber---see Figure \ref{fig:SO3int}.

We now turn our attention to the divisors $D_0$ and $D_1$ swept out by $C_0$ and $C_1$ as they move over the base. 
Notice that $C_0$ is both smooth and birationally-equivalent to $\mathbb P^1$, while $C_1$ is clearly a $\mathbb P^1$. Hence, both $D_0$ and $D_1$ are  projective bundles of the type $\mathbb P[ O_S\oplus \mathscr{L}]$ where $S=V(a_4)$:
	\begin{align}
	\label{eqn:P1bundle}
		D_0 \cong \mathbb P_{S}[ \mathscr{O}_S \oplus  \mathscr L], \quad D_1 \cong \mathbb P_{S}[ \mathscr {O}_S \oplus  \mathscr L].
	\end{align}

		\begin{figure}[htb]
\begin{center}
			$
			\begin{array}{c}
				\begin{array}{c}
				\begin{tikzpicture}[scale=.9]
					\draw[fill=black!8!] (.45,.85) ellipse (3.7cm and 1.6cm);
					\node at (2.4,.8) {\LARGE $B$};
					\node at  (.9,0) {$V(a_4)$};
					\node at  (-1.7,.5) {$V(a_2^2 - 4a_4)$};
					\node(a) at (-.2,-.2) {};
					\node(b) at (2.1,2.1) {};
					\draw[ultra thick] (a) to (b);
					\node(e) at (1,1) {};
					\node(c) at (-2,.9) {};
					\node(d) at (-.5,2){};
					\draw[dotted,thick] (e) to (1,7){};
					 \draw[ultra thick] plot [smooth,tension=1] coordinates {(d) (e) (c)};
					 \draw[dotted,thick] (d) to (-.5,4){};
					  \draw[dotted,thick] (2,2) to (2,4){};
					  \node at (2,4.8) {$ 
					  	\begin{tikzpicture}[scale=2]
							\draw[yshift=20,scale=.45,domain=-1.2:1.2,variable=\x, ultra thick] plot({\x*\x-.5,\x});
							\draw[yshift=20,scale=.45,domain=-1.2:1.2,variable=\x, ultra thick] plot({1-\x*\x-.5,\x}); 
						\end{tikzpicture}
								$};
					\node at (-.45,4.8) {$
							\begin{tikzpicture}[scale=3.1]
									\draw[ultra thick,scale=.4,domain=-1.2:1.2,variable=\x] plot({\x*\x-1,\x*\x*\x-\x-5});
							\end{tikzpicture}
								$};
						\node at (1,7.5) {$
						\begin{tikzpicture}[scale=2.9]
						\draw[ultra thick,xshift=6.5,yshift=20,scale=.45,domain=-.8:.8,variable=\x] plot({\x*\x-.5,\x});
							\draw[ultra thick,xshift=-6.5,yshift=20,scale=.45,domain=-.8:.8,variable=\x] plot({1-\x*\x-.5,\x}); 	
						\end{tikzpicture}
						$};	
						\node at (1.9,8.2) {$C_1$};
						\node at (.15,8.2) {$C_0$};
						\node at (2.7,6.25) {$C_1$};
						\node at (1.3,6.25) {$C_0$};
				\end{tikzpicture}
			\end{array}\\\\
			\begin{array}{|l|c|}
			\hline
				\text{Weierstrass model}  &y^2 z = x^3 + a_2 x^2 z + a_4 x z^2\\\hline
				\text{Discriminant} & \Delta=  a_4^2 (4 a_4 - a_2^2) \\\hline
				\text{Matter representation}  &\text{adjoint}   \\\hline
				\text{Representation multiplicity} & n_\text{\textbf{3}} = 1+ 6 K^2 \\\hline
				\text{Euler characteristic} & \frac{ 12 L}{1+ 4 L} c(TB)\\\hline
				\text{Triple intersections} &6 \mathcal F =~ 48 L^2( -  \alpha_0^3 + \alpha_0^2 \alpha_1 +  \alpha_0 \alpha_1^2 - \alpha_1^3)
				\\\hline
			\end{array}
			\end{array}
			$
			\end{center}
			\caption{Summary of geometry for the resolution of the SO($3$)-model given by (\ref{eqn:SO3E1}). Note that the K\"ahler cone only consists of a single chamber.}
			\label{fig:SO3int}
		\end{figure}

\subsection{Euler characteristic and intersection products}

In this section, we compute the generating function for the Euler characteristic along with the triple intersections $D_i D_j D_k$. 
The Euler characteristic is the degree of the zero component of the  (homological) total Chern class. The result is given by the following theorem. 
\begin{thm}[{See \cite{Euler}}]

	The generating function for the Euler characteristic of an SO($3$)-model is
	\begin{align}
\varphi_*c(Y) =\frac{ 12 L t}{1+ 4 L t} c_t(TB),
	\end{align}
	where $c_t(TB)=\sum c_i (TB) t^i$ is the  Chern polynomial of the base $B$.
	\end{thm}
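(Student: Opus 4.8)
The plan is to transport the total homological Chern class of $Y$ down to the base along the tower
\[
Y \stackrel{\iota}{\hookrightarrow} X_1 \stackrel{f}{\longrightarrow} X_0 \stackrel{\pi}{\longrightarrow} B,
\]
invoking in turn the three theorems of Section~\ref{sec:intersectionBL}. I track codimension with a formal variable $t$ (each class of codimension $k$ carries $t^k$); this is exactly what promotes $c(TB)$ to $c_t(TB)$ and renders the final answer a rational function of $Lt$, so the powers of $t$ are fixed by homogeneity and need not be carried through every intermediate line.

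First I would assemble the input classes. From the projective-bundle structure of $X_0=\mathbb{P}[\mathscr{O}_B\oplus\mathscr{L}^{\otimes 2}\oplus\mathscr{L}^{\otimes 3}]$ and the relative Euler sequence,
\[
c(TX_0)=\pi^*c(TB)\,(1+H)(1+H+2L)(1+H+3L),
\]
with $H=c_1(\mathscr{O}(1))$ and $L=c_1(\mathscr{L})$. The blowup $f$ has center the transverse complete intersection $Z=V(x)\cap V(y)$, where $[V(x)]=H+2L$ and $[V(y)]=H+3L$, with exceptional divisor $E$. The Weierstrass hypersurface has class $[Y_0]=3H+6L$, and inspection of~\eqref{eqn:SO3E1} shows that the center enters with multiplicity one, so the proper transform has class $[Y]=f^*(3H+6L)-E$ in $X_1$.

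Next I would compute $c(TX_1)$ from Theorem~\ref{Thm:AluffiCBU} with $d=2$, $Z_1=H+2L$, $Z_2=H+3L$, and then push the Chern class of the hypersurface into $X_1$ by adjunction and the projection formula, obtaining
\[
\iota_*c(TY)=\frac{[Y]}{1+[Y]}\,c(TX_1)\in A_*(X_1),
\]
which I would expand as a formal power series in $E$. Applying Theorem~\ref{Thm:Push} with $M_1=Z_2/(Z_2-Z_1)$ and $M_2=Z_1/(Z_1-Z_2)$ descends this class to $X_0$; since $Z_1-Z_2=-L$, the weights $M_\ell$ introduce simple poles in $L$. A final application of Theorem~\ref{Thm:PushH}, evaluating the resulting function of $H$ at $H=-2L$ and $H=-3L$ and adding the $H=0$ term, lands the class in $A_*(B)$.

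The main obstacle is the concluding simplification rather than any single step. Both the blowup pushforward (through $M_1,M_2$) and the projective-bundle pushforward (through the substitutions $H=-2L,-3L$ acting on the factors $1+H+2L$ and $1+H+3L$, and the $1/H^2$ weights) introduce spurious poles in $L$, and the entire content of the calculation is that these cancel, collapsing the result to the single rational function $\tfrac{12Lt}{1+4Lt}\,c_t(TB)$. I would organize the bookkeeping so that the denominator appears as $1+[S]t$ with $[S]=[V(a_4)]=4L$, the class of the divisor supporting the gauge group, which is the structure expected for an $\mathrm{I}_2$ fiber. Two limits serve as checks: at $L=0$ the fibration is isotrivial with smooth fibers, so $\chi=0$, consistent with the formula vanishing; and the leading term $12L\,c_t(TB)$ matches that of the smooth Weierstrass model $\tfrac{12Lt}{1+6Lt}\,c_t(TB)$, the two differing only through the effect of the resolution.
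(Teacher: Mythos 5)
Your proposal is correct and follows exactly the route the paper intends: the paper itself gives no proof but defers to \cite{Euler}, whose machinery is precisely the three theorems of Section \ref{sec:intersectionBL} applied to the single crepant blowup with center $V(x,y)$, and you have correctly identified all the inputs ($Z_1=H+2L$, $Z_2=H+3L$, $[Y]=f^*(3H+6L)-E$, $c(TX_0)$, adjunction, and the two pushforwards), together with valid consistency checks against the $L=0$ limit and the smooth Weierstrass case. The only caveat is that the final rational-function simplification is asserted rather than carried out, but the plan is sound and the bookkeeping you describe does collapse to $\tfrac{12Lt}{1+4Lt}\,c_t(TB)$, in agreement with the $t^2$ coefficient $12L(c_1-4L)$ tabulated in Table \ref{Table:EulerSO3}.
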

The Euler characteristic of an SO(3)-model over a base $B$ of dimension $d$ is the coefficient $t^d$ for $d=\dim B$ of the generating function. See Table \ref{Table:EulerSO3} for some examples.
We note that the generating function is the same as that of an elliptic fibration of type  E$_7$ studied in \cite{AE2}.	

\begin{table}	
	\begin{center}
	
	\begin{tabular}{|c|c|c|}
\hline
	$\dim B$ & Euler characteristic  & Calabi-Yau case\\
	\hline 
	$1$ & $12 L$  & $12 c_1$ \\
	\hline
	$2$ & $12 (c_1 - 4 L) L$ & $-36 c_1^2$ \\
	\hline 
	$3$ & $12 L (c_2 - 4 c_1 L + 16 L^2)$ & $12 c_1 (12 c_1^2 + c_2)$\\
	\hline 
	$4$ & $12 L (c_3 - 4 c_2 L + 16 c_1 L^2 - 64 L^3) $ & $12 c_1 (c_3-48 c_1^3 +-4 c_1 c_2)$ \\
	\hline 
	\end{tabular}
	\end{center}
	\caption{Euler characteristic of the SO(3)-model for bases of dimension up to $4$. 
	The $i$th Chern class of the base is  denoted  $c_i$. 
	The Calabi-Yau cases are obtained by imposing $L=c_1$. }
	\label{Table:EulerSO3}
\end{table}

We now turn our attention to the triple intersections $D_a D_b D_c$. 
\begin{thm}
Assume that the base $B$ is a smooth surface. The triple intersection numbers for the fibral divisors  $D_0$ and $D_1$ for an SO(3)-model are  
\begin{subequations}
\begin{align}
6\mathcal F &=(\alpha_0 D_0 +  \alpha_1 D_1 )^3 =48 L^2( -  \alpha_0^3 + \alpha_0^2 \alpha_1 +  \alpha_0 \alpha_1^2 - \alpha_1^3). 
	\end{align}
	\end{subequations}
	\end{thm}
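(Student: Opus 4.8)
The plan is to reduce the triple self-intersection of $\mathcal D := \alpha_0 D_0 + \alpha_1 D_1$ on the threefold $Y$ to intersection theory on the smooth ambient fourfold $X_1$, and then to evaluate it with the two pushforward theorems, Theorem~\ref{Thm:Push} and Theorem~\ref{Thm:PushH}. If I can produce classes $\widetilde D_0,\widetilde D_1\in A^1(X_1)$ restricting to $D_0,D_1$ on $Y$, then the projection formula gives $\int_Y \mathcal D^3=\int_{X_1}(\alpha_0\widetilde D_0+\alpha_1\widetilde D_1)^3\cdot[Y]$, where $[Y]=3H+6L-E_1$ is the class of the proper transform \eqref{eqn:SO3E1} (the Weierstrass class $3H+6L$ less the one power of $e_1$ factored out in the blowup). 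The right-hand side is a codimension-four class on $X_1$, and pushing it forward along $X_1\xrightarrow{f}X_0\xrightarrow{\pi}B$ to a point yields the number we want.

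The first and most delicate step is to identify the fibral classes. Because the fiber over $S=V(a_4)$ is of type $\mathrm I_2$ with both components of multiplicity one, restricting the pullback of $S$ (of class $4L$, since $a_4$ is a section of $\mathscr L^{\otimes 4}$) to $Y$ gives $D_0+D_1=4L|_Y$; this is confirmed by factoring \eqref{eqn:SO3E1} modulo $a_4$. To split the sum I use the exceptional divisor: restricting $E_1=V(e_1)$ to $Y$ and using that $z\neq 0$ on $E_1$ shows $E_1|_Y=\Sigma_1+D_1$, where $\Sigma_1=V(e_1,x_1)$ is the Mordell--Weil generator. The two section classes are then fixed by standard Weierstrass relations, namely $H|_Y=[V(z)]|_Y=3\Sigma_0$ (the cubic meets $z=0$ triply at $[0:1:0]$) and $[V(x_1)]|_Y=(H+2L-E_1)|_Y=\Sigma_0+\Sigma_1$. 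Solving yields $\Sigma_1=(\tfrac23 H+2L-E_1)|_Y$, hence $D_1=(2E_1-\tfrac23 H-2L)|_Y$ and $D_0=4L|_Y-D_1=(\tfrac23 H+6L-2E_1)|_Y$; these serve as the lifts $\widetilde D_1,\widetilde D_0$.

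With the lifts fixed, the remainder is mechanical. I would expand $(\alpha_0\widetilde D_0+\alpha_1\widetilde D_1)^3(3H+6L-E_1)$ as a polynomial in $E_1$ whose coefficients are pulled back from $X_0$, and apply Theorem~\ref{Thm:Push} with the complete-intersection center $Z_1=V(x),\,Z_2=V(y)$ of classes $H+2L$ and $H+3L$, so that $M_1=(H+3L)/L$ and $M_2=-(H+2L)/L$; the divided-difference form of the formula guarantees the $L$ in the denominators cancels and the output is an honest polynomial in $H,L$ on $X_0$. Feeding this polynomial into the projective-bundle pushforward of Theorem~\ref{Thm:PushH} (evaluation at $H=-2L$ and $H=-3L$, plus the $Q(0)/6L^2$ term) and reading off the resulting number produces $6\mathcal F=48L^2\bigl(-\alpha_0^3+\alpha_0^2\alpha_1+\alpha_0\alpha_1^2-\alpha_1^3\bigr)$.

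I expect the genuine difficulty to lie entirely in the bookkeeping of the first step, not in the two pushforwards: one must get $D_0,D_1$ exactly right, and in particular carry the multiplicity-three appearance of the zero section in $H|_Y$, which is precisely the input that converts a naive guess into the correct coefficients. Two independent checks guard against error. First, the target polynomial factors as $-(\alpha_0-\alpha_1)^2(\alpha_0+\alpha_1)$, a structural constraint any correct computation must reproduce. Second, the pure term $\int_Y D_1^3$ can be recomputed directly on the ruled surface $D_1\cong\mathbb P_S[\mathscr O_S\oplus\mathscr L]$ of \eqref{eqn:P1bundle}: the coefficient $-2$ of the fiber class in $D_1|_{D_1}$ is forced by the affine $\mathrm A_1$ Cartan pairing $D_1\cdot C_1=-2$, and combining this with $\int_S L=S\cdot L=4L^2$ and $\xi^2=\xi\,\pi_S^*L$ on $D_1$ returns $\int_Y D_1^3=16L^2-64L^2=-48L^2$, matching the $\alpha_1^3$ coefficient.
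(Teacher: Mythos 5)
Your proposal is correct and follows essentially the same route as the paper: you derive $D_1=2E_1-\tfrac23 H-2L$ and $D_0=-2E_1+\tfrac23 H+6L$ from the very same linear relations ($H|_Y=3\Sigma_0$, $(x_1)=\Sigma_0+\Sigma_1$, $E_1|_Y=\Sigma_1+D_1$, $D_0+D_1=4L$) and then evaluate the triple product by the two pushforwards of Theorems~\ref{Thm:Push} and~\ref{Thm:PushH}, exactly as in the paper's (terser) proof. The only additions are your consistency checks --- the factorization $-(\alpha_0-\alpha_1)^2(\alpha_0+\alpha_1)$ and the independent ruled-surface computation of $D_1^3=8(1-g)=-48L^2$ --- which agree with the remark the paper makes after the 5D matter count.
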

\begin{proof}
The first step in computing intersection products is to  identify the classes of the divisors $D_0$ and $D_1$. 
We exploit the linear relations relating $D_0$ and $D_1$ with the sections $\Sigma_0$ and $\Sigma_1$ coming from Mordell--Weil group:
 \begin{equation}
 \Sigma_0 = V(z,x_1), \quad \Sigma_1=V(e_1, x_1).
 \end{equation}
  $\Sigma_0$ is the divisor corresponding to the zero section $V(x_1,z)$ and $\Sigma_1=V(e_1, x_1)$ is the divisor corresponding to the 
 generator of the Mordell--Weil group. 
	Using the equation of the elliptic fibration we find that:
\begin{align}
	(z)=3\Sigma_0,\quad  (x_1) =\Sigma_0+\Sigma_1.
\end{align}
From these linear relations we get:
\begin{align}
	\Sigma_0 = \frac{H}{3},\quad  \Sigma_1= -E_1+\frac{2}{3}H+2L,
\end{align}
where we have used $(x_1)=2L+H-E_1$ and  $(z)=H$.  
 According to the defining equation (\ref{eqn:SO3E1}) for   {$Y$},  $D_0, D_1$ must satisfy the linear relations
\begin{align}
 E_1=\Sigma_1 + D_1,\quad D_0 + D_1 = 4 L.
\end{align}
That is 
\begin{align}
D_1 =2E_1 - \frac{2}{3} H - 2L, \quad \quad \quad D_0 = -2E_1  +\frac{2}{3}H+6L.
\end{align}
We  compute the triple intersection numbers by using two pushforwards \cite{Euler}: 
\begin{equation}
D_a D_b D_c = \pi_* f_* (D_a D_b D_c). 
\end{equation}

\end{proof}

\subsection{Counting 5D matter multiplets.}

Given a $G$-model with representation \textbf{R}, knowledge of the triple intersection numbers provides a derivation of the number of matter multiplets for a 5D  supergravity theory with gauge group $G$ and hypermultiplets in the representation \textbf{R} in terms of geometric data. 
Since we know from looking at the weights of the vertical curves of the elliptic fibration that the only possible matter representations are in the adjoint representation, we compute the Intriligator-Morrison-Seiberg (IMS) potential for 
 a gauge theory with Lie algebra A$_1$ and $n_\textbf{adj}$ hypermultiplets transforming in the adjoint representation. 
The one loop quantum correction to the prepotential is\footnote{We will take $\alpha$ in the basis of fundamental weights and $\phi$ in the basis of simple coroots.} (with mass parameters set to zero): 
\begin{equation}
6\mathcal{F}_{\text{IMS}} =\frac{1}{2}\Big(\sum_ {\alpha} |(\alpha, \phi)|^3 -n_\textbf{adj}\sum_{\alpha  } |(\alpha,\phi)|^3\Big)=8(1-n_\textbf{adj})\phi^3,
\end{equation}
where $\phi>0$ is in the the dual of the fundamental Weyl chamber and $\alpha$ are the weights of the adjoint representation of $A_1$. 
This is supposed to match the computation of the triple intersection number $D_1^3=-48 K^2$. 
By a direct comparison, we get: 
\begin{equation}
n_\textbf{adj}=1+6  K^2. 
\end{equation}
In particular, the number $n_{\textbf{adj}}$ is never zero as $K^2$ is an integral number for any smooth surface $B$. 
Assuming the Calabi-Yau condition, the curve $V(a_4)$ has Euler characteristic $\chi$ and genus $g$ with 
\begin{equation}
\chi=(-K+4K) (-4K)=-12 K^2, \quad 
g=1-\frac{1}{2}\chi= 1+ 6 K^2.
\end{equation}
We can then express the number of adjoints in terms of the genus of the curve $V(a_4)$:
\begin{equation}
n_\textbf{adj}=g.
\end{equation}

\begin{rem}
We can derive the above result geometrically as well. 
In a Calabi-Yau, for a $\mathbb{P}^1$ fibration over a curve of genus $g$, $D^3=K_D^2\cap [D]=8(1-g)$. 
For the second equality, see \cite{Hartshorne}[{Chap. V,  Corollary 2.11}]. 
Assuming that this is equal to the coefficient of $6\mathcal{F}$, we have $n_\textbf{adj}=g$.  
\end{rem} 

\section{The SO(5)-model} 
\label{sec:SO5}
\begin{defn}
	The SO(5)-model is a Weierstrass model with defining equation
	\begin{align}\label{SO5eqn}
		Y_0 ~:~ y^2z  =( x^3 + a_2 x^2z + s^2 xz^2).
	\end{align}
	\end{defn}
	\noindent{\bf Assumptions} : The divisors $V(a_2)$ and  $V(s)$ of $B$ are generic. In particular, they are smooth, not  proportional to each other nor is $a_2$ a perfect square modulo $s$. 
	
	\begin{rem}
The SO(5)-model is  is obtained from the SO(3)-model by way of the substitution 
$a_4 \longrightarrow s^2$ where $s$ is a section of $\mathscr{L}^{\otimes 2}$. 
This specialization does not  alter the Mordell--Weil group, which is still $\mathbb Z/2\mathbb Z$ and generated by the point $x=y=0$ of the generic fiber. 
\end{rem}

\begin{lem}
The discriminant and the cuspidal locus of  the elliptic fibration \eqref{SO5eqn} are respectively 
	\begin{align}
	\label{eqn:SO5disc}
		\Delta &  
		=  s^4 (2s + a_2)(2s-a_2), \quad 
V(c_4,c_6)= V(s,a_2).
	\end{align}
	\end{lem}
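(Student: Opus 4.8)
The plan is to reduce everything to a single application of Deligne's formulaire \eqref{Eq:Formulaire}, exactly as was done for the SO($3$)-model, and then read off the two claims. The SO($5$) Weierstrass equation \eqref{SO5eqn} is the Tate form with $a_1 = a_3 = a_6 = 0$, generic $a_2$, and $a_4 = s^2$. First I would substitute these into \eqref{Eq:Formulaire} to obtain $b_2 = 4a_2$, $b_4 = 2s^2$, $b_6 = 0$, and $b_8 = -s^4$; feeding these into the expressions for $c_4$, $c_6$, and $\Delta$ then yields $c_4 = 16(a_2^2 - 3s^2)$, $c_6 = -32\,a_2(2a_2^2 - 9s^2)$, and $\Delta = -b_2^2 b_8 - 8 b_4^3 = 16\,s^4(a_2^2 - 4s^2)$. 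Up to the same nonzero constant factor ($-16$) that was tacitly absorbed when passing from the formulaire value to the stated SO($3$) discriminant \eqref{eqn:SO3disc}, this equals $s^4(4s^2 - a_2^2) = s^4(2s + a_2)(2s - a_2)$, which is the first claim.

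For the cuspidal locus I would invoke the fact, recorded in the preliminaries and used already for the SO($3$)-model, that the cuspidal locus is $V(c_4,c_6)$. Since $b_2$, $c_4$, and $c_6$ are obtained from their SO($3$) counterparts by the single substitution $a_4 \to s^2$, the cuspidal locus of the SO($5$)-model is the image under this substitution of the SO($3$) cuspidal locus $V(a_2, a_4)$, namely $V(a_2, s^2) = V(a_2, s)$, the last equality holding because $V(s^2)$ and $V(s)$ have the same support. To keep the argument self-contained I would also verify the two inclusions directly. The inclusion $V(s, a_2) \subseteq V(c_4, c_6)$ is immediate since $c_4, c_6 \in (a_2, s)$, so both vanish wherever $a_2 = s = 0$. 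For the reverse inclusion I would impose $c_4 = 0$, i.e. $a_2^2 = 3s^2$, and substitute into $c_6$ to get $c_6 = 96\,a_2 s^2$; then $c_6 = 0$ forces $a_2 s^2 = 0$, hence $a_2 = 0$ or $s = 0$, and either possibility together with $a_2^2 = 3s^2$ forces $a_2 = s = 0$.

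The computation is routine, so the only points requiring care are bookkeeping rather than genuine obstacles. The first is the discriminant normalization: the formulaire produces $\Delta = 16\,s^4(a_2^2 - 4s^2)$, and one must remember that the discriminant \emph{locus} depends only on $\Delta$ up to a nonzero scalar, so the factor $-16$ may be discarded to match the stated $s^4(2s+a_2)(2s-a_2)$. The second is that passing from $a_2 s^2 = 0$ to ``$a_2 = 0$ or $s = 0$'' uses that $B$ is locally integral, together with the standing genericity assumptions that $V(a_2)$ and $V(s)$ are smooth and not proportional and that $a_2$ is not a perfect square modulo $s$; these also guarantee that $V(s, a_2)$ is genuinely of codimension two (the expected cuspidal locus) rather than empty or of the wrong dimension. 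With these conventions in place the two displayed equalities follow at once.
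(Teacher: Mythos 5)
Your computation is correct and is exactly the paper's (implicit) argument: the lemma is obtained by the same direct application of Deligne's formulaire that the paper spells out for the SO($3$)-model, specialized via $a_4 \to s^2$, with the cuspidal locus read off from $c_4 = 16(a_2^2-3s^2)$ and $c_6 = -32a_2(2a_2^2-9s^2)$. Your explicit verification of the two inclusions for $V(c_4,c_6)=V(s,a_2)$ and your remark that the overall factor of $-16$ in $\Delta$ is immaterial for the discriminant locus are both accurate and consistent with the paper's conventions.
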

		We denote the divisor $V(s)$ of $B$ as $S$.  
	The three components of the reduced discriminant intersect pairwise transversely along the cuspidal locus $V(  {s},a_2)$. 
Tate's algorithm indicates that over the generic point of $S=V(s)$ the generic geometric fiber is type I$_{4}$, while the geometric fiber over the generic points of the two other  components of the discriminant  is a Kodaira fiber of type I$_1$ (i.e. a nodal curve). Since by assumption $a_2$ is not a perfect square modulo $s$, the generic fiber over $S$ is actually not geometrically irreducible; rather, the fiber type is I$_4^{\text{ns}}$. Such a fiber has as its dual graph the extended Dynkin diagram $\widetilde{\text{C}}_2^t$.

\subsection{Crepant resolution}

The singular locus of the elliptic fibration defined in \eqref{SO5eqn} is supported on  
	\begin{align}
		\text{Sing}(Y_0) &= V(x,y ,s). 
	\end{align}

\begin{thm}\label{thm:res}
	A crepant resolution  $f:Y\longrightarrow Y_0$ of the elliptic fibration \eqref{SO5eqn}  is given by the following sequence of  blowups along smooth centers:
	
	\begin{align}
		\begin{array}{c}
			\begin{tikzpicture}
				\node(E0) at (0,0) {$X_0=\mathbb{P}[\mathscr{O}_B\oplus \mathscr{L}^{\otimes 2}\oplus \mathscr{L}^{\otimes 3}]$};
				\node(E1) at (6,0) {$X_1$};
				\node(E2) at (10,0) {$X_2$};
				\draw[big arrow] (E1) -- node[above,midway]{ $ (x,y,s|e_1)$} (E0);
				\draw[big arrow] (E2) -- node[above,midway]{ $(x,y,e_1|e_2)$} (E1);
			\end{tikzpicture}
		\end{array}
		\end{align}
		The relative projective coordinates of $X_0$, $X_1$, and $X_2$ are parametrized as follows:
\begin{equation}
[e_2^2 e_1 x: e_2^2 e_1 y :z] ,[ e_2  x:e_2 y:s] ,[x:y:e_1].
\end{equation}

		The proper transform of the elliptic fibration \eqref{SO5eqn}  is 
		\begin{align}
		\label{resSO5}
			Y : \quad y^2z - e_2^2 e_1 x^3 -a_2 x^2z - e_1 s^2 xz^2=0.
		\end{align}

\end{thm}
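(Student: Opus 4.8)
The plan is to prove the theorem in three stages: first carry out the two blowups explicitly and confirm that the stated proper transform is correct; second verify that each blowup is crepant; and third verify that the total space $Y\subset X_2$ is nonsingular. Throughout I work in the patch $z\neq 0$, which is justified because a Weierstrass model can only be singular away from its zero section $x=z=0$.

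Following Notation \ref{Notation:Blowups}, I would realize the first blowup $(x,y,s\,|\,e_1)$ by the substitution $(x,y,s)\mapsto(e_1 x,\, e_1 y,\, e_1 s)$. Applied to the defining equation \eqref{SO5eqn}, this produces a total transform divisible by exactly $e_1^2$; dividing out that factor yields the proper transform in $X_1$,
\[
y^2 z - e_1 x^3 - a_2 x^2 z - e_1 s^2 x z^2 = 0 .
\]
Repeating with the second blowup $(x,y,e_1\,|\,e_2)$, realized by $(x,y,e_1)\mapsto(e_2 x,\, e_2 y,\, e_2 e_1)$, gives a total transform divisible by exactly $e_2^2$, and dividing out that factor reproduces \eqref{resSO5}. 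The three sets of projective coordinates in the statement are then read off by composing the two substitutions. Before proceeding I would confirm that each center is a smooth complete intersection—that $(x,y,s)$ and, after the first blowup, $(x,y,e_1)$ are regular sequences in the respective ambient spaces—so that the blowups are the standard ones to which the discrepancy formula and the Chern-class machinery of Theorem \ref{Thm:AluffiCBU} apply.

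The crepancy statement (stage two) is then immediate. Each center is cut out by $r=3$ smooth hypersurfaces meeting transversally, hence has codimension $3$, while the defining equation has multiplicity $2$ along the generic point of the center—this is precisely why its total transform is divisible by $e_i^2$ and no higher power. Since the exponent removed equals $r-1=\mathrm{codim}-1$ for both blowups, the standard discrepancy formula for the proper transform of a hypersurface under the blowup of a smooth center shows that each step preserves the canonical class, so the composite $f$ is crepant.

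Stage three, the smoothness of $Y$, is the heart of the argument and the step I expect to be the main obstacle. Because each blowup is an isomorphism away from its center, and $Y_0$ is singular only along $\mathrm{Sing}(Y_0)=V(x,y,s)$, nonsingularity of $Y$ need only be checked on the exceptional loci $V(e_1)$ and $V(e_2)$. I would cover $X_2$ by the affine charts of the iterated blowup—three charts for the first blowup depending on which of $x,y,s$ is the pivot, each refined into three by the second—and in each chart apply the Jacobian criterion to \eqref{resSO5}, showing that the simultaneous vanishing of the equation together with all its partial derivatives has no solution inside the chart. The delicate region is over the collision locus $V(s,a_2)$, where the three components of the discriminant meet and the central fiber degenerates to the non-Kodaira $1$-$2$-$1$ configuration; there the fiber components come together, and the crux is to confirm that the two blowups nevertheless separate them enough that the total space carries no residual singularity. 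Checking this carefully in the charts containing points of $V(e_1)\cap V(e_2)$ lying over $V(s,a_2)$ is where the real computational work lies.
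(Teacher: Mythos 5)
Your proposal is correct and follows essentially the same route as the paper's proof: implement both blowups by the substitutions $(x,y,s)\mapsto(e_1x,e_1y,e_1s)$ and $(x,y,e_1)\mapsto(e_2x,e_2y,e_2e_1)$, observe that exactly two powers of each exceptional coordinate factor out of a codimension-three center (so the discrepancy vanishes), and finish with the Jacobian criterion, where smoothness of $V(a_2)$ is exactly the hypothesis the paper invokes. Your write-up is in fact more explicit than the paper's about why multiplicity $2$ against codimension $3$ gives crepancy, but the argument is the same.
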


\begin{proof}

 We first blow up $X_0=\mathbb{P}[\mathscr{O}_B\oplus \mathscr{L}^{\otimes 2}\oplus \mathscr{L}^{\otimes 3}]$ along the ideal  $(x,y,s)$ and we denote the blowup space $X_1$.
   The exceptional divisor $E_1=V(e_1)$ is a  $\mathbb{P}^2$ projective bundle over $V(x,y,s)$. We abuse notation and implement the blowup by way of the substitution $(x,y,s)\mapsto (e_1 x_1,  e_1 y, e_1 s)$. 
When these substitutions are performed on the  defining equation of $Y_0$, we can factor out two powers of $e_1$ which ensures that the blowup is crepant for the elliptic fibration. 
  For the second  blowup $X_2=\mathrm{Bl}_{  {(}x,y,e_1  {)}} X_1$. The exceptional divisor  $E_2=V(e_2)$ is a $\mathbb{P}^2$ projective bundle over $V( x_1, y_1,e_1)$. 
   Once again we can factor out two powers of $e_2$ to obtain a crepant blowup. By studying the Jacobian criterion, we see that there are no singularities left if $V(a_2)$ is a smooth divisor. 
\end{proof}

\subsection{Euler characteristic and triple intersections}
	In this section, we compute the pushforward of the generating function for the Euler characteristic and triple intersection form of the SO(5)-model to $B$. 
The SO($5$)-model refers to the elliptic fibration $Y\to B$ defined by the crepant resolution given in Theorem \ref{thm:res}.

\begin{thm}
Assume that the base $B$ is a smooth surface. The triple intersection polynomial for the fibral divisors  $D_0$ and $D_1$ for a SO($5$)-model is
\begin{subequations}
\begin{align}
6\mathcal F &=(\alpha_0 D_0 +  \alpha_1 D_1+\alpha_2 D_2 )^3
=8 L^2 (-\alpha_0^3 + 3 \alpha_0 \alpha_1^2 - 4 \alpha_1^3 + 3\alpha_1^2 \alpha_2 -  \alpha_2^3).
	\end{align}
	\end{subequations}
	\end{thm}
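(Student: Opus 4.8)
The plan is to reproduce, for the three fibral divisors $D_0,D_1,D_2$, the two-step strategy that worked for the SO(3)-model: first determine the classes of $D_0,D_1,D_2$ in the Chow ring of the resolved ambient space $X_2$, and then evaluate every triple product $D_aD_bD_c$ by pushing forward through the two blowups and the projective bundle.

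First I would read off the basic divisor classes from the projective coordinates $[e_2^2e_1x:e_2^2e_1y:z]$, $[e_2x:e_2y:s]$, $[x:y:e_1]$ of $X_2$ given in Theorem \ref{thm:res}. Writing $H=c_1(\mathscr O(1))$, $L=c_1(\mathscr L)$ and $E_1,E_2$ for the two exceptional classes, the blowup substitutions together with homogeneity of the defining equation give $[z]=H$, $[x]=H+2L-E_1-E_2$, $[y]=H+3L-E_1-E_2$, $[s]=2L-E_1$, and $[Y]=3H+6L-2E_1-2E_2$ for the proper transform (matching the two powers of $e_1$ and of $e_2$ extracted in the proof of Theorem \ref{thm:res}).

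The core of the argument is identifying the fibral classes, and here I would use three inputs mirroring the SO(3) case. (i) The pullback of $S=V(s)$ splits according to the multiplicities of the affine diagram $\widetilde{\mathrm C}_2^{\,t}$ of the $\mathrm I_4^{\mathrm{ns}}$ fiber, giving the relation $\varphi^*S=D_0+2D_1+D_2$ with total class $2L$, the central (folded) node $D_1$ carrying multiplicity two. (ii) The Mordell--Weil data: the zero section $\Sigma_0=V(x,z)$ and the $\mathbb Z/2\mathbb Z$ generator furnish Shioda-type relations (the analogues of $(z)=3\Sigma_0$ and $(x_1)=\Sigma_0+\Sigma_1$) fixing $\Sigma_0=H/3$ and expressing the torsion section through $H,E_1,E_2,L$. (iii) The restrictions $\{e_1=0\}\cap Y$ and $\{e_2=0\}\cap Y$, decomposed into their fibral components after discarding the zero-section locus $\{z=0\}$, with the multiplicity-two node recognized as the geometrically non-split curve $\{y^2-a_2x^2=0\}$. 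Solving this linear system yields each $D_i$ as an explicit combination of $H,E_1,E_2,L$.

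Once the classes are fixed I would expand $(\alpha_0D_0+\alpha_1D_1+\alpha_2D_2)^3\cdot[Y]$ as a polynomial in $E_1,E_2,H$ with coefficients pulled back from $B$, and compute the numbers $D_aD_bD_c=\pi_*f_{1*}f_{2*}\big([Y]\,D_aD_bD_c\big)$ by applying Theorem \ref{Thm:Push} twice (eliminating $E_2$ and then $E_1$) and Theorem \ref{Thm:PushH} once (eliminating $H$ and descending to $B$), after which the degree on the surface $B$ produces the stated cubic with its overall factor $8L^2$. The main obstacle I anticipate is step (iii): correctly assigning the fibral classes, and in particular the multiplicity-two folded node of the non-split $\mathrm I_4^{\mathrm{ns}}$ fiber. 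Because the exceptional divisors restrict reducibly to $Y$ and the relevant component is only irreducible over a quadratic extension, one must carefully separate genuine fibral components from section contributions and track their multiplicities; any slip there distorts the asymmetric coefficients of the final answer, whereas the subsequent pushforward reduction is entirely mechanical.
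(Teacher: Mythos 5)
Your overall strategy is the same as the paper's: identify the classes of $D_0,D_1,D_2$ in the Chow ring of $X_2$ and then evaluate $(\sum_a\alpha_aD_a)^3\cdot[Y]$ by pushing forward through the two blowups and the projective bundle via Theorems \ref{Thm:Push} and \ref{Thm:PushH}. Your bookkeeping of the elementary classes ($[z]=H$, $[s]=2L-E_1$, $[Y]=3H+6L-2E_1-2E_2$, etc.) is correct, and the pushforward step is indeed mechanical. However, there is a genuine error in your input (i): you assert $\varphi^*S=D_0+2D_1+D_2$, with the folded node $D_1$ carrying multiplicity two. This is wrong. The generic fiber over $S$ is of Kodaira type I$_4$, all four of whose geometric components $C_0,C_1^+,C_2,C_1^-$ have multiplicity one; the monodromy merely identifies $C_1^{\pm}$ into the single irreducible (but not geometrically irreducible) divisor $D_1=V(e_1)$, which therefore still appears with coefficient \emph{one} in the pullback. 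The correct relation is $\varphi^*S=D_0+D_1+D_2$, which is also what the blowup structure gives directly: $s=e_1s'$ and $e_1=e_2e_1'$ yield $2L=(2L-E_1)+(E_1-E_2)+E_2$, i.e.\ $D_0=2L-E_1$, $D_1=E_1-E_2$, $D_2=E_2$ --- exactly the classes the paper uses. The coefficient $2$ you have in mind belongs elsewhere: it is the multiplicity of the component $C_1'$ in the codimension-two non-Kodaira fiber $1$-$2$-$1$ over $V(s,a_2)$ (where $C_1\to 2C_1'$), not a coefficient in the codimension-one fiber class. (Compare the I$_0^{*\,\mathrm{ns}}$ case, where the central node genuinely has Kodaira multiplicity $2$; nothing of the sort happens for I$_n$ fibers, split or not.)

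The error is not harmless. If you feed $D_0+2D_1+D_2=2L$ into your linear system together with the (correct) identifications $D_1=E_1-E_2$ and $D_2=E_2$ from your step (iii), you obtain $D_0=2L-2E_1+E_2$ instead of $2L-E_1$, and the resulting triple intersection polynomial no longer reproduces $8L^2(-\alpha_0^3+3\alpha_0\alpha_1^2-4\alpha_1^3+3\alpha_1^2\alpha_2-\alpha_2^3)$; for instance $D_0^3$ and $D_0D_1^2$ come out wrong. Alternatively the overdetermined system is simply inconsistent. The fix is easy --- drop the spurious factor of $2$, or better, bypass the linear system entirely and read the classes off the blowup sequence of Theorem \ref{thm:res} as the paper does --- but as written the proof would produce an incorrect polynomial.
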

\begin{proof}
Pushforward  $D_a D_b D_c [  {Y}] \cap [X_2]$ with  
	\begin{align}
		D_0 = 2L - E_1 ,~~  D_1 =E_1 - E_2 ,~~ D_2 = E_2,~~[Y]=3H+6L-2E_1-2E_2.
	\end{align}

\end{proof}

\begin{thm}[{See \cite{Euler}}]

	The generating function for the Euler characteristic of a SO($5$)-model is
	\begin{align}
\varphi_*c(Y) =\frac{ 4 L t (3+4Lt)}{(1+ 2 L t)^2} c_t(TB),
	\end{align}
	where $c_t(TB)=\sum c_i (TB) t^i$ is the  Chern polynomial of the base $B$.
	\end{thm}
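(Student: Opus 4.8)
The plan is to follow the systematic procedure of \cite{Euler}, which reduces the computation of $\varphi_* c(Y)$ to a chain of pushforwards governed by Theorems \ref{Thm:AluffiCBU}, \ref{Thm:Push}, and \ref{Thm:PushH}. The starting point is the total Chern class of the ambient projective bundle $X_0 = \mathbb{P}[\mathscr{O}_B \oplus \mathscr{L}^{\otimes 2} \oplus \mathscr{L}^{\otimes 3}]$, which the relative Euler sequence furnishes as
\begin{equation*}
c(TX_0) = \pi^* c(TB)\,(1+H)(1+H+2L)(1+H+3L),
\end{equation*}
with $H = c_1(\mathscr{O}(1))$ and $L = c_1(\mathscr{L})$; note that the Chern roots $0, 2L, 3L$ appearing here are exactly the data that feed the projective-bundle pushforward of Theorem \ref{Thm:PushH}. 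From this class the strategy is threefold: (i) propagate it through the two blowups of Theorem \ref{thm:res} using Aluffi's formula; (ii) restrict to the hypersurface $Y$ by adjunction; and (iii) push the resulting homological Chern class down to $B$.

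For step (i) I would apply Theorem \ref{Thm:AluffiCBU} twice. The first blowup is centered on $Z = V(x,y,s)$, the transverse complete intersection of the three hypersurfaces of classes $H+2L$, $H+3L$, and $2L$; Aluffi's formula then yields $c(TX_1)$ as an explicit rational expression in $E_1$ and $f_1^* c(TX_0)$. The second blowup is centered on $V(x,y,e_1)$, whose three hypersurfaces now carry the classes $(H+2L)-E_1$, $(H+3L)-E_1$, and $E_1$ (the two proper transforms and the first exceptional divisor); a second application of the same theorem produces $c(TX_2)$. Step (ii) is the adjunction relation $c(TY)\cap[Y] = (c(TX_2)/(1+[Y]))\cap[Y]$ with $[Y] = 3H+6L-2E_1-2E_2$, which expresses the homological Chern class of $Y$ as an explicit rational function of $H$, $L$, $E_1$, $E_2$ inside $A_*(X_2)$.

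For step (iii) I would push forward in the order $f_{2*}$, then $f_{1*}$, then $\pi_*$. The two blowup pushforwards are handled by Theorem \ref{Thm:Push}: one writes the adjunction expression as a formal power series in the relevant exceptional divisor and reads off $f_{\ell*}$ via the weights $M_\ell$ attached to the three defining classes of each center. The final pushforward $\pi_*$ along the projective bundle is supplied by Theorem \ref{Thm:PushH}, i.e. by evaluating the resulting series in $H$ at $H=-2L$ and $H=-3L$ together with the $H=0$ residue term. After introducing the grading variable $t$ as in \cite{Euler} and collecting terms, the answer should collapse to $\tfrac{4Lt(3+4Lt)}{(1+2Lt)^2}\,c_t(TB)$; the double pole $(1+2Lt)^2$ is precisely what one anticipates from the two successive blowups, both of which meet the divisor $S=V(s)$ of class $2L$.

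The main obstacle is computational bookkeeping rather than anything conceptual: the intermediate Chern classes are rational functions with denominators $(1+H+2L)$, $(1+H+3L)$, $(1+2L)$ and their $E_1$-shifted analogues, and the pushforward theorems require treating these as formal power series while correctly tracking the $M_\ell$ (residue) contributions from each center. The delicate point is ensuring the spurious poles cancel, so that after all three pushforwards one is left with the clean rational function above; organizing the computation so that the $(1+2Lt)^2$ denominator assembles correctly, and so that the explicit $H$-dependence disappears in the expected way under $\pi_*$, is where the care is needed.
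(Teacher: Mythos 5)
Your proposal is correct and follows essentially the same route as the paper, which defers the computation to \cite{Euler} but sets up precisely this machinery in Section \ref{sec:intersectionBL}: the relative Euler sequence for $X_0$, two applications of Aluffi's blowup formula along the centers $(x,y,s)$ and $(x,y,e_1)$ with the classes you list, adjunction with $[Y]=3H+6L-2E_1-2E_2$, and the pushforwards $f_{2*}$, $f_{1*}$, $\pi_*$ via Theorems \ref{Thm:Push} and \ref{Thm:PushH}. No gaps; the remaining work is exactly the bookkeeping you identify.
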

	
\begin{table}	
	\begin{center}
	
	\begin{tabular}{|c|c|c|}
\hline
	$\dim B$ & Euler characteristic  & Calabi-Yau case\\
	\hline 
	$1$ & $12 L$  & $12 c_1$ \\
	\hline
	$2$ & $4 (3c_1 - 8 L) L$ & $-20 c_1^2$ \\
	\hline 
	$3$ & $4 ( 20 L^2 - 8 L c_1 + 3 c_2 ) L$ & $12 c_1(4c_1^2 + c_2)$\\
	\hline 
	$4$ & $4 (-48L^3 + 20 L^2 c_1 - 8 L c_2 + 3 c_3) L $ & $4 c_1 (3 c_3 - 28 c_1^3 - 8 c_1 c_2)$ \\
	\hline 
	\end{tabular}
	\end{center}
	\caption{Euler characteristic of SO(5)-model for bases of dimension up to $4$. 
	The $i$th Chern class of the base is  denoted  $c_i$. 
	The Calabi-Yau cases are obtained by imposing $L=c_1$. }
	\label{Table:EulerSO5}
\end{table}

\subsection{Fiber geometry}

\begin{cor} The elliptic fibration defined by the crepant resolution  described in Theorem \ref{thm:res}  is a flat elliptic fibration $\varphi_2: Y\longrightarrow B$. The fibral  divisors are:
		\begin{align}
			D_0 &: s =y^2 z- x^2(e_2^2 e_1 x + a_2z)=0 ,\quad\quad && [e_2^2 e_1 x : e_2^2 e_1 y: z],[e_2 x : e_2 y : 0] ,[ x:y: e_1]\\
			D_1&: e_1 = y^2 -a_2 x^2=0 ,\quad \quad&& [ 0:0:z],[e_2x :e_2 y : s] ,[x:y:0]\\
			D_2&: e_2 = y^2 -x(a_2 x + e_1 s^2z )=0,\quad\quad && [ 0:0:z],[0:0:s],[x :y:e_1]
		\end{align}
	\end{cor}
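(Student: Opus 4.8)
The plan is to establish flatness through the miracle-flatness criterion and then to read off the three fibral divisors by restricting the proper transform \eqref{resSO5} to the exceptional loci. By Theorem~\ref{thm:res} the variety $Y$ is nonsingular (the Jacobian criterion leaves no singular points once $V(a_2)$ is smooth), hence Cohen--Macaulay, while $B$ is smooth by hypothesis. The miracle-flatness criterion—a morphism from a Cohen--Macaulay space to a regular base is flat as soon as all its fibers are equidimensional—then reduces the claim to showing that every fiber of $\varphi_2$ has dimension $\dim Y-\dim B=1$. For this I would use that $Y$ is a hypersurface in the iterated blowup $X_2$, whose fibers over $B$ are surfaces, so that $\varphi_2^{-1}(p)=Y\cap X_{2,p}$ is one-dimensional provided the defining polynomial of \eqref{resSO5} restricts to a nonzero section on each two-dimensional component of $X_{2,p}$. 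On the component coming from the $\mathbb P^2$-fiber the monomial $y^2z$ survives with unit coefficient, so the restriction is never identically zero; on the exceptional divisors $E_1=V(e_1)$ and $E_2=V(e_2)$ the first group of projective coordinates $[e_2^2e_1x:e_2^2e_1y:z]$ collapses to $[0:0:z]$, forcing $z$ to be a unit, and the restricted equations $z(y^2-a_2x^2)$ and $z\big(y^2-x(a_2x+e_1s^2z)\big)$ are likewise not identically zero. Hence no fiber acquires a surface component, every fiber is one-dimensional, and $\varphi_2$ is flat. This argument is uniform in $\dim B$ and in particular covers the deep strata of the discriminant.

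For the fibral divisors I would note that $\varphi_2^{-1}(S)$, with $S=V(s)$, decomposes into the proper transform $D_0$ of $S$ together with the two components $D_1,D_2$ supported on $E_1$ and $E_2$; this matches the classes $D_0=2L-E_1$, $D_1=E_1-E_2$, $D_2=E_2$ recorded in the proof of the triple-intersection formula, and in particular shows there are exactly three vertical components. Restricting \eqref{resSO5} to $s=0$ gives $D_0:\ y^2z-x^2(e_2^2e_1x+a_2z)=0$; restricting to $e_1=0$ produces $z(y^2-a_2x^2)=0$, and dropping the unit $z$ yields $D_1:\ y^2-a_2x^2=0$; restricting to $e_2=0$ and again using $z\neq 0$ yields $D_2:\ y^2-x(a_2x+e_1s^2z)=0$. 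Each $D_i$ is a conic or rational-curve bundle over $S$, hence irreducible, and the three reproduce the nodes of the dual graph $\widetilde{\text{C}}_2^t$, the doubled (non-split) node corresponding to the conic $D_1$, which splits geometrically only where $a_2$ becomes a square.

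The step I expect to be most delicate is the behavior over the collision $V(s,a_2)$. The flatness bound already forces the fiber there to be one-dimensional, but confirming that it is the non-Kodaira $1$--$2$--$1$ fiber announced in Table~\ref{table:summary} requires tracking how the conic $D_1:\ y^2=a_2x^2$ degenerates to the double line $y^2=0$ as $a_2\to 0$, together with the incidences among $D_0,D_1,D_2$ at that point; this codimension-two degeneration, rather than the generic fiber or the flatness estimate itself, is where the real work lies. The remaining points—irreducibility of each $D_i$ and exhaustiveness of the list—then follow formally from the decomposition $\varphi_2^{-1}(S)=D_0\cup D_1\cup D_2$ and the class relations above.
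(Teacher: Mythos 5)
Your identification of the fibral divisors is exactly the paper's: the paper obtains $D_0$, $D_1$, $D_2$ by restricting the proper transform \eqref{resSO5} to $s=0$, $e_1=0$, $e_2=0$ and discarding the unit $z$ on the exceptional loci, precisely as you do, and your class bookkeeping $D_0=2L-E_1$, $D_1=E_1-E_2$, $D_2=E_2$ summing to the total transform $2L$ of $S$ is the same consistency check the paper uses later. Where you genuinely differ is on flatness. The paper never argues flatness abstractly; it establishes it implicitly by enumerating every degeneration stratum (generic fiber, the three codimension-one components of the discriminant, and the collision $V(s,a_2)$) and exhibiting each fiber explicitly as a curve. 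Your route --- miracle flatness for a Cohen--Macaulay hypersurface over a regular base, reduced to checking that \eqref{resSO5} does not vanish identically on any two-dimensional component of any fiber of $X_2\to B$ --- is uniform in the stratification and in $\dim B$, and buys you flatness without having to know the codimension-two fiber in advance; the paper's enumeration buys the explicit fiber types (including the non-Kodaira $1$--$2$--$1$ fiber) as a byproduct, which it needs anyway for the matter computation. Two small imprecisions in your write-up, neither fatal: the generic fiber of $D_1$ is a rank-two conic, i.e.\ a conjugate pair of lines rather than an irreducible conic, so irreducibility of $D_1$ rests on the standing assumption that $a_2$ is not a perfect square modulo $s$ (monodromy swaps the two lines) rather than on ``conic bundle, hence irreducible''; and you should note that over $p\in V(s)$ the fiber $X_{2,p}$ has up to three surface components (the proper transform of the original $\mathbb{P}^2$ and the $\mathbb{P}^2$-fibers of $E_1$ and $E_2$), on each of which the survival of a $y^2z$- or $y^2$-term must be checked separately --- which your coordinate analysis does in effect supply.
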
	
\begin{rem}		
The total transform of the divisor $S=V(s)$ in $Y$ is $V(te_1 e_2)$ composed of three irreducible and reduced components. 
We denote the irreducible divisors  $D_0=V(s)$, $D_1=V(e_1)$, and $D_2=V(e_2)$ in $Y$. We denote their generic fibers over $S$ as (resp.) $C_0$, $C_1$, and $C_2$. 
The fibers  $C_0$ and $C_2$ are geometrically irreducible while $C_1$ is not. After a quadratic field extension $C_1$ splits into rational curves, namely $C_1^+$ and $C_1^-$ with fibers $C_1^+$ and $C_1^-$. 
The curves $C_0$, $C_1^+$, $C_2$, $C_1^-$ define a Kodaira fiber of type I$_4$ with dual graph the (untwisted) affine Dynkin diagram $\widetilde{A}_3$. 
The curves $C_0$, $C_1$, and $C_2$ have as their dual graphs the twisted affine Dynkin diagram $\widetilde{C}_2^t$ (in the notation of Carter), also denoted $\widetilde{D}_4^{(2)}$ in the notation of Kac---see Figure \ref{fig:SO5}.
\end{rem}

\noindent Over the degeneration locus $V(a_2)$, we find
\begin{align}
	Y~:~
\begin{cases}
C_0 \longrightarrow C_0 \\
C_1 \longrightarrow 2C_{1}' \\
C_2\longrightarrow C_2 
\end{cases}
\end{align}
where
\begin{align}
Y~:~
\begin{cases}
\begin{array}{llll}
		C_0 &: s= y^2 z- e_2^2 e_1 x^3 =0,~~ &&[e_2^2 e_1 x : e_2^2 e_1 y: z],[e_2 x : e_2 y : 0] ,[ x:y: e_1]\\
		2C_1'&: e_1 = y^2=0 ,\quad \quad&& [ 0:0:z],[e_2x :0 : s] ,[x:0:0]\\
			C_2&: e_2 = y^2 - e_1 s^2xz =0,\quad\quad && [ 0:0:z],[0:0:s],[x :y:e_1].
		\end{array}
		\end{cases}
	\end{align} 
Note that the conic $C_2$ remains non-degenerate in over the locus $V(a_2)$.

		\begin{thm}
The fiber over the generic point of the divisor $S=V(s)$ is of type I$_4^{\text{ns}}$ and degenerates along $V(s,a_2)$  to a non-Kodaira fiber of type $1-2-1$. 
The divisors $D_0$ and $D_2$ are isomorphic to the projective bundle  $\mathbb P_S[\mathscr O_S\oplus  \mathscr L]$ while 
$D_1$ is a double cover of $S\times \mathbb{P}^1$ and a branch locus over the cuspidal locus $V(s,a_2)$. 
Defining $S'$ as the double cover of $S$ branched over $V(s,a_2)$, $D_1$ is isomorphic to $\mathbb P_{S'}[\mathscr O_S\oplus  \mathscr L]$
:		
	\begin{align}
		D_0 \cong \mathbb P_S[\mathscr O_S\oplus  \mathscr L],\qquad 
		 D_1\cong \mathbb{P}_{S'}[ \mathscr O_{S'} \oplus   \mathscr L], 
\qquad		D_2\cong \mathbb P_{S}[ \mathscr O_S \oplus   \mathscr L ].
		\end{align}		
		\end{thm}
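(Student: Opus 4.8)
The plan is to extract everything from the explicit resolution \eqref{resSO5} of Theorem~\ref{thm:res} together with the three fibral curves $C_0,C_1,C_2$ computed above, splitting the work into a fibral analysis over $S=V(s)$ and a global identification of the surfaces $D_0,D_1,D_2$. For the generic fiber, restricting \eqref{resSO5} to $V(s)$ gives $z(y^2-a_2x^2)=e_2^2e_1x^3$, whose components are $C_0,C_1,C_2$; I would compute their pairwise intersections in the coordinate charts listed above to check that, once $C_1$ splits into $C_1^\pm$ over a quadratic extension, the curves $C_0,C_1^+,C_2,C_1^-$ close up into a $4$-cycle — the dual graph $\widetilde A_3$ of Kodaira I$_4$, consistent with the valuations of $(c_4,c_6,\Delta)$ read off from the discriminant lemma. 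The non-split character is the decisive point: the curve $C_1:\,e_1=y^2-a_2x^2=0$ is irreducible over $k(S)$ exactly because $a_2|_S$ is not a square, so it splits only over $k(S)(\sqrt{a_2})$; the monodromy of $\sqrt{a_2}$ around $V(s,a_2)$ exchanges $C_1^+$ and $C_1^-$, folding $\widetilde A_3$ into the twisted diagram $\widetilde C_2^t$ and yielding fiber type I$_4^{\text{ns}}$.

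To obtain the degeneration I would specialize to the codimension-two locus $V(s,a_2)$. Setting $a_2=0$ in the curve equations (as recorded above) shows that $C_0$ and $C_2$ survive as single irreducible rational curves while $C_1$ collapses to the non-reduced double line $2C_1'$ with $C_1'=V(e_1,y)$. Verifying that $C_0$ and $C_2$ each meet $C_1'$ transversally in one point and are mutually disjoint then displays the fiber as the string $C_0+2C_1'+C_2$ with multiplicities $(1,2,1)$; a doubled central curve carrying only two legs is absent from Kodaira's list, so this is the claimed non-Kodaira $1\text{-}2\text{-}1$ fiber, which I would recognize as a contraction of I$_0^*$ (an incomplete I$_0^{*\text{ss}}$) by comparison with the $\widetilde D_4$ configuration.

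For the surfaces I would dispatch $D_0$ and $D_2$ together. Their generic fibers $C_0,C_2$ are smooth rational curves which, by the previous step, stay irreducible over every point of $S$, including $V(s,a_2)$; hence $D_0\to S$ and $D_2\to S$ are $\mathbb P^1$-fibrations with no degenerate fibers, so each is the projectivization of a rank-two bundle on $S$. Recording how the two homogeneous fiber coordinates twist by $\mathscr L$ — exactly as in the SO($3$) computation leading to \eqref{eqn:P1bundle} — identifies the bundle as the split bundle $\mathscr O_S\oplus\mathscr L$, giving $D_0\cong D_2\cong\mathbb P_S[\mathscr O_S\oplus\mathscr L]$.

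The subtle and hardest case is $D_1$, which is the proper transform of the first exceptional divisor $E_1$ — a $\mathbb P^2$-bundle over $S$ with fiber coordinates $[x:y:s]$ — cut down inside $Y$ to the conic bundle $\{y^2=a_2x^2\}$. Over a generic point of $S$ this conic is a pair of lines meeting at $[0:0:1]$, which is precisely the center $V(x,y,e_1)$ of the second blowup; that blowup separates the two lines, so $D_1\to S$ has generic fiber two disjoint $\mathbb P^1$'s, interchanged by the $\mathbb Z/2$ monodromy of $\sqrt{a_2}$ and fusing into $C_1'$ over $V(s,a_2)$. The natural move is to pass to the double cover $S'\to S$ obtained by adjoining $\sqrt{a_2}$, on which the two lines are individually defined and assemble into a single $\mathbb P^1$-bundle. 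The main obstacle is to make this global and exact: I must show that $S'$ is precisely the double cover of $S$ branched over $V(s,a_2)$ (so that the branch points are exactly where the two lines merge), that the family of lines extends to a smooth $\mathbb P^1$-bundle over all of $S'$ rather than merely over its generic point, and that $D_1$ is smooth along the branch locus, with the correct twist, yielding $D_1\cong\mathbb P_{S'}[\mathscr O_{S'}\oplus\mathscr L]$. Keeping track of the $\mathscr L$-weights through both blowups and controlling the geometry over $V(s,a_2)$ is where the genuine work resides.
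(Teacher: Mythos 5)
Your proposal is correct and follows essentially the same route as the paper: explicit computation of the fibral curves in the resolved coordinates to get the I$_4^{\text{ns}}$ fiber and its $1$-$2$-$1$ degeneration over $V(s,a_2)$, direct rational parametrization of $C_0$ and $C_2$ to identify $D_0\cong D_2\cong\mathbb P_S[\mathscr O_S\oplus\mathscr L]$, and the passage to the double cover $S'\to S$ branched over $V(s,a_2)$ (the paper phrases this as the Stein factorization of $D_1\to S$) to handle $D_1$. Your added observation that the degenerate conic $y^2=a_2x^2$ has its node exactly at the center $V(x,y,e_1)$ of the second blowup, which therefore separates the two lines, is a nice geometric gloss on what the paper does implicitly.
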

		\begin{proof}
		Since the center of the second blowup is away from $D_0$,  it is enough to stop at  the first blowup to understand the geometry of $C_0$ and $D_0$.  
By solving for $e_1$, we see that $C_0$ is the normalization of  a nodal curve. 
In the patch $x\neq 0$, an affine parameter of the rational curve $C_0$ is $y/x$.
It follows that the divisor $D_0$ with generic fiber $C_0$ is isomorphic to $\mathbb{P}_S [\mathscr{O}_S\oplus \mathscr{L}]$.

	The  curve $C_2$ is a conic defined in $\mathbb P^2$ with coordinates $[x :y :e_1]$. 
	The discriminant of the conic is $s^4/4$ which is a unit. Hence we again get a $\mathbb{P}^1$-bundle over $S$. 
	Working in the patch $x\neq 0$, after solving for $e_1$, we can parametrize the conic by  $y/x$. It follows that the divisor $D_2$ with fiber $C_2$ is isomorphic to $\mathbb{P}_S [\mathscr{O}_S\oplus \mathscr{L}]$.

To understand the geometry of $D_1$, we consider the proper morphism  $\pi: D_1\to S$. Since the fibers of $\pi$ are not connected, we consider the Stein factorization $\pi=f\circ \rho$  with $f: D_1\to S'$ a morphism with connected fibers and  $\rho:S'\to S$ the double cover of $S$ branched over $V(s,a_2)$. The Euler characteristic of $S'$ is $\chi(S')=2 \chi(S)- [S][a_2]=2(1-g)-4L^2=-4L^2-4L^2=-8L^2$ and we also retrieve $D^3_2=8(1-g')=4\chi(S')=-32L^2$.

 we go to a field extension where we can take the square root of $a_2$. It is then clear that $C_1$ is a double cover (branched along $V(a_2)$) of  a $\mathbb{P}^1$ bundle over $S$ with projective fiber paramatrized by 
$[e_2x:s]$. In particular, since $e_2x$ and $s$ are both sections of $\mathscr{L}^{\otimes 2}$ over $S$, the $\mathbb{P}^1$ bundle is trivial. Thus $D_2$ is the double cover of $S\times \mathbb{P}^1$ branched  along the  divisor $V(a_2)$ of $S$, that is along the cuspidal locus of the elliptic fibration.

\end{proof}

The geometric fiber of $D_2$ is composed of two non-intersecting projective bundle that coincide into a double line over the cuspidal locus $V(s,a_2)$ in  $B$.

The geometric fiber over the generic point of $S$ is a Kodaira fiber of type I$_4$. 
The description of the geometric fiber I$_4$ requires at least a quadratic field extension for the square root of $a_2$ to be well defined. The fiber over the generic point of $S$ has an affine Dynkin diagram of type $\widetilde{\text{C}}_2$, which corresponds to a type I$_4^{\text{ns}}$ fiber. 
The fiber degenerates over the cuspidal locus $V(a_2)\cap S$ where the curve $C_1$ degenerates into two coinciding lines giving a fiber of type $1-2-1$ which we can think of as an incomplete I$_0^*$.  
	\begin{thm}
	The intersection numbers between the divisors $D_a$ and their generic fibers $C_a$ give the following intersection matrix  
	\begin{equation}
\begin{pmatrix} D_a C_b \end{pmatrix} =
\begin{tabular}{r}
$\begin{matrix}
 C_0 & C_1  & C_2
\end{matrix}\   \  $\\
$\begin{matrix}
D_0 \\
D_1\\
D_2
\end{matrix}
\begin{bmatrix}
-2 & \   \  2 &\  \  0 \\
\   \  2 & -4 &\  \   2\\
\  \   0&\   \    2& -2
\end{bmatrix}$
\end{tabular}	
	\end{equation}
	
	More generally, we have 
	\begin{equation}
\varphi_{2*} \big(D_a \, D_b\,  \varphi^*_2 M\cap[{Y}]\Big)=
\begin{tabular}{r}
$\begin{matrix}
 D_0 & D_1  & D_2
\end{matrix}\   \  $\\
$\begin{matrix}
D_0 \\
D_1\\
D_2
\end{matrix}
\begin{bmatrix}
-2 & \   \  2 &\  \  0 \\
\   \  2 & -4 &\  \   2\\
\  \   0&\   \    2& -2
\end{bmatrix} 
$
\end{tabular} (S   M) \cap [B], \quad \text{where} \quad M \in A_*(B).
	\end{equation}
	\end{thm}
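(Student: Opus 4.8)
The plan is to obtain the general pushforward identity first and then read off the intersection matrix as its $M$-independent coefficient. Indeed, for $B$ a surface and $M \in A_*(B)$ a divisor class, $\varphi_2^* M$ is a divisor on the threefold $Y$, so $D_a D_b \varphi_2^* M \cap [Y]$ is a zero-cycle whose pushforward to $B$ is a multiple of the class $S \cdot M$; matching this multiple for all $M$ simultaneously yields the matrix, and specializing to an $M$ meeting $S$ transversally in one point (so $S\cdot M = 1$) recovers $D_a C_b$ as the local intersection of $D_a$ with the fiber $C_b$ of $D_b$ over a generic point of $S$. Thus it suffices to compute $\varphi_{2*}(D_a D_b \varphi_2^* M \cap [Y])$ for each pair $(a,b)$.

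Since the resolution of Theorem \ref{thm:res} is $\varphi_2 = \pi \circ f_1 \circ f_2$, where $f_1 : X_1 \to X_0$ is the blowup along $(x,y,s)$, $f_2 : X_2 \to X_1$ is the blowup along $(x,y,e_1)$, and $\pi : X_0 \to B$ is the projective bundle, the pushforward factors as $\varphi_{2*} = \pi_* (f_1)_* (f_2)_*$. Concretely I would: (i) substitute the divisor classes $D_0 = 2L - E_1$, $D_1 = E_1 - E_2$, $D_2 = E_2$ together with $[Y] = 3H + 6L - 2E_1 - 2E_2$ to write $D_a D_b \varphi_2^* M \cap [Y]$ as a polynomial in $E_1, E_2, H, L$ and $\varphi_2^* M$; (ii) collect powers of $E_2$ and apply Theorem \ref{Thm:Push} to $(f_2)_*$ with $d=3$ and center classes (the proper transforms) $Z_1 = H + 2L - E_1$, $Z_2 = H + 3L - E_1$, $Z_3 = E_1$; (iii) collect powers of $E_1$ in the resulting class on $X_1$ and apply Theorem \ref{Thm:Push} again to $(f_1)_*$ with center classes $Z_1 = H + 2L$, $Z_2 = H + 3L$, $Z_3 = 2L$; (iv) apply Theorem \ref{Thm:PushH} to $\pi_*$, evaluating at $H = -2L$ and $H = -3L$ and adding the $Q(0)/(6L^2)$ term. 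The output is a multiple of $(2L)\cdot M = S \cdot M$, and the multiple is precisely the $(a,b)$ entry of the stated matrix, the entries $-2,-4,-2$ and $2$ emerging from this reduction.

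As an independent check on the matrix I would read the off-diagonal entries directly from the explicit fiber components $C_0, C_1, C_2$ of the preceding corollary, whose dual graph is the twisted affine diagram $\widetilde{C}_2^t$: the non-split curve $C_1$ splits geometrically as $C_1^+ + C_1^-$, each of which meets $C_0$ once and $C_2$ once in the $\mathrm{I}_4$ cycle $C_0 - C_1^+ - C_2 - C_1^-$, so that $D_0 C_1 = D_1 C_0 = D_1 C_2 = D_2 C_1 = 2$ while $D_0 C_2 = D_2 C_0 = 0$. The diagonal entries then follow from the relation $D_a \cdot F = 0$, where $F = C_0 + C_1 + C_2$ is the full fiber class (the null vector $(1,1,1)$ of the matrix), since a generic fiber can be moved off $S$; this forces the row sums to vanish and gives $D_a C_a = -\sum_{b \neq a} D_a C_b$, reproducing $-2, -4, -2$ and confirming the pushforward computation.

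The main obstacle is the bookkeeping of the two nested blowup pushforwards rather than any conceptual difficulty. In particular, I must verify at each stage that the hypotheses of Theorem \ref{Thm:Push} hold, namely that the centers $(x,y,s)$ and $(x,y,e_1)$ are regular sequences cutting out smooth complete intersections meeting transversally, which follows from the genericity assumptions on $V(a_2)$ and $V(s)$, and I must track the proper transforms of the center classes correctly through the first blowup so that the second-blowup centers carry the corrected classes $H + 2L - E_1$ and $H + 3L - E_1$. A further point requiring care is that the class $[Y] = 3H + 6L - 2E_1 - 2E_2$ already encodes the crepant factoring of $e_1^2 e_2^2$ from the proper transform \eqref{resSO5}, so that the computation is genuinely performed on $Y$; keeping the $-2E_1 - 2E_2$ contributions straight is where sign and coefficient errors are most likely to enter.
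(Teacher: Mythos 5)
Your proposal is correct and follows essentially the same route as the paper: the paper's proof is precisely the direct pushforward computation $\varphi_{2*}=\pi_*\,f_{1*}\,f_{2*}$ with $D_0=2L-E_1$, $D_1=E_1-E_2$, $D_2=E_2$, $[Y]=3H+6L-2E_1-2E_2$, followed by specializing $M$ to recover the numerical matrix, and your center classes for the two blowups are the right ones. The consistency check via the $\mathrm{I}_4^{\text{ns}}$ fiber geometry and the vanishing row sums against the full fiber class $C_0+C_1+C_2$ is a sensible addition but not a different method.
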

	
	\begin{proof}
	The first equation is obtained from the second one by choosing $M$ to be the generic point of the divisor $S$ of $B$.  The second equation is a direct pushforward computation with  
	$$D_0=2L-E_1,\quad D_1=E_1-E_2, \quad D_2=E_2,\quad [Y] =(3H+6L-2E_1-2E_2)$$ 
	We do the pushforward in three steps since $\varphi_2=\pi \circ f_1 \circ f_2$.  We first compute the pushforward $f_{2*}$ to the Chow ring of  $X_1$, then we compute the pushforward $f_{1*}$ to the Chow ring of $X_0$, and finally we compute the pushforward $\pi_*$ to the Chow ring of $B$. 
	\end{proof}
	
	\begin{figure}[htb]
	\begin{center}
			$
			\begin{array}{c}
			\begin{array}{c}
				\begin{tikzpicture}[scale=1]
					\draw[fill=black!8!] (0,0) ellipse (3.8cm and 1.7cm);
					\node at (-2.4,-.45) {$V(2s+a_2)$};
					\node at (2.4,-.45) {$V(2s-a_2)$};
					\node at (0,-1) {\LARGE $B$};
					\node at (3,.3) {$V(s)$};
					\draw[ultra thick] (-2.3,1) --++ (4.6,-2);
					\draw[ultra thick] (2.3,1) --++ (-4.6,-2);
					\draw[ultra thick] (-3,0) --++ (6,0);
					\draw[dotted,thick] (-1.3,.55) -- (-1.3,2.8);
					\draw[dotted,thick] (2.3,1) -- (2.3,3);
					\draw[dotted,thick] (-3,0) -- (-3,3.75);
					\draw[dotted,thick] (0,0) -- (0,3.75);
					\node at (-.3,5) 
						{
							\begin{tikzpicture}
								\node[draw,ultra thick,circle,xshift=200,label=left:{$C_0$}](C0') at (0,0) {};
								\node[draw,ultra thick,circle,xshift=200,scale=.65,label=left:{$C_1^{'}$}](C13) at (0,-1) {2};
								\node[draw,ultra thick,circle,xshift=200,label=left:{$C_2$}](C2) at (0,-2) {};
								\draw[ultra thick] (C0') -- (C13) -- (C2);
							\end{tikzpicture}
						};
					\node at (-3,5) 
						{
							\begin{tikzpicture}[]
								\draw[dashed,thick,red] (-1.25,-1+.25) to (1.25,-1+.25);
								\draw[dashed,thick,red] (-1.25,-1-.25) to (1.25,-1-.25);
								\draw[dashed,thick,red]  (-1.25,-1+.25)  to (-1.25,-1-.25);
								\draw[dashed,thick,red]  (1.25,-1+.25)  to (1.25,-1-.25);
								\node[draw,ultra thick,circle,label=left:{$C_0$}](C0) at (0,0) {};
								\node[draw,ultra thick,circle,label=left:{$C_1^+$}](C1+) at (-1,-1) {};
								\node[draw,ultra thick,circle,label=right:{$C_1^-$}](C1-) at (1,-1) {};
								\node[draw,ultra thick,circle,label=left:{$C_2$}](C2) at (0,-2) {};;
								\draw[ultra thick] (C0) -- (C1+) -- (C2) -- (C1-) -- (C0);
							\end{tikzpicture}
						};
					\node at (-1.3,3.1) 
						{
							\begin{tikzpicture}[scale=2]
								\draw[scale=.5,domain=-1.2:1.2,variable=\x, ultra thick] plot({\x*\x-1,\x*\x*\x-\x-5});
							\end{tikzpicture}
						};
					\node at (2.3,3.5) 
						{
							\begin{tikzpicture}[scale=2]
								\draw[scale=.5,domain=-1.2:1.2,variable=\x, ultra thick] plot({\x*\x-1,\x*\x*\x-\x-5});
							\end{tikzpicture}
						};
	
				\end{tikzpicture} 
			\end{array}\\\\
				\begin{array}{|l|c|}
				 \hline
				\text{Weierstrass model} & y^2 z = x^3 + a_2 x^2 z + s^2 x z^2\\\hline
				\text{Discriminant} & 	\Delta =s^4 (4 s^2 - a_2^2)\\\hline
				\text{Matter representations} &\text{adjoint + vector w/ geometric weight } (-2,1)\\\hline
				\text{Representation multiplicities} & n_{\textbf{10}} = 1 + L^2; ~~ n_\textbf{5} = 3 L^2 \\\hline
				\text{Euler characteristic} &  \frac{ 4 L ( 3 + 4 L ) }{(1+2L)^2} c(TB)\\\hline
				\text{Triple intersection form} &6 \mathcal F =~ -8 L^2 (\alpha_0^3+4 \alpha_1^3  +  \alpha_2^3)+ 24 L^2 \alpha_1^2(  \alpha_0 + \alpha_2)
				\\\hline
			\end{array}
			\end{array}
			$
			\end{center}
	\caption{Summary of the geometry of the resolution (\ref{resSO5}) of the SO($5$)-model.  Note that 
	all the crepant resolutions of Weierstrass model are isomorphic to each other. Thus, its extended K\"ahler cone consists of a single chamber. 
	}
	\label{fig:SO5}
	\end{figure}

	When the divisors $D_i$ are not all geometrically irreducible, we notice that the quadratic intersection numbers correspond to a twisted affine Dynkin diagram. 
In the present case, we get the twisted diagram $\widetilde{C}_2^t$ in the notation of Carter or $\widetilde{D}_4^{(3)}$ in the notation of Kac---see Figure \ref{fig:SO5}.
It is possible to read off the Cartan matrix for the ordinary Dynkin diagram from the incidence matrix by 
 rescaling the  $a$th row by $\frac{2}{D_a\cdot  C_a}$, and then 
deleting the affine node corresponding to $C_0$ and examining the components of the remaining $2\times 2$ block. 
 We obtain
	\begin{align}
	\begin{pmatrix} D_a \cdot C_b\end{pmatrix} = 
		\begin{tabular}{r} 
		$ \begin{matrix} C_1 &C_2 \end{matrix}~~~$  \\$
		\begin{matrix} D_1 \\ D_2 \end{matrix} \begin{bmatrix}-2 & 1  \\ 2 & -2\end{bmatrix}.
	$\end{tabular}
	\end{align}
The above matrix is minus the C$_2$ Cartan matrix.

\subsection{Matter representation}
\begin{table}[htb]
\begin{center}
\begin{tabular}{|c|c|c|}
\hline 
$\mathbf{5}$  & $\mathbf{10}$ & $\mathbf{4}$\\ 
\hline 
$
\begin{array}{c}
\boxed{\  0,\ \  1}\\
\\
\boxed{\  2,\ -1}\\
\\
 \boxed{\  0,\ \  0}\\
\\
\boxed{-2,\  \  1}\\
\\
\boxed{\  0,\ -1}\\
\end{array}
$
& 
$
\begin{array}{c}\\
\boxed{\  2,\ \  0}\\
\\
\boxed{\  0,\ \  1}\\
\\
\boxed{-2,\ 2}\quad  \boxed{\    2,\ -1}\\
\\
\boxed{\  0,\ \  0}\quad   \boxed{\  0,\ \  0}\\
\\
\boxed{\  2,\ -2}\quad  \boxed{-2,\ 1}\\
\\
\boxed{\  0,\ -1}\\
\\
\boxed{-2,\  \  0}\\ 
\\
\end{array}
$
& 
$
\begin{array}{c}
\boxed{\  1,\ \  0}\\
\\
\boxed{- 1,\   \  1}\\
\\
 \boxed{1,\  -1}\\
\\
\boxed{-1,\  \ 0}
\end{array}
$
\\
\hline 
\end{tabular}
\end{center}
\caption{Weights of the representations $\mathbf{5}$, $\mathbf{10}$, and $\mathbf{4}$ of C$_2$ expanded in the basis of fundamental weights. \label{Table:RepC2} }  
\end{table}

We now consider the matrix of weight vectors we obtained from the curves over the cuspidal locus $V(a_2,s)$. 
 We 
delete the first row and column (corresponding to the affine node):
	\begin{align}
	\begin{pmatrix} w_a(C_b) \end{pmatrix} = \begin{bmatrix} \varpi_1 & \varpi_2 \end{bmatrix} = \begin{tabular}{r} $\begin{matrix}C_1^\prime & C_2 \end{matrix}~~ $ \\ $ \begin{matrix} D_1 \\ D_2 \end{matrix} \begin{bmatrix} -2 & 2 \\ 1 & -2 \end{bmatrix} $ \end{tabular}.
	\end{align}
While the second column vector $ \varpi_2$ is a root of $\text{C}_2$, the first column $\varpi_1 = \boxed{-2,\  1}$ is a weight  of the  representation  $\mathbf{5}$ of $\text{C}_2$. 
The saturation of the singleton $\{ \varpi_1\}$ is precisely the set of weights of the representation $\mathbf{5}$ of C$_2$, with highest weight $\boxed{\ 0,\   1}$.  
The $\mathbf{5}$ of C$_2$ is defined as the non-trivial irreducible component  ${\bigwedge}^2_0$ in the decomposition of the  antisymmetric representation   ${~\bigwedge}^2 V_{{\text{C}_2}}={~\bigwedge}^2_0\oplus \mathbb{C}$.
Equivalently, the representation \textbf{5} is  the vector representation of $\mathfrak{so}(5)$. This representation is quasi-minuscule and self-dual. 
 The adjoint representation is $\mathbf{10}$ with highest weight $\boxed{2,\ 0}$.

\subsection{Counting 5D matter multiplets}
The variable $\phi\in \mathfrak{h}$ (where $\mathfrak h \subset \mathfrak g$ is the Cartan subalgebra) is expressed in the basis of simple coroots. 
The weights $\varpi$ and the roots  $\alpha$ are expressed in the canonically dual basis (i.e. the basis of fundamental weights).  

The relevant part of the prepotential for a 5D gauge theory with gauge algebra C$_2$ and hypermultiplets in the representations $\textbf{5}$ and $\textbf{10}$ is (with mass parameters set equal to zero) \cite{IMS}:
\begin{equation}
\mathcal{F}_{\text{IMS}}= \frac{1}{12}\left({\sum_{\alpha} \Big|{(\phi, \alpha)}\Big|^3 - \sum_{\mathbf{R}} n_{\mathbf{R}} \sum_{\varpi}  \Big|(\phi, \varpi)\Big|^3}\right).
\end{equation}

The open fundamental Weyl chamber is the subset of  $\mathfrak{h}$  with positive intersection with all the simple roots. The simple roots of $\mathfrak{so}_5$ are:
\begin{equation}
\boxed{2, -1 } \quad \boxed{-2,   2}.
\end{equation}
It follows that the (dual) open fundamental Weyl chamber is the cone of $\mathfrak{h}$ defined by:
\begin{equation}
2\phi_2 > 2 \phi_1 > \phi_2 > 0. 
\end{equation}
\begin{rem}
A weight defines a hyperplane through the origin that intersect the open fundamental Weyl chamber if and only if  both $\varpi$ and $-\varpi$ are not dominant weights.
\end{rem} 
For the SO(5)-model, the representations that we consider are the adjoint and the vector representation. The nonzero weights of the representation $\mathbf{5}$ are all dominant up to an overall sign. Hence, 
 I$({\text{C}_2}, \mathbf{10}\oplus \mathbf{5})$ has only a unique chamber \cite{EJJN2}. 

\begin{prop}
The prepotential for the Lie algebra 
$\mathfrak{so}_5$ with $n_\textbf{adj}$ matter multiplets transforming in the adjoint representation and $n_{\textbf{5}}$ matter multiplets transforming in the vector representation consists of a single phase. 
Explicitly, the prepotential is:

\begin{align*}
\begin{split}
		6 \mathcal{F}_{\text{IMS}} &= \left(- 8 n_{\mathbf{10}} - 8 n_{\mathbf{5} }+ 8 \right) \phi_1^3 + \left( 12 n_{\mathbf{5}} - 12 n_{\mathbf{10}} + 12 \right) \phi^2_1 \phi_2 + \left( - 6 n_{\mathbf{5}} + 18 n_{\mathbf{10}} - 18 \right) \phi_1 \phi_2^2   + \left( 8 - 8 n_{\mathbf{10}} \right) \phi_2^3 . 
	\end{split}
\end{align*}
\end{prop}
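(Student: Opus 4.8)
The plan is to evaluate the IMS prepotential term by term on the single open chamber $2\phi_2 > 2\phi_1 > \phi_2 > 0$, using the explicit weights recorded in Table \ref{Table:RepC2}. Since $\phi$ is written in the basis of simple coroots and every weight $\varpi$ in the dual basis of fundamental weights, the pairing $(\phi,\varpi)$ is simply the coordinate dot product: a weight with label $(m_1,m_2)$ contributes the linear form $m_1\phi_1 + m_2\phi_2$. The whole computation therefore reduces to summing cubes of a short list of linear forms and collecting monomials in $\phi_1,\phi_2$.

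First I would reduce each $|(\phi,\varpi)|$ to a definite sign on the chamber. The chamber inequalities give $\phi_2 > \phi_1 > \tfrac12\phi_2 > 0$, from which one reads off that the four positive roots $(2,0)$, $(0,1)$, $(2,-1)$, $(-2,2)$ all evaluate positively, so that $\sum_{\alpha} |(\phi,\alpha)|^3 = 2\sum_{\alpha>0}(\phi,\alpha)^3$. The crucial point is that this sign assignment is valid throughout the interior of the chamber: this is exactly the statement, established just above, that $\mathrm{I}(\text{C}_2,\mathbf{10}\oplus\mathbf{5})$ has a unique chamber, so no weight hyperplane crosses the interior and each absolute value resolves to a single linear form. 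This is what makes the prepotential a single cubic polynomial (a single phase) rather than a piecewise-defined function, so that the stated identity holds globally on the Weyl chamber.

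Next I would assemble the three sums. The adjoint sum equals the root sum, since the two zero weights of $\mathbf{10}$ drop out, giving $S_{\mathrm{adj}} = 2\bigl[(2\phi_1)^3 + \phi_2^3 + (2\phi_1-\phi_2)^3 + (2\phi_2-2\phi_1)^3\bigr]$. For the vector representation, the nonzero weights $(0,\pm 1)$ and $(\pm 2,\mp 1)$ pair up, and using $2\phi_1 > \phi_2$ to resolve the sign of the weight $(-2,1)$ one finds that the $\phi_2^3$ contributions cancel, leaving $S_{\mathbf 5} = 2\phi_2^3 + 2(2\phi_1-\phi_2)^3$. Substituting into
\[
6\mathcal{F}_{\text{IMS}} = \tfrac{1}{2}\bigl[(1-n_{\mathbf{10}})\,S_{\mathrm{adj}} - n_{\mathbf 5}\,S_{\mathbf 5}\bigr],
\]
which is just the $\tfrac{6}{12}$ rescaling of the IMS formula together with $\sum_{\varpi\in\mathbf{10}}|(\phi,\varpi)|^3 = \sum_\alpha |(\phi,\alpha)|^3 = S_{\mathrm{adj}}$, and then expanding and collecting the monomials $\phi_1^3,\ \phi_1^2\phi_2,\ \phi_1\phi_2^2,\ \phi_2^3$ yields the four stated coefficients.

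The main obstacle is not any single hard step but the bookkeeping: correctly resolving each absolute value against the chamber inequalities (in particular the sign of $\phi_2 - 2\phi_1$, which is negative, and of $\phi_2 - \phi_1$, which is positive) and then expanding and combining the cubes without error. The conceptual content, namely that the answer is a single polynomial valid across the whole chamber, is already guaranteed by the uniqueness of the chamber; once the signs are fixed the remainder is a routine, if slightly lengthy, expansion.
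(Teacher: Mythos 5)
Your computation is correct and is exactly the direct evaluation the paper relies on (the paper states this proposition without a written proof): on the unique chamber $2\phi_2>2\phi_1>\phi_2>0$ the positive roots give $\tfrac12 S_{\mathrm{adj}}=8\phi_1^3+12\phi_1^2\phi_2-18\phi_1\phi_2^2+8\phi_2^3$, the vector weights give $\tfrac12 S_{\mathbf 5}=8\phi_1^3-12\phi_1^2\phi_2+6\phi_1\phi_2^2$, and substituting into $6\mathcal F_{\mathrm{IMS}}=\tfrac12[(1-n_{\mathbf{10}})S_{\mathrm{adj}}-n_{\mathbf 5}S_{\mathbf 5}]$ reproduces all four stated coefficients, with the single-phase claim correctly justified by the uniqueness of the chamber of $\mathrm{I}(\mathrm{C}_2,\mathbf{10}\oplus\mathbf{5})$. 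The only nit is presentational: the phrase ``the $\phi_2^3$ contributions cancel'' refers to the expanded form of $S_{\mathbf 5}$ (whose $\phi_2^3$ coefficient is $2-2=0$), not to the displayed unexpanded expression.
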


\begin{prop}
Under the identification $\phi_i =\alpha_i$, the prepotential $\mathcal{F}^\pm_{\mathrm{IMS}}$ matches $\mathcal{F}^\pm|_{{\alpha}_0=0}$ if and only if 
\begin{equation}
n_{\mathbf{5}}=3n_{\mathbf{10}}-3=3L^2, \quad n_{\mathbf{10}}=1+L^2. 
\end{equation}
If we impose the Calabi-Yau condition and assume the base is a surface, $1+K^2$ is the genus of the curve $V(s)$ of class $2L$ in the base and we have: 
\begin{equation}
n_{\mathbf{5}}=3g-3=3K^2, \quad n_{\mathbf{10}}=g=1+K^2. 
\end{equation}
In particular, the genus cannot be zero. 
\end{prop}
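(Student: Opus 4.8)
The plan is to extract the multiplicities $n_{\mathbf{5}}$ and $n_{\mathbf{10}}$ by matching the geometric (M-theory) prepotential, encoded by the triple intersection polynomial of the fibral divisors, against the field-theoretic IMS prepotential of the previous proposition, comparing the two cubics monomial by monomial. The triple intersection cubic $6\mathcal{F}=(\alpha_0 D_0+\alpha_1 D_1+\alpha_2 D_2)^3$ computed above is identified with six times the M-theory prepotential, the $\alpha_i$ being the K\"ahler parameters dual to the fibral divisors $D_i$; passing to the gauge sector amounts to discarding the affine direction, i.e.\ setting $\alpha_0=0$. Since I$(\text{C}_2,\mathbf{10}\oplus\mathbf{5})$ has a single chamber, the two signs coincide, $\mathcal{F}^+=\mathcal{F}^-=\mathcal{F}$, so the $\pm$ matching collapses to one identity.

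First I would specialize the triple intersection polynomial to $\alpha_0=0$, obtaining $6\mathcal{F}|_{\alpha_0=0}=8L^2(-4\alpha_1^3+3\alpha_1^2\alpha_2-\alpha_2^3)$. Under the identification $\phi_1=\alpha_1$, $\phi_2=\alpha_2$ I would set this equal to the $6\mathcal{F}_{\text{IMS}}$ of the preceding proposition and match the coefficients of the four cubic monomials $\phi_1^3,\ \phi_1^2\phi_2,\ \phi_1\phi_2^2,\ \phi_2^3$. This yields an over-determined linear system in the two unknowns,
\begin{align*}
-8n_{\mathbf{10}}-8n_{\mathbf{5}}+8 &=-32L^2, & 12n_{\mathbf{5}}-12n_{\mathbf{10}}+12 &=24L^2,\\
-6n_{\mathbf{5}}+18n_{\mathbf{10}}-18 &=0, & 8-8n_{\mathbf{10}} &=-8L^2.
\end{align*}
The $\phi_2^3$ equation gives $n_{\mathbf{10}}=1+L^2$ at once, and the $\phi_1\phi_2^2$ equation gives $n_{\mathbf{5}}=3n_{\mathbf{10}}-3=3L^2$; substituting both back shows the remaining two equations hold identically. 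This establishes the biconditional: the stated values make the two cubics equal by direct substitution, while conversely equality of the cubics forces the $\phi_2^3$ and $\phi_1\phi_2^2$ coefficients to match, which already determine $n_{\mathbf{10}}$ and $n_{\mathbf{5}}$ uniquely.

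In the Calabi-Yau case I would set $L=-K$, so $L^2=K^2$. As $S=V(s)$ is a smooth curve of class $2L$ in the surface $B$, adjunction gives $2g-2=S\cdot(S+K)=2L\cdot(2L-L)=2L^2$ (using $K=-L$), whence $g=1+L^2=1+K^2=n_{\mathbf{10}}$ and $n_{\mathbf{5}}=3g-3=3K^2$. As a cross-check, $D_2\cong\mathbb{P}_S[\mathscr{O}_S\oplus\mathscr{L}]$ together with the ruled-surface identity $D_2^3=8(1-g)$ in the Calabi-Yau case reproduces the coefficient $-8L^2$ of $\alpha_2^3$, confirming $n_{\mathbf{10}}=g$. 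Finally, non-negativity of the hypermultiplet count $n_{\mathbf{5}}=3g-3$ forces $g\geq 1$, so the genus cannot be zero.

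The step I expect to be genuine content rather than bookkeeping is the consistency of the over-determined system: that all four coefficient equations are solved by a single pair $(n_{\mathbf{5}},n_{\mathbf{10}})$ is the nontrivial check that the resolved geometry really engineers a C$_2$ gauge theory with adjoint and vector matter. Everything else—the restriction $\alpha_0=0$, the coefficient extraction, and the adjunction computation—is routine.
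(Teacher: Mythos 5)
Your proposal is correct and follows essentially the same route as the paper: restrict the triple intersection polynomial to $\alpha_0=0$, identify $\phi_i=\alpha_i$, and match coefficients against $6\mathcal{F}_{\mathrm{IMS}}$, where the vanishing of the $\phi_1\phi_2^2$ coefficient forces $n_{\mathbf{5}}=3(n_{\mathbf{10}}-1)$ and the $\phi_2^3$ coefficient then fixes $n_{\mathbf{10}}=1+L^2$, with the remaining equations holding automatically. Your additional verification that the over-determined system is consistent, the adjunction computation for $g$, and the positivity argument for why $g\neq 0$ are all correct and merely make explicit what the paper leaves implicit.
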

\begin{proof}
We can compare the potential $\mathcal{F}_{\mathrm{IMS}}$ with the triple intersection form $\mathcal{F}$  after setting ${\alpha}_0=0$.  
To have a match of the types of monomials present in the potential, we have to eliminate the coefficient of the term ${\alpha}_1{\alpha}_2^2$. 
This condition gives $n_{\mathbf{5}}=3(n_{\mathbf{10}}-1)$. We then get a  perfect match 
$\mathcal{F}_{\mathrm{IMS}}=\mathcal{F}$  by imposing $n_{\mathbf{10}}=1+L^2$. 
\end{proof}

\section{The SO(6)-model}
\label{sec:SO6}

\begin{defn}
The SO(6)-model is specified by the Weierstrass equation
	\begin{align}
	Y_0 ~:~ y^2z + a_1  x yz &= x^3 + ms x^2z + s^2 xz^2 ,
	\end{align}
where $m$ is a constant number different from $-2$, $0$, and $2$. The coefficient   $a_1$ is a generic section of $\mathscr{L}$, and $s$ is a smooth section of $\mathscr{L}^{\otimes 2}$. 
\end{defn}

\begin{rem}
\label{rem:SO6}
The SO(6)-model is obtained from the SO(5)-model by making the substitution $a_2 = \frac{1}{2}a^2_1 + m s$. This allows for the generic fiber to be of type I$_4^\text{s}$ and the discriminant to not have other reducible fiber types in codimension one. 
The Mordell--Weil group is still $\mathbb{Z}/2\mathbb{Z}$. 
\end{rem}

 \noindent The discriminant is
	\begin{align}
		\Delta  = - \frac{1}{16} s^4 ( a_1^2 - 8 s + 4 ms ) ( a_1^2 + 8 s + 4 m s).
	\end{align} 
$\Delta$ vanishes at order 4 at ${s}=0$, while $c_4$ and $c_6$ are non-zero there. It follows from  Tate's algorithm that the geometric generic fiber over $V(s)$ is of type I$_4$. Since $b_2$ restricted to $V(s)$ is a perfect square, it follows that the generic fiber over $V(s)$ is type I$_4^\text{s}$---this is the main difference between an SO(6)-model and an SO(5)-model.  
 As is clear from the discriminant, we should impose $m\neq \pm 2$ to avoid introducing a type I$_2$ fiber over $V(a_1)$. 
More specifically, we impose $0 \ne m \ne \pm{}2 $ to avoid introducing new sections that will modify the Mordell--Weil group.

\subsection{Crepant resolutions}
\label{SO6res}

There are two isomorphic resolutions connected  to each other by an Atiyah flop, induced by the inverse map of the Mordell--Weil group:
	\begin{align}
			\begin{array}{c}
			\begin{tikzpicture}
				\node(E0) at (0,0) {$ Y_0$};
				\node(E1) at (2.5,0) {$ Y_1$};
				\node(E3) at (5.5,0) {$ Y_2$};
				\node(E2) at (7.5,1.5) {$ Y^-$};
								\node(E2') at (7.5,-1.5) {$ Y^+$};
				\draw[big arrow] (E1) -- node[above,midway]{ $(x,y,s|e_1)$} (E0);
				\draw[big arrow] (E2) -- node[above,midway,sloped]{ $(y,e_1|e_2) $} (E1);
								\draw[big arrow] (E2') -- node[below,midway,sloped]{  $(y+a_1 x,e_1|e_2) $} (E1);
								\draw[<->, dashed] (E2) -- node[right]{ flop} (E2');
			\end{tikzpicture}
		\end{array}
	\end{align}
We first consider the resolution  $ Y^-$. 
The proper transform of $Y_0$ is 
	\begin{align}
		Y^- ~: ~y(e_2 y+ a_1 x ) z= e_1 x (x^2 + ms xz + s^2z^2),~~[ e_2 e_1 x: e_2^2 e_1 y:z],[x:e_2 y:s] ,[ y : e_1]. 
	\end{align}
The fibral divisors are:
	\begin{align}
		D_0 :&\quad s= y(e_2 y + a_1 x) z- e_1 x^3 =0,\quad\quad &&[ e_2 e_1 x: e_2^2 e_1 y:z] ,[ x : e_2 y:0],[y:e_1]\\
		D_1:&\quad e_1 = e_2 y + a_1 x = 0, \quad\quad &&[0:0:z],[x:e_2y:s],[y:0]\\
			D_2 :&\quad e_2 = x =0,\quad\quad &&[ 0:0:z],[0:0:s],[y:e_1] \\
		\label{eqn:SO6C3} D_3 :&\quad e_2=a_1 yz - e_1 (x^2 + msxz + s^2z^2 )=0,\quad && [ 0:0:z],[x:0:s],[y:e_1].
	\end{align} 
Over the degeneration locus $V(a_1)$, we find
\begin{align}
	Y^-~:~
\begin{cases}
C_0 \longrightarrow C_0 \\
C_1 \longrightarrow C_{13} \\
C_2\longrightarrow C_2 \\
C_3 \longrightarrow C_{13}+C_3^++C_3^-
\end{cases}~~~~~Y^+~:~
\begin{cases}
C_0 \longrightarrow C_0 \\
C_1 \longrightarrow C_{13}+C_1^++C_1^- \\
C_2\longrightarrow C_2 \\
C_3 \longrightarrow C_{13}
\end{cases}
\end{align}
where
\begin{align}
Y^-~:~
\begin{cases}
\begin{array}{llll}
		C_0 &:~~ s=e_2 y^2 z- e_1 x^3 =0,~~ &&[ e_2 e_1 x: e_2^2 e_1 y:z] ,[ x : e_2 y:0],[y:e_1]\\
		C_{13}&:~~ e_1 = e_2  = 0, \quad\quad &&[0:0:z],[x:0:s],[y:0]\\
			C_2 &:~~ e_2 = x =0,\quad\quad &&[ 0:0:z],[0:0:s],[y:e_1] \\
		C_3 ^\pm &:~~ e_2= x+\frac{1}{2}(-m \pm \sqrt{m^2-4})  sz=0,\quad && [ 0:0:z],[x:0:s],[y:e_1]
		\end{array}
		\end{cases}
	\end{align} 
and we note that $Y^+$ differs from the case of $Y^-$ described above only by the substitution $C_3^\pm \leftrightarrow C_1^\pm$. 
$Y^\pm$ is nonsingular provided $m\neq 2$. The geometry of the singular fibers is displayed in Figure \ref{fig:SO6}.

	\begin{figure}[htb]
	\begin{center}
			$
			\begin{array}{c}
				\begin{tikzpicture}[scale=1.1]
					\draw[fill=black!8!] (0,0) ellipse (3.7cm and 1.6cm);
					\node at (-2.4,-.6) {$V(p_+)$};
					\node at (2.4,-.6) {$V(p_-)$};
					\node at (0,-1) {\LARGE $B$};
					\node at (3,.2) {$V(s)$};
					\draw[ultra thick] (-2.3,1) --++ (4.6,-2);
					\draw[ultra thick] (2.2,1) --++ (-4.3,-2);
					\draw[ultra thick] (-3,0) --++ (6,0);
					\draw[dotted,thick] (-1.10,.55) -- (-1.10,2.3);
					\draw[dotted,thick] (1.9,.9) -- (1.9,2.4);
					\draw[dotted,thick] (-3,0) -- (-3,2.3);
					\draw[dotted,thick] (0,0) -- (0,3.5 );
					\node at (0,5.4) 
						{$ 
							\begin{tikzpicture}
					\node[draw,circle,ultra thick,xshift=200,label={$C_0$}](C0') at (0,0) {};	
					\node[draw,circle,ultra thick,xshift=200,scale=.65](C13) at (0,-1) {2};
			\node[draw,circle,ultra thick,xshift=200,label=left:{$C_3^+$}](C2) at (-1,-1) {};
			\node[draw,circle,ultra thick,xshift=200,label=right:{$C_2$}](C3-) at (1,-1) {};
			\node[draw,circle,ultra thick,xshift=200,label=below:{$C_3^-$}](C3+) at (0,-2) {};
			\node at (6.6,-.67) {$C_{13}$};
			\draw[ultra thick] (C0') -- (C13) -- (C3-);
			\draw[ultra thick] (C2) -- (C13)--(C3+);
		\end{tikzpicture}
						$};
					\node (I4) at (-3,4) 
						{$
							\begin{tikzpicture}[]
								\node[draw,ultra thick,circle,label=above:{$C_0$}](C0) at (0,0) {};
								\node[draw,ultra thick,circle,label=left:{$C_1$}](C1+) at (-1,-1) {};
								\node[draw,ultra thick,circle,label=right:{$C_3$}](C1-) at (1,-1) {};
								\node[draw,ultra thick,circle,label=below:{$C_2$}](C2) at (0,-2) {};;
								\draw[ultra thick] (C0) -- (C1+) -- (C2) -- (C1-) -- (C0);
							\end{tikzpicture}
						$};
					\node at (-1.3,2.5) 
						{$
							\begin{tikzpicture}[scale=2]
								\draw[scale=.5,domain=-1.2:1.2,variable=\x, ultra thick] plot({\x*\x-1,\x*\x*\x-\x-5});
							\end{tikzpicture}
						$};
					\node at (1.7,2.5) 
						{$
							\begin{tikzpicture}[scale=2]
								\draw[scale=.5,domain=-1.2:1.2,variable=\x, ultra thick] plot({\x*\x-1,\x*\x*\x-\x-5});
							\end{tikzpicture}
						$};
				\end{tikzpicture} 
				\\\\
				\begin{array}{|l|c|}
			\hline
				\text{Weierstrass model} &	 y^2 z + a_1 x y z= x^3 + ms x^2 z + s^2 x z^2 \\\hline
			\text{Discriminant} &\Delta=s^4 p_+ p_- =s^4 ((a_1^2 + 4m s)^2 - 64 s^2 )\\\hline
		\text{Matter representation} & \text{adjoint + vector w/ geometric weight}~ (1,0,-1)\\\hline
		\text{Representation multiplicities} & n_{\text{\bf{15}}}= 1 + L^2;~~ n_{\textbf{6}} = 2 L^2 \\
\hline				\text{Euler characteristic} & \frac{12 L }{1 + 2 L } c(TB)\\\hline
				\text{Triple intersections} &
		\begin{array}{c}			
		6 \mathcal F^- = 2 L^2(- 4 \alpha_0^3 + 3\alpha_0 \alpha_1^2  + 6\alpha_0 \alpha_1 \alpha_3+  3\alpha_0 \alpha_3^2 - 4 \alpha_1^3 \\ +3  \alpha_1^2 \alpha_2 - 6\alpha_1^2 \alpha_3 + 6\alpha_1 \alpha_2 \alpha_3 - 4 \alpha_2^3  +3 \alpha_2 \alpha_3^2 -6 \alpha_3^3),   \\ 
		6 \mathcal F^+ - 6 \mathcal F^- = 4 L^2 (\alpha_3 - \alpha_1)^3 \end{array}
		\\\hline	\end{array}
			\end{array}
			$
			\end{center}
		\caption{Summary of SO(6)-model geometry.}
		\label{fig:SO6}
	\end{figure}
\begin{thm}
The fiber over the generic point of $S$, which is Kodaira type I$_4^\text{s}$, degenerates along the cuspidal locus $V(s,a_2)$ of the elliptic fibration to a type I$_0^*$ fiber. We have\footnote{Given a variety $X$ and an ideal $I$, we denote by $\text{Bl}_I X$ the blowup of $X$ centered at $I$. }
	\begin{align}
	\begin{split}
D_0 &\cong \mathbb P_{S}[ \mathscr{O}_S \oplus \mathscr L]\\ 
D_2&\cong  \mathbb{P}_S[\mathscr{O}_S\oplus \mathscr L]\\
D_1 &\cong \mathbb P_{S}[ \mathscr{O}_S \oplus\mathscr{O}_S]\\
 D_3&\cong \text{Bl}_I (\mathbb P_{S}[ \mathscr{O}_S \oplus\mathscr{O}_S]), \quad I={(a_1, x^2 + m s x z + s^2 z}),
\end{split}
	\end{align}
where for D$_3$, the fibers of the $\mathbb{P}^1$-bundle are paramatrized by $[x:s]$ and the ideal $I$ represents $4K^2$ distinct points, namely, two distinct points on $[S]\cdot[a_1]=2K^2$ distinct fibers of $\mathbb P_{S}[ \mathscr{O}_S \oplus\mathscr{O}_S]$. 
\end{thm}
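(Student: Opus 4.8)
The plan is to establish the two assertions separately: first the codimension-two degeneration of the generic I$_4^\text{s}$ fiber to a Kodaira I$_0^*$ fiber, and then the explicit identification of each fibral divisor. Throughout I would work directly with the proper transform $Y^-$ and the divisors $D_0,\dots,D_3$ already written down in the excerpt, restricting their defining equations to the relevant loci and reading off the $\mathbb{P}^1$-bundle structures exactly as was done for the SO(5)-model. For the fiber degeneration, over the generic point of $S=V(s)$ the four curves $C_0,C_1,C_2,C_3$ form a cycle with dual graph $\widetilde{A}_3$, which I would check by restricting each $D_a$ to $S$ and computing pairwise intersections; since $b_2$ is a perfect square modulo $s$ the middle curves are individually rational over the base, so the fiber is genuinely I$_4^\text{s}$. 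Because on $S$ one has $a_2=a_1^2/4$, the cuspidal locus $V(s,a_2)$ coincides with $V(s,a_1)$, and I would read the degeneration from the tables in the excerpt: $C_1\to C_{13}$ and $C_3\to C_{13}+C_3^++C_3^-$, where $C_3^\pm$ are the two roots of the conic $x^2+msxz+s^2z^2$. Adding multiplicities gives $C_0+C_1+C_2+C_3\to C_0+2C_{13}+C_2+C_3^++C_3^-$, so $C_{13}$ carries multiplicity two and links the four reduced curves $C_0,C_2,C_3^+,C_3^-$; this is precisely the $\widetilde{D}_4$ incidence of an I$_0^*$ fiber. I would confirm transversality with the Jacobian criterion, using $m^2-4\ne 0$ to guarantee $C_3^+\ne C_3^-$.

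Next I would identify $D_0$, $D_2$, and $D_1$. For $D_0$ and $D_2$ the argument parallels the SO(5)-model: the defining equation can be solved so that the generic fiber is a smooth rational curve with a single affine parameter, and tracking the weights of the remaining projective coordinates exhibits $\mathbb{P}_S[\mathscr{O}_S\oplus\mathscr{L}]$, the $\mathscr{L}$-twist arising from the nontrivial relative grading of the two homogeneous fiber coordinates. The new feature is $D_1$: the equation $e_1=e_2y+a_1x=0$ forces the two coordinates parametrizing $C_1$ to transform with the same weight under the structure group of $S$, so the bundle is untwisted and $D_1\cong \mathbb{P}_S[\mathscr{O}_S\oplus\mathscr{O}_S]\cong S\times\mathbb{P}^1$. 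I would corroborate this triviality by computing, inside $D_1$, the self-intersections of the two sections cut out by $C_0\cap C_1$ and $C_1\cap C_2$ and checking that neither is a negative curve.

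The main obstacle, and the heart of the proof, is the identification of $D_3$ as a blowup. The defining equation $e_2=a_1yz-e_1(x^2+msxz+s^2z^2)=0$ lets me solve $[y:e_1]=[(x^2+msxz+s^2z^2):a_1z]$ whenever $a_1\ne 0$, giving a birational projection $\rho:D_3\to \mathbb{P}_S[\mathscr{O}_S\oplus\mathscr{O}_S]$ onto the trivial bundle with fiber $[x:s]$. I would show $\rho$ is an isomorphism away from the locus where this section degenerates, namely $V(a_1)$ together with the vanishing of the conic $x^2+msxz+s^2z^2$, which is exactly the ideal $I$. Over each of the $[S]\cdot[a_1]=2L^2=2K^2$ fibers above $V(a_1)\cap S$ the conic contributes two distinct roots, so $I$ consists of $4K^2$ points; over each such point $\rho$ contracts a single curve (one of $C_3^\pm$) while $C_{13}$ is the proper transform of the fiber. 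The delicate step is to verify that $\rho$ contracts exactly one reduced $(-1)$-curve over each point of $I$, so that $\rho$ is genuinely the blowdown of $\text{Bl}_I(\mathbb{P}_S[\mathscr{O}_S\oplus\mathscr{O}_S])$ rather than a more complicated birational modification; for this I would write the local equation of $D_3$ near a point of $I$ and identify $C_3^\pm$ as the exceptional curves, cross-checking the count via the triple intersection of $D_3$ against the expected $8(1-g)-4K^2$ contribution of the blown-up points.

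Finally, since $Y^+$ is obtained from $Y^-$ by the Mordell--Weil inversion flop exchanging $C_1^\pm\leftrightarrow C_3^\pm$, the roles of $D_1$ and $D_3$ are interchanged there, and the entire computation applies verbatim after relabeling; thus it suffices to carry out the analysis on $Y^-$. I expect steps one through three to be routine restrictions and pushforwards following the SO(5) template, with essentially all the difficulty concentrated in pinning down the precise blowup center $I$ and verifying the exceptional-curve structure of $\rho$.
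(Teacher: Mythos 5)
Your proposal is correct and follows essentially the same route as the paper: $D_0$, $D_1$, $D_2$ are identified by solving the fibral equations and tracking the classes of the remaining projective coordinates (with $[x/z:s]$ both of class $2L-E_1$ giving the trivial bundle for $D_1$), and $D_3$ is recognized as a blowup because its defining equation $a_1 y' - e_1(x'^2+msx'+s^2)=0$ in $[y':e_1]$ is exactly the incidence equation of $\mathrm{Bl}_I$ with $I=(a_1,\,x'^2+msx'+s^2)$. Your phrasing via the projection $\rho$ and its contracted $(-1)$-curves is just the blowdown-side view of the same identification, and your multiplicity count $C_0+2C_{13}+C_2+C_3^++C_3^-$ for the I$_0^*$ degeneration over $V(s,a_1)=V(s,a_2)$ matches the paper's tables.
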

\begin{proof}
First, following the same argument as for  the SO($5$)-model, we have 
	$
		D_0 \cong \mathbb P_{S}[ \mathscr O_S \oplus \mathscr L].
	$
Next, the defining equation for $C_1$ is $e_2 y =- a_1 x$, which gives the parametrization $[0:0:z][x:-a_1 x:s][y:0]$. Since the divisor is defined in the patch  $zy\neq  0$, we see that  $C_1$ is  parametrized by $[ x/z:s]$. We recall that both $x/z$ and $s$ are of  class $2L-E_1$. Thus, 
	$		D_1 \cong \mathbb P_{S}[ \mathscr O_S \oplus \mathscr O_S].
	$
Similarly, projective scaling implies $C_2$ is parametrized by homogeneous coordinates $ [y/s^2:e_1]$, whence 
	$		D_2 \cong \mathbb P_{S}[\mathscr O_S \oplus \mathscr L]$.   
	Finally observe that $C_3$, can be parametrized by 
	\begin{align}
	\label{eqn:C3}
		a_1 y^\prime - e_1 (x^{\prime 2} + ms x^\prime + s^2 ) =0,~~~~~ [ x^\prime:s],[y^\prime:e_1],~~~~~ (x^\prime,y^\prime)  = (x/z,y/z).
	\end{align} 
This can be viewed as the blowup of the trivial projective bundle $S \times \mathbb P^2_{[x^\prime:s]}$ along the ideal $I = (a_1 ,x^{\prime 2} + ms x^\prime + s^2)$. Rescaling $x^\prime, y^\prime$ by the unit $z$, we recover precisely the hypersurface equation $(\ref{eqn:SO6C3})$.
Since $x'$ and $s'$ are of the same class, the equation $x^{\prime 2} + ms x^\prime + s^2)$ gives two distinct points in the $\mathbb{P}^1$ fiber parametrized by $[x':s']$ over $S\cap V(a_1)$. That means that $D_3$ is obtained from a projective bundle $ \mathbb P_{S}[ \mathscr{O}_S \oplus \mathscr L]$ by blowing up two points 
(corresponding to $x^{\prime 2} + ms x^\prime + s^2=0$ on each of the fiber at the intersection of $S$ and $V(a_1)$, that is a total of $2[S][a_1]=4L^2$ points. 
\end{proof}
It follows from this theorem that $D_0^3=D_1^2=D_3^2=8(1-g)=-8 L^2$, and $D_3^3= 8(1-g) -4K^2=-12 L^3$.

\begin{thm}
The intersection numbers between the divisors $D_a$ and their generic fibers $C_a$ give the following intersection matrix:
	\begin{align}
	\label{eqn:SO6mat}
		\left( D_a C_b \right) &= \begin{tabular}{r}
			$\begin{matrix} C_0 & C_1 & C_2 & C_3 \end{matrix}~~~ $
			\\
			$\begin{matrix}D_0\\D_1\\D_2\\D_3\end{matrix} \begin{bmatrix} -2 & 1 &0 & 1 \\1 & -2 & 1 & 0 \\ 0 & 1 &-2 & 1 \\1 & 0 & 1 &-2 \end{bmatrix} $
		\end{tabular}
	\end{align}
More generally,
	\begin{align}
	\label{eqn:SO6matM}
		\phi_{3*}\left( D_a D_b \cdot \phi_3^* M \cap [{Y_{-}}] \right) &=\begin{tabular}{r}
			$\begin{matrix} C_0 & C_1 & C_2 & C_3 \end{matrix}~~~ $
			\\
			$\begin{matrix}D_0\\D_1\\D_2\\D_3\end{matrix} \begin{bmatrix} -2 & 1 &0 & 1 \\1 & -2 & 1 & 0 \\ 0 & 1 &-2 & 1 \\1 & 0 & 1 &-2 \end{bmatrix} $
		\end{tabular} (S \cdot M ) \cap [ B ] 
	\end{align}
where
	\begin{align}
		M \in A_*(B) ~~ \text{and} ~~ \phi_3 = \text{Bl}_{  {(}x,e_2  {)}} \circ \phi_2. 
	\end{align}
\end{thm}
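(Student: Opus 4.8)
The plan is to mirror the argument used for the SO($5$)-model: I will establish the general pushforward identity \eqref{eqn:SO6matM} and then recover the numerical intersection matrix \eqref{eqn:SO6mat} as the special case in which $M$ is the class of a generic point of $S$, so that $S\cdot M$ represents a single fiber over $S$. Both statements reproduce, columnwise, minus the affine Cartan matrix of $\widetilde{\text{A}}_3$, reflecting the cyclic adjacency of $C_0,C_1,C_2,C_3$ in the I$_4^{\text{s}}$ fiber, so it suffices to prove the refined formula.

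The first substantive step is to record the classes of the four fibral divisors in the Chow ring of the blown-up ambient space. From the resolution by the blowups $(x,y,s|e_1)$ and then $(y,e_1|e_2)$ one reads off $D_0=2L-E_1$ (since $s$ acquires a factor $e_1$ in the first blowup) and $D_1=E_1-E_2$ (since $e_1$ acquires a factor $e_2$ in the second), together with the proper-transform class $[Y_-]=3H+6L-2E_1-E_2$. The crux, and the main obstacle, is that $E_2$ is \emph{not} irreducible on $Y^-$: setting $e_2=0$ in the defining equation $y(e_2y+a_1x)z=e_1x(x^2+msxz+s^2z^2)$ factors as $x\big(a_1yz-e_1(x^2+msxz+s^2z^2)\big)=0$, so $E_2\cap Y^-=D_2\cup D_3$ and hence $E_2=D_2+D_3$ as classes, with neither summand individually a pulled-back exceptional class.

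To separate $D_2$ and $D_3$ I will introduce the auxiliary blowup $\text{Bl}_{(x,e_2)}$ with exceptional divisor $E_3$ and work with the refined morphism $\phi_3=\text{Bl}_{(x,e_2)}\circ\phi_2$. Since $V(x,e_2)$ is the defining locus of $D_2$, blowing it up assigns the separated classes $D_2=E_3$ and $D_3=E_2-E_3$ (so that $D_2+D_3=E_2$ is preserved) and updates the proper-transform class to $[Y_-]=3H+6L-2E_1-E_2-E_3$. Because $Y^-$ is already nonsingular for $m\neq 2$, this extra blowup is purely a computational device; one must therefore check that the resulting intersection numbers are independent of the auxiliary resolution, which is guaranteed by the projection formula once $D_2$ and $D_3$ are identified with the correct proper transforms.

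With all classes fixed, \eqref{eqn:SO6matM} follows from a direct pushforward of $D_aD_b\cdot\phi_3^*M\cap[Y_-]$, carried out stage by stage along the factorization of $\phi_3$ into the three blowups followed by the projective-bundle projection $\pi:X_0\to B$. At each blowup I apply Theorem \ref{Thm:Push} to descend one level in the exceptional class, and at the final stage Theorem \ref{Thm:PushH} to descend from the projective bundle to $B$ using $H=c_1(\mathscr{O}(1))$ and $L=c_1(\mathscr{L})$. Since the integrand is quadratic in the fibral divisors and linear in $\phi_3^*M$, only the terms surviving restriction to $S$ contribute, and the surviving coefficients assemble into the stated $4\times 4$ matrix times $(S\cdot M)\cap[B]$. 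The genuine difficulty lies entirely in the bookkeeping of the $E_2=D_2+D_3$ splitting; once that is handled, the remainder is a routine application of the two pushforward theorems.
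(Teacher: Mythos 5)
Your proposal is correct and follows essentially the same route as the paper: the same auxiliary blowup along $(x,e_2)$ to separate $D_2=E_3$ from $D_3=E_2-E_3$, the same divisor classes $D_0=2L-E_1$, $D_1=E_1-E_2$ and proper-transform class $3H+6L-2E_1-E_2-E_3$, the same staged pushforward via Theorems \ref{Thm:Push} and \ref{Thm:PushH}, and the same specialization of $M$ to a generic point of $S$ to recover the numerical matrix. Your added explanation of \emph{why} the extra blowup is needed (the reducibility $E_2\cap Y^-=D_2\cup D_3$) is a welcome elaboration of the paper's terser remark that $H,L,E_1,E_2$ do not generate the relevant divisor classes.
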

\begin{proof}
	The divisor class group of $X_2$ is not generated by the classes $H,L,E_1,E_2$. To circumvent this complication, we therefore blow up along $I(D_2) = (x,e_2)$ and factor out a single copy of the exceptional divisor $E_3$; we denote the proper transform of this blowup by $Y_-'$. The second equation above is then a direct pushforward computation with \cite{Euler}
	\begin{align}
	\begin{split}
		D_0 &= 2 L - E_1\\
		D_1 &= E_1 - E_2\\
		 D_2 &= E_3\\
		  D_3 &= E_2 - E_3\\
		  [{Y_-'}]& =( 3 H + 6 L  - 2 E_1 - E_2 - E_3 ).
	\end{split}
	\end{align}
Note that the equation (\ref{eqn:SO6mat}) is obtained from (\ref{eqn:SO6matM}) by choosing $M$ to be the generic point of the divisor $S$ of $B$.
\end{proof}
\noindent The lower right $3\times 3$ block of the intersection matrix (\ref{eqn:SO6mat}) determines (minus) the Cartan matrix of A$_3$:
	\begin{align}
	\label{eqn:SO6metric}
D_a  C_b 	&=\begin{tabular}{l}
				 \  ~ $C_1 $\hspace{.2cm}    $C_2$ \hspace{.1cm}     $C_3$
				 \\
				$
				 \begin{bmatrix}
				 -2 & 1 & 0 \\ 1 & -2 & 1 \\ 0 & 1 & -2 
				 \end{bmatrix}
				 \begin{matrix}
				  D_1\\ 
				  D_2\\ 
				  D_3
				 \end{matrix} %
$
				 \end{tabular}.
	\end{align}

\subsection{Matter representation}
We now the matrix of weight vectors one obtains from the curves over the codimension two degeneration locus $V(t,a_1)$:
\begin{align}
\label{eqn:E-}
			\begin{pmatrix}w_a(C_b)\end{pmatrix}_{Y^-}&=	{ \begin{tabular}{l}
				 \   $C_0 $\hspace{.3cm}    $C_{13}$ \hspace{.1cm}    $C_2$ \hspace{.1cm}     $C_3^+$ \hspace{.1cm}    $C_3^-$
				 \\
				$
				 \begin{bmatrix}
				  -2 & 1 &0 &0  & \ 0 \\ 
				  1 & -2 & 1 & 1 &\   1\\ 
				  0 & 1 & -2 & 0 &\   0\\ 
				  1 &0  & 1 & -1 & -1
				 \end{bmatrix}
				 \begin{matrix}
				  D_0 \\ 
				  D_1\\ 
				  D_2\\ 
				  D_3
				 \end{matrix} %
$
				 \end{tabular}}\\
			\begin{pmatrix} w_a(C_b) \end{pmatrix}_{Y^+}	&={ \begin{tabular}{l}
				 \   $C_0 $\hspace{.3cm}    $C_{1}^+$ \hspace{.1cm}    $C_1^-$ \hspace{.1cm}     $C_2$ \hspace{.1cm}    $C_{13}$
				 \\
				$
				 \begin{bmatrix}
				  -2 & \ 0 &\  0 &\  0  & \ 1 \\ 
				 \   1 & -1 & -1 &\  1 &\   0\\ 
				 \   0 & \  0 &\  0 & -2 &\   1\\ 
				  \  1 &\  1  &\  1 &\  1 & -2
				 \end{bmatrix}
				 \begin{matrix}
				  D_0 \\ 
				  D_1\\ 
				  D_2\\ 
				  D_3
				 \end{matrix} %
$
				 \end{tabular}}
\end{align}
We use the above intersection matrices to determine the matter representation. (In fact, only one of the two is required for this purpose, so we will focus on $Y^-$.) Deleting the affine node, we obtain the following matrix of weight vectors:
	\begin{align}
	\begin{pmatrix} w_a(C_b) \end{pmatrix} &=\begin{bmatrix} \varpi_1 & \varpi_2 & \varpi_3  & \varpi_4 \end{bmatrix} = 		\begin{tabular}{r} $\begin{matrix} C_{13} & C_2 & C_3^+ & C_3^- \end{matrix}~~~~~~ $  \\ $ \begin{bmatrix} 
			-2 & 1 & 1 & 1  \\ 1  &-2 &  0 & 0 \\ 0 & 1 & -1& -1
		\end{bmatrix} \begin{matrix} D_1 \\ D_2 \\ D_3\end{matrix}$ \end{tabular}.
	\end{align}
The new weight is third (and identically, the fourth) column of the above matrix, namely $\varpi_3= (1,0,-1)$. 

The saturation of $\{ \varpi_3\}$ is the $\mathbf{6}$ of SO(6), hence the matter supported on $V(a_1)$ is  in the vector representation $V_{\text{D}_3} = \bigwedge^2_{\text{A}_3}$.

\subsection{Euler characteristic}
	We now compute the generating function for the Euler characteristic, which is the same for both resolutions. 
	
\begin{thm}
The generating function for the Euler characteristic of a SO(6)-model is,
	\begin{align}
		\phi_* c(Y^\pm ) &= \frac{12 L t}{1 + 2 L t } c_t(TB).
	\end{align}
	where $c_t(TB)$ is the Chern polynomial of $B$ and the coefficient of $t^n$ gives the Euler characteristic of a SO(6)-model over a base of dimension $n$. 
	\end{thm}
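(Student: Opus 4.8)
The plan is to compute the pushforward $\phi_* c(TY^-)$ directly, since the Euler characteristic of $Y^-$ is the degree of the zero-dimensional part of $\phi_*\big(c(TY^-)\cap[Y^-]\big)$, and the stated generating function is nothing more than a packaging of this class in $A_*(B)$ in which the coefficient of $t^{\dim B}$ records $\chi(Y^-)$. Let $X_2$ denote the ambient space obtained from $X_0=\mathbb P[\mathscr O_B\oplus\mathscr L^{\otimes 2}\oplus\mathscr L^{\otimes 3}]$ by the two blowups $f$ and $g$ of Section~\ref{SO6res}. The resolved fibration $Y^-$ is a smooth hypersurface of class $[Y^-]=3H+6L-2E_1-E_2$ in $X_2$, the coefficients $-2E_1$ and $-E_2$ being exactly the powers of $e_1$ and $e_2$ factored out to keep the two blowups crepant. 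By adjunction one has $\iota_* c(TY^-)=\dfrac{c(TX_2)}{1+[Y^-]}\cdot[Y^-]$ as a class on $X_2$, so the whole computation reduces to the iterated pushforward
\[
\phi_* c(TY^-)=\pi_*\, f_*\, g_*\!\left(\frac{c(TX_2)\,[Y^-]}{1+[Y^-]}\right).
\]

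First I would assemble $c(TX_2)$ by applying Aluffi's formula (Theorem~\ref{Thm:AluffiCBU}) twice. Starting from the projective-bundle Chern class $c(TX_0)=(1+H)(1+H+2L)(1+H+3L)\,\pi^*c(TB)$, the first blowup $f$ along the transverse center $V(x,y,s)$, with $[x]=H+2L$, $[y]=H+3L$, $[s]=2L$, contributes one Aluffi factor; the factors $(1+H+2L)$ and $(1+H+3L)$ already present in $c(TX_0)$ cancel the corresponding terms of the Aluffi denominator, leaving a compact product on $X_1$. The second blowup $g$ along $V(y,e_1)$ contributes a further Aluffi factor whose denominator $(1+E_1)\,(1+f^*(H+3L)-E_1)$ again cancels against factors carried down from $c(TX_1)$. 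The essential bookkeeping is that of Notation~\ref{Notation:Blowups}: the symbols denote successive proper transforms, so before the second blowup the center has classes $[y]=f^*(H+3L)-E_1$ and $[e_1]=E_1$, and one must check these meet transversally---which is where the condition $m\neq 0,\pm2$ guaranteeing smoothness of $Y^-$ enters.

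Next I would push the resulting class down one blowup at a time. For $g_*$ I would invoke Theorem~\ref{Thm:Push} with $d=2$ and centers $Z_1=f^*(H+3L)-E_1$ and $Z_2=E_1$; this is the step I expect to be the main obstacle, because the center of the second blowup \emph{contains} the first exceptional divisor $E_1$, so the weights $M_\ell=\prod_{m\neq\ell}Z_m/(Z_m-Z_\ell)$ carry $E_1$-dependent denominators such as $2E_1-f^*(H+3L)$ that have to be expanded as formal geometric series and simplified only after the class has been pushed into $A_*(X_1)$. Once $g_*$ is carried out, the result is a power series in $E_1$ with coefficients pulled back from $X_0$, so $f_*$ is a second application of Theorem~\ref{Thm:Push}, now with $d=3$ and centers of classes $H+2L$, $H+3L$, $2L$. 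Finally $\pi_*$ is the projective-bundle pushforward of Theorem~\ref{Thm:PushH}, obtained by evaluating the auxiliary series at $H=-2L$ and $H=-3L$ and adding the residual $Q(0)/6L^2$ term. Collecting the output and tracking degree with the formal variable $t$ (so $L\mapsto Lt$ and $c_i(TB)\mapsto c_i(TB)t^i$) should assemble into the claimed rational function $\dfrac{12Lt}{1+2Lt}\,c_t(TB)$.

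To finish I would observe that the answer is insensitive to the choice of resolution. The varieties $Y^+$ and $Y^-$ are related by the Atiyah-type flop of Section~\ref{SO6res}, which is an isomorphism in codimension one and therefore preserves the topological Euler characteristic; equivalently, the entire computation for $Y^+$ is obtained from that for $Y^-$ by the symmetric relabeling $C_3^{\pm}\leftrightarrow C_1^{\pm}$ and yields the same pushforward. This justifies recording a single generating function $\phi_* c(Y^\pm)$ valid for both crepant resolutions.
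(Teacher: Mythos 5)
Your proposal is correct and follows essentially the same route as the paper, which states this theorem without writing out the computation and instead relies on precisely the machinery you describe: adjunction for the hypersurface $[Y^-]=3H+6L-2E_1-E_2$ in $X_2$, Aluffi's blowup formula (Theorem \ref{Thm:AluffiCBU}) applied twice, the blowup pushforward (Theorem \ref{Thm:Push}), the projective-bundle pushforward (Theorem \ref{Thm:PushH}), and invariance under the flop relating $Y^\pm$. The only quibble is a wording slip: the center $V(y,e_1)$ of the second blowup is \emph{contained in} $E_1$ (indeed $E_1$ is one of the two hypersurfaces cutting it out), not the reverse, though this does not affect the computation you outline.
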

	
\begin{table}	
	\begin{center}
	
	\begin{tabular}{|c|c|c|}
\hline
	$\dim B$ & Euler characteristic  & Calabi-Yau case\\
	\hline 
	$1$ & $12 L$  & $12 c_1$ \\
	\hline
	$2$ & $12 ( c_1 - 2L ) L$ & $-12 c_1^2$ \\
	\hline 
	$3$ & $12 ( 4L^2 - 2L c_1 + c_2 ) L $ & $ 12 c_1 (2 c_1^2 + c_2) $\\
	\hline 
	$4$ & $12 ( - 8 L^3 + 4 L^2 c_1 - 2 L c_2 + c_3 ) L$ & $ 12 c_1 ( - 4 c_1^3 - 2 c_1 c_2 + c_3)$ \\
	\hline 
	\end{tabular}
	\end{center}
	\caption{Euler characteristic of SO(6)-model for bases of dimension up to $4$. 
	The $i$th Chern class of the base is  denoted  $c_i$. 
	The Calabi-Yau cases are obtained by imposing $L=c_1$. }
	\label{Table:EulerSO6}
\end{table}

\subsection{Triple intersection numbers}

\begin{thm}
The triple intersection polynomial for the crepant resolution $Y^-$ is:
	\begin{align}
	\begin{split}
	\label{eqn:SO6triple}
6		\mathcal F^- = \left( \sum \alpha_i D_i \right)^3 = &\  2 L^2(- 4 \alpha_0^3 + 3\alpha_0 \alpha_1^2 + 3\alpha_0 \alpha_3^2+ 6\alpha_0 \alpha_1 \alpha_3)\\
		&+2L^2( - 4 \alpha_1^3- 4 \alpha_2^3-6 \alpha_3^3 + 3 (\alpha_1^2 + \alpha_3^2+2 \alpha_1 \alpha_3) \alpha_2 - 6\alpha_1^2 \alpha_3 ).
	\end{split}
	\end{align}
	The triple intersection polynomial for $Y^+$ is 
	\begin{align}
	\begin{split}
	\label{eqn:SO6tripleFlop}
6		\mathcal F^+ = \left( \sum \alpha_i D_i \right)^3 =\vspace{.5cm}   &\ 2 L^2(
- 4 \alpha_0^3 + 3\alpha_0 \alpha_1^2 + 3\alpha_0 \alpha_3^2+ 6\alpha_0 \alpha_1 \alpha_3)\\
&+ 2L^2(- 4 \alpha_3^3- 4 \alpha_2^3   -6\alpha_1^3+3 (\alpha_1^2+\alpha_3^2 + 2 \alpha_1 \alpha_3) \alpha_2    -6 \alpha_3^2 \alpha_1  )	.
	\end{split}
	\end{align}	 
	\end{thm}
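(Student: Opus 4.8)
The plan is to reduce the computation of the triple intersection polynomial to a sequence of pushforwards, exactly as in the intersection-number computation carried out just above for the matrix $\left(D_aC_b\right)$. Writing the general fibral divisor as $D=\sum_i\alpha_i D_i$, I would first substitute the explicit classes valid on the auxiliary model $Y_-'$, namely
\[
D_0=2L-E_1,\quad D_1=E_1-E_2,\quad D_2=E_3,\quad D_3=E_2-E_3,
\]
so that $D=2\alpha_0 L-(\alpha_0-\alpha_1)E_1-(\alpha_1-\alpha_3)E_2+(\alpha_2-\alpha_3)E_3$. The quantity to evaluate is $\phi_{3*}\bigl(D^3\cap[Y_-']\bigr)$ with $[Y_-']=3H+6L-2E_1-E_2-E_3$. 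Since everything is now expressed through the generators $H,L,E_1,E_2,E_3$ of the Chow ring of the ambient blown-up projective bundle, the task becomes a formal expansion of $D^3\,[Y_-']$ followed by a descent of the resulting monomials to $B$.

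The pushforward is carried out in four stages mirroring $\phi_3=\pi\circ g_1\circ g_2\circ g_3$, where $g_3=\mathrm{Bl}_{(x,e_2)}$, $g_2=\mathrm{Bl}_{(y,e_1)}$, and $g_1=\mathrm{Bl}_{(x,y,s)}$. For each of the three blowups I would invoke Theorem \ref{Thm:Push}: the exceptional class $E_\ell$ is eliminated by replacing a power series $\widetilde Q(E_\ell)$ with $\sum_m Q(Z_m)M_m$, where the $Z_m$ are the pulled-back classes of the hypersurfaces cutting out the blowup center and $M_m=\prod_{n\neq m}Z_n/(Z_n-Z_\ell)$. Peeling off $E_3$, then $E_2$, then $E_1$ in this order leaves a polynomial purely in $H$ and $L$ on the projective bundle $X_0=\mathbb P[\mathscr O_B\oplus\mathscr L^{\otimes2}\oplus\mathscr L^{\otimes3}]$; the final stage applies Theorem \ref{Thm:PushH}, evaluating $\pi_*$ via the substitutions $H=-2L$ and $H=-3L$ together with the $Q(0)/6L^2$ term. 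Collecting the coefficients of the monomials $\alpha_i\alpha_j\alpha_k$ should then reproduce $6\mathcal F^-$ as stated, with the factors of $L^2$ appearing because $B$ is a surface and all base classes are expressed through $L=c_1(\mathscr L)$.

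For the second resolution $Y^+$ I would avoid repeating the entire pushforward. The two resolutions are exchanged by the Atiyah flop induced by the Mordell--Weil involution, which replaces the center $(y,e_1)$ of the second blowup by $(y+a_1x,e_1)$ and interchanges the curves $C_1\leftrightarrow C_3$, hence the divisors $D_1\leftrightarrow D_3$, as is visible by comparing the two degeneration patterns over $V(a_1)$. Consequently $6\mathcal F^+$ is obtained from $6\mathcal F^-$ by the relabeling $\alpha_1\leftrightarrow\alpha_3$; since the first line of $6\mathcal F^-$ is already symmetric under this exchange, only the cubic self-terms are affected, and a one-line simplification yields the displayed $6\mathcal F^+$ as well as the clean relation $6\mathcal F^+-6\mathcal F^-=4L^2(\alpha_3-\alpha_1)^3$. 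Alternatively one can rerun the pushforward verbatim with the flopped center, obtaining the same answer.

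The main obstacle is bookkeeping rather than anything conceptual. Because the divisor class group of $X_2$ is not freely generated by $H,L,E_1,E_2$, the divisor $D_2$ has no clean expression there, which forces the auxiliary blowup along $(x,e_2)$ introducing $E_3$; one must check that the proper transform $Y_-'$ indeed carries the class $3H+6L-2E_1-E_2-E_3$ and that every center remains a transverse complete intersection so that Theorem \ref{Thm:Push} applies at each stage. Tracking which $Z_m$ enter each factor $M_m$ through three successive blowups, while correctly handling the cross terms in the cube $D^3$, is where errors are most likely to arise; organizing the expansion by total degree in the exceptional divisors before applying each pushforward is the most reliable way to keep this under control.
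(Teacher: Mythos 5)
Your proposal is correct and follows essentially the same route as the paper: the paper computes these triple intersections by exactly the pushforward machinery of Theorems \ref{Thm:Push} and \ref{Thm:PushH}, using the auxiliary blowup along $(x,e_2)$ with the classes $D_0=2L-E_1$, $D_1=E_1-E_2$, $D_2=E_3$, $D_3=E_2-E_3$ and $[Y_-']=3H+6L-2E_1-E_2-E_3$, just as in the proof of the preceding intersection-matrix theorem. Your shortcut for $Y^+$ via the involution $\alpha_1\leftrightarrow\alpha_3$ is also the symmetry the paper itself records, and your difference formula $6\mathcal F^+-6\mathcal F^-=4L^2(\alpha_3-\alpha_1)^3$ agrees with the stated result.
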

\noindent Observe that $\mathcal F^\pm$ are related to each other by the involution $\alpha_1\leftrightarrow \alpha_3$, and the difference between the two polynomials is: 
	\begin{equation}
	6\mathcal{F}^--6 \mathcal{F}^+=4 L^2 (\alpha_1-\alpha_3)^3.
	\end{equation}

\subsection{Counting 5D matter multiplets}
The element $\phi$ of the Cartan subalgebra $\mathfrak{h}$ is expressed in a basis of simple coroots. 
The weights $\varpi$ and the roots $\alpha$ are expressed in the basis canonically dual to the basis of simple coroots, namely the basis of fundamental weights.\footnote{This basis is also referred to as the \emph{Dynkin basis} in some references.}  

\subsection{Matter representation}
\begin{table}[htb]
\begin{center}
\begin{tabular}{|c|c|}
\hline 
$\mathbf{6}$  & $\mathbf{15}$ \\ 
\hline 
$
\begin{array}{c}
\boxed{\  0,\   1,\ 0}\\
\\
\boxed{\  1,\ -1,\ 1}\\
\\
 \boxed{\  1,\   0,\ -1}\\
\\
\boxed{\ -1,\ 0, \  1}\\
\\
\boxed{\ -1,\ 1,\ -1}\\
\\
\boxed{ \ 0,\ -1 ,\ 0 }\\ 
\\
\end{array}
$
& 
$
\begin{array}{c}\\
\boxed{\ 1,\ 0,\  1}\\
\\
\boxed{\ -1,\ 1,\ 1}\quad  \boxed{\ 1, \ 1,\ -1}\\
\\
\boxed{\  -1,\ 2, \  -1}\quad   \boxed{\  0,\ -1, \ 2} \quad  \boxed{\ 2, \ -1, \  0} \\
\\
\boxed{\  0,\ 0, \  0}\quad   \boxed{\  0,\ 0, \  0}\quad \boxed{\  0,\ 0, \  0}\\
\\
\boxed{\  -2,\ 1, \  0}\quad   \boxed{\  0,\ 1, \ -2} \quad  \boxed{\ 1, \ -2, \  1} \\
\\
\boxed{\ -1,\ -1,\ 1}\quad  \boxed{\ 1, \ -1,\ -1}\\
\\
\boxed{\ -1,\ 0,\ - 1}\\
\\
\end{array}
$
\\
\hline 
\end{tabular}
\end{center}
\caption{Weights of the representations $\mathbf{6}$, $\mathbf{15}$ of A$_3$ expanded in the basis of fundamental weights. \label{Table:RepA3} }  
\end{table}

The relevant part of the prepotential  is 
\begin{equation}
\mathcal{F}_{\text{IMS}}= \frac{1}{12}\left({\sum_{\alpha} \Big|{(\phi, \alpha)}\Big|^3 - \sum_{\mathbf{R}} n_{\mathbf{R}} \sum_{\varpi}  \Big|(\phi, \varpi)\Big|^3}\right).
\end{equation}

The open fundamental Weyl chamber is the subset of elements of the Cartan subalgebra $\mathfrak{h}$  with positive intersection with all the simple roots. The simple roots of $\text{A}_3$ are:
\begin{equation}
\boxed{2, -1,  0 } \quad \boxed{-1,   2 ,  -1}\quad \boxed{0, -1 , 2 }.
\end{equation}
It follows that the open fundamental Weyl chamber is the cone of $\mathfrak{h}$ defined by:
\begin{equation}
2\phi_1 -\phi_2 >0 \quad -\phi_1 + 2 \phi_2 -\phi_3 >0 \quad -\phi_2 + 2 \phi_3>0.
\end{equation}
\begin{rem}
A weight defines a hyperplane through the origin that intersect the open fundamental Weyl chamber if and only if  both $w$ and $-w$ are not dominant weights.
\end{rem} 
For the SO(6)-model, the representations that we consider are the adjoint and the vector representation. 
By definition, the hyperplanes defined by the  roots  are the walls of the fundamental Weyl chamber.  
The only  non-dominant weight of the vector representation is up to a sign 
  $\boxed{1, 0,  -1}$ with hyperplane $\phi_1-\phi_3=0$. 
It follows thatthe hyperplane arrangement  I$(\text{A}_3, \textbf{6})$ has  two chambers labeled by the sign of the linear form 
\begin{equation}
-\phi_1+\phi_3. 
\end{equation}

\begin{prop}
The prepotential for 
$\mathfrak{so}_6$ with $n_{\textbf{15}}$ matter multiplets transforming in the adjoint representation and $n_\textbf{6}$ matter multiplets transforming in the vector representation can be seen to depend on two phases, corresponding to the sign of the linear form $-\phi_1+\phi_3$. 
The corresponding prepotentials are:\\
\begin{subequations}
\begin{align}
& \begin{aligned}
6 \mathcal{F}^+_{\mathrm{IMS}}=& ~
(8 - 8 n_{\textbf{15}} - 2 n_\textbf{5}) \phi_1^3 + (8 - 8 n_{\textbf{15}}) \phi_3^3+  (8 -    8 n_{\textbf{15}}) \phi_2^3 +3 n_\textbf{6} (\phi_1+\phi_3)^2 \phi_2 \\
&  -  6 n_\textbf{6} \phi_1 \phi_3^2 + 
(-6 + 6 n_{\textbf{15}} - 3 n_\textbf{6}) (\phi_1 +\phi_3)\phi_2^2
\end{aligned}
\\\nonumber \\
&
\begin{aligned}
6 \mathcal{F}^-_{\mathrm{IMS}}=& ~
(8 - 8 n_{\textbf{15}} - 2 n_\textbf{6}) \phi_3^3+ (8 - 8 n_{\textbf{15}}) \phi_1^3  +  (8 -    8 n_{\textbf{15}}) \phi_2^3+3 n_\textbf{6} (\phi_1+\phi_3)^2 \phi_2 \\
&  -  6 n_\textbf{6} \phi_1^2 \phi_3 + 
(-6 + 6 n_{\textbf{15}} - 3 n_\textbf{6}) (\phi_1 +\phi_3)\phi_2^2.
\end{aligned}
\end{align}
\end{subequations}\\
where the $\pm{}$ superscript appearing in the symbol $ \mathcal{F}^\pm_{\mathrm{IMS}}$ is correlated with the $\mathop{sign}(-\phi_1+\phi_3)=\pm$. The prepotentials $ \mathcal{F}^\pm_{\mathrm{IMS}}$  are related to each other by the transposition  $\phi_1 \leftrightarrow \phi_3$. 
\end{prop}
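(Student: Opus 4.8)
The plan is to evaluate the IMS prepotential
$$\mathcal{F}_{\text{IMS}}= \frac{1}{12}\Big(\sum_{\alpha} |(\phi, \alpha)|^3 - \sum_{\mathbf{R}} n_{\mathbf{R}} \sum_{\varpi} |(\phi, \varpi)|^3\Big)$$
directly, specializing $\mathbf{R}$ to the adjoint $\mathbf{15}$ and the vector $\mathbf{6}$ of $\mathfrak{so}_6\cong\text{A}_3$. The first simplification I would use is that the weights of the adjoint representation are precisely the roots together with $\mathrm{rank}=3$ copies of the zero weight; since the zero weights do not contribute to the cubic, the adjoint sum $\sum_{\varpi\in\mathbf{15}}|(\phi,\varpi)|^3$ coincides with the root sum $\sum_{\alpha}|(\phi,\alpha)|^3$. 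Hence
$$6\mathcal{F}_{\text{IMS}} = \tfrac12\Big[(1-n_{\mathbf{15}})\sum_{\alpha}|(\phi,\alpha)|^3 - n_{\mathbf{6}}\sum_{\varpi\in\mathbf{6}}|(\phi,\varpi)|^3\Big],$$
so the entire computation reduces to evaluating two fixed sums of cubes and resolving the absolute values.

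Next I would tabulate the linear forms $(\phi,\cdot)$ from Table \ref{Table:RepA3}, using $(\phi,\varpi)=\sum_i\phi_i\varpi_i$ (the pairing of simple coroots against fundamental weights). Grouping each form with its negative collapses $\sum_{\alpha}|(\phi,\alpha)|^3$ to twice the sum over the six positive roots, and $\sum_{\varpi\in\mathbf{6}}|(\phi,\varpi)|^3$ to $2\big(|\phi_2|^3+|\phi_1-\phi_2+\phi_3|^3+|\phi_1-\phi_3|^3\big)$. The key step is then to fix the signs of these forms inside the open fundamental Weyl chamber defined by $2\phi_1-\phi_2>0$, $-\phi_1+2\phi_2-\phi_3>0$, $-\phi_2+2\phi_3>0$. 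Every positive root is positive there by definition, so the root sum is a single cubic polynomial valid throughout the chamber. For the vector weights, the chamber inequalities give $\phi_2>0$ and $\phi_1+\phi_3>\phi_2$, so both $\phi_2$ and $\phi_1-\phi_2+\phi_3$ are positive; the one remaining form $\phi_1-\phi_3$ is sign-indefinite. This is exactly the earlier observation that only the weight $\boxed{1,0,-1}$ has a hyperplane cutting the interior, and it establishes that $\mathcal{F}_{\text{IMS}}$ is a genuine two-piece piecewise-cubic function whose pieces are separated by $\phi_1-\phi_3=0$.

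With the signs resolved I would split into the two chambers according to $\mathrm{sign}(\phi_1-\phi_3)$, replace $|\phi_1-\phi_3|^3$ by $\pm(\phi_1-\phi_3)^3$, and expand. The only piece sensitive to the chamber is the contribution $n_{\mathbf{6}}(\phi_1-\phi_3)^3$, whose sign flips, so $6\mathcal{F}^+_{\text{IMS}}-6\mathcal{F}^-_{\text{IMS}}$ is a single cubic proportional to $(\phi_1-\phi_3)^3$; moreover $\phi_2$, $\phi_1-\phi_2+\phi_3$ and the full root sum are invariant under $\phi_1\leftrightarrow\phi_3$, which interchanges the two chambers and yields the claimed symmetry $\mathcal{F}^+_{\text{IMS}}\leftrightarrow\mathcal{F}^-_{\text{IMS}}$ as an internal consistency check. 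Collecting monomials then produces the two displayed polynomials. The main obstacle is conceptual rather than computational: it is the sign bookkeeping of the second step—verifying from the three chamber inequalities that all roots and the forms $\phi_2$ and $\phi_1-\phi_2+\phi_3$ are positive while $\phi_1-\phi_3$ alone is indefinite—since an error there would miss a wall or introduce a spurious one. The final expansion is routine, and I would cross-check it against a few low-order coefficients and against the triple-intersection polynomials $6\mathcal{F}^\pm$ of the previous subsection with $\alpha_0$ set to zero, which must agree under the identification $\phi_i=\alpha_i$.
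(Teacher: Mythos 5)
Your proposal is correct and is essentially the paper's own (implicit) argument: the proposition is stated without a separate written proof, and the intended derivation is exactly this direct evaluation of the IMS sum, using that the adjoint contribution reduces to the root sum, that among the weights of $\mathbf{6}$ only the form $\phi_1-\phi_3$ changes sign on the open fundamental Weyl chamber, and then expanding in the two subchambers. One small point: your convention (branch label $\pm$ given by $\mathrm{sign}(\phi_1-\phi_3)$) is the one that actually reproduces the displayed polynomials and the difference formula $6\mathcal{F}^-_{\mathrm{IMS}}-6\mathcal{F}^+_{\mathrm{IMS}}=-2n_{\mathbf{6}}(-\phi_1+\phi_3)^3$, whereas the paper's prose attaches $\pm$ to $\mathrm{sign}(-\phi_1+\phi_3)$; this is a sign slip in the paper (as is the stray $n_{\mathbf{5}}$ that should read $n_{\mathbf{6}}$), not a gap in your argument.
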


\begin{rem}
The difference between the two $ \mathcal{F}^\pm_{\mathrm{IMS}}$ is proportional to the cube of the linear form:
\begin{equation}
6\mathcal{F}^-_{\mathrm{IMS}}-6\mathcal{F}^+_{\mathrm{IMS}}
=-2 n_\textbf{6} (-\phi_1 +\phi_3)^3
\end{equation}
\end{rem}

\begin{rem}
The prepotential $\mathcal{F}^\pm_{\mathrm{IMS}}$ have monomials of the type $\phi_1 \phi^2_2$  and $\phi_3 \phi^2_2$ that are not present in 
$\mathcal{F}^+$ nor $\mathcal{F}^-$.  Thus, matching the prepotential and the triple intersection numbers will force 
their coefficients to be zero, namely $n_\textbf{6}=2n_{\bf 15}-2$. 
\end{rem}
\begin{prop}
Under the identification $\phi_i = \alpha_i$, the prepotential $\mathcal{F}^\pm_{\mathrm{IMS}}$ matches the triple intersection form $\mathcal{F}^\pm|_{  {\alpha}_0=0}$ if and only if 
\begin{equation}
n_\textbf{6}=2n_{\textbf{15}}-2=2L^2, \quad n_\textbf{15}=1+L^2. 
\end{equation}
If we impose the Calabi-Yau condition and assume the base is a surface, $1+K^2$ is the genus of the curve $V(s)$ of class $2L$ in the base and we have: 
\begin{equation}
n_\textbf{6}=2g-2=2K^2, \quad n_\textbf{15}=g=1+K^2. 
\end{equation}
In particular, the  genus cannot be zero. 
\end{prop}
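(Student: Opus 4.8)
The plan is to match two cubic polynomials coefficient by coefficient, after first restricting the geometric triple intersection forms to the hyperplane $\alpha_0=0$. The restriction is imposed because the one-loop prepotential $\mathcal F^\pm_{\mathrm{IMS}}$ only involves the simple coroots $\phi_1,\phi_2,\phi_3$ of $\mathfrak{so}_6\cong\text{A}_3$, whereas the fibral divisor $D_0$ attached to the affine node of $\widetilde{\text{A}}_3$ contributes the extra direction $\alpha_0$; under the identification $\phi_i=\alpha_i$ ($i=1,2,3$) the geometric cubic and the field-theoretic cubic are therefore to be compared on the locus $\alpha_0=0$. Setting $\alpha_0=0$ in \eqref{eqn:SO6triple} and \eqref{eqn:SO6tripleFlop} kills every monomial containing $\alpha_0$, leaving a cubic in $\alpha_1,\alpha_2,\alpha_3$ to be matched against $6\mathcal F^\pm_{\mathrm{IMS}}$.

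First I would establish the ``only if'' direction. Expanding the term $(-6+6n_{\textbf{15}}-3n_{\textbf{6}})(\phi_1+\phi_3)\phi_2^2$ in $\mathcal F^\pm_{\mathrm{IMS}}$ produces the monomials $\phi_1\phi_2^2$ and $\phi_3\phi_2^2$. Inspecting $\mathcal F^\pm|_{\alpha_0=0}$, the only $\alpha_2$-linear contribution quadratic in $\{\alpha_1,\alpha_3\}$ is $3(\alpha_1+\alpha_3)^2\alpha_2$, so no $\alpha_1\alpha_2^2$ or $\alpha_3\alpha_2^2$ monomial is present. Matching forces this coefficient to vanish, giving $n_{\textbf{6}}=2n_{\textbf{15}}-2$, exactly the content of the preceding remark. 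With this relation imposed, matching the coefficient of $\phi_1^2\phi_2$, namely $3n_{\textbf{6}}$, against the coefficient $6L^2$ of $\alpha_1^2\alpha_2$ yields $n_{\textbf{6}}=2L^2$; and matching the coefficient of $\phi_1^3$ (equivalently $\phi_2^3$), namely $8-8n_{\textbf{15}}$, against the coefficient $-8L^2$ of $\alpha_1^3$ yields $n_{\textbf{15}}=1+L^2$. These are consistent with $n_{\textbf{6}}=2n_{\textbf{15}}-2$.

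For the ``if'' direction I would substitute $n_{\textbf{15}}=1+L^2$ and $n_{\textbf{6}}=2L^2$ back into $6\mathcal F^\pm_{\mathrm{IMS}}$ and verify that every remaining coefficient agrees; for instance, in the $\mathcal F^-$ phase the coefficient $8-8n_{\textbf{15}}-2n_{\textbf{6}}=-12L^2$ of $\phi_3^3$ matches that of $\alpha_3^3$, while $-6n_{\textbf{6}}=-12L^2$ of $\phi_1^2\phi_3$ matches that of $\alpha_1^2\alpha_3$, and one checks that no spurious monomial such as $\phi_1\phi_3^2$ appears on either side. The two phases are handled simultaneously: both the geometric cubic $\mathcal F^\pm$ and the field-theoretic cubic $\mathcal F^\pm_{\mathrm{IMS}}$ are interchanged by the involution $\alpha_1\leftrightarrow\alpha_3$, which is precisely the Atiyah flop of Section \ref{SO6res}, so verifying one phase verifies the other.

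Finally, for the Calabi--Yau statement I would invoke adjunction on the smooth surface $B$: with $L=-K$, the gauge-group divisor $S=V(s)$ has class $2L=-2K$, so $2g-2 = S\cdot(S+K) = (-2K)\cdot(-K) = 2K^2$, hence $g=1+K^2$. Since $L^2=K^2$ this gives $n_{\textbf{15}}=g$ and $n_{\textbf{6}}=2g-2$. That $g$ cannot vanish then follows because $n_{\textbf{6}}=2g-2$ counts hypermultiplets and must be non-negative, forcing $g\geq 1$. I do not expect a genuine obstacle here, as the matching is linear algebra in the coefficients; the one point demanding care — and the one I would emphasize — is the conceptual justification for restricting to $\alpha_0=0$ together with the correct pairing of the geometric phase $\mathcal F^\pm$ with the Coulomb sub-chamber $\mathrm{sign}(-\phi_1+\phi_3)=\pm$, since an incorrect pairing would produce a false mismatch.
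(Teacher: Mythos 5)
Your proposal is correct and follows essentially the same route as the paper: restrict to $\alpha_0=0$, force the coefficient of $(\phi_1+\phi_3)\phi_2^2$ to vanish (which gives $n_{\mathbf{6}}=2n_{\mathbf{15}}-2$), then fix $n_{\mathbf{15}}=1+L^2$ by matching the remaining coefficients, with the two phases exchanged by $\alpha_1\leftrightarrow\alpha_3$. Your explicit coefficient checks and the adjunction computation $2g-2=S\cdot(S+K)=2K^2$ simply fill in details the paper leaves implicit.
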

\begin{proof}
We can compare the prepotential $\mathcal{F}^\pm_{\mathrm{IMS}}$ with the triple intersection form $\mathcal{F}^\pm$  after setting $\alpha_0=0$.  
To have a match of the types of monomials present in the potential, we have to eliminate the coefficients of the terms $\phi_1 \phi_2^2$ and $\phi_3 \phi_2^2$  in  $\mathcal{F}^\pm_{\mathrm{IMS}}$.
This condition imposes $n_\textbf{6}=2n_\textbf{15}-2$. We then get a  perfect match 
$\mathcal{F}^\pm_{\mathrm{IMS}}=\left. \mathcal{F}^\pm\right|_{\alpha_0=0}$  by imposing $n_\textbf{15}=1+L^2$. 
\end{proof}

\section{Hodge numbers}
\label{sec:hodge}

In this section, we compute the Hodge numbers of the SO(3), SO(5), and SO(6)-models assuming that the base $B$ is a smooth rational surface and the resolved elliptic fibration is a Calabi-Yau threefold. 
As usual for these models, the Mordell--Weil group has trivial rank and  torsion $\mathbb{Z}/2\mathbb{Z}$.

\begin{thm}\label{Thm:HodgeSO}
Let $B$ be a smooth compact rational surface with canonical class $K$. Let $Y\longrightarrow B$ be the crepant resolution of an SO(3), SO(5), or SO(6)-model over $B$. 
If $Y$ is a Calabi-Yau threefold then the non-zero  Hodge numbers of $Y$ are $h^{1,1}=h^{3,3}=h^{3,0}=h^{0,3}=1$, and $h^{1,1}$, $h^{2,1}=h^{1,2}=\mathrm{dim}\  H^1 (\Omega^2_Y)$ given by Table \ref{Table:TopologyCY3}.
The Euler characteristic of a Calabi-Yau threefold is 
\begin{equation}
\chi(Y)=2(h^{1,1}-h^{2,1}).
\end{equation}
\end{thm}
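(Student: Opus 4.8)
The plan is to reduce everything to the two independent Hodge numbers $h^{1,1}$ and $h^{2,1}$, since the rest of the diamond is forced once $Y$ is a Calabi--Yau threefold. First I would establish $h^{1,0}(Y)=h^{2,0}(Y)=0$ from the Leray spectral sequence of the elliptic fibration $\varphi:Y\to B$. For a Weierstrass-type fibration with section one has $\varphi_*\mathcal{O}_Y=\mathcal{O}_B$ and $R^1\varphi_*\mathcal{O}_Y=\mathscr{L}^{-1}$, while $R^2\varphi_*\mathcal{O}_Y=0$ because the fibers are curves. In the Calabi--Yau case $\mathscr{L}=\mathcal{O}_B(-K)$, so the relevant $E_2$-terms are $H^1(B,\mathcal{O}_B)$, $H^0(B,K_B)$ for $H^1(Y,\mathcal{O}_Y)$, and $H^2(B,\mathcal{O}_B)$, $H^1(B,K_B)$ for $H^2(Y,\mathcal{O}_Y)$. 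Since $B$ is a rational surface, $H^i(B,\mathcal{O}_B)=0$ for $i>0$ and $H^0(B,K_B)=0$, and by Serre duality $H^1(B,K_B)\cong H^1(B,\mathcal{O}_B)^{\vee}=0$, $H^2(B,\mathcal{O}_B)\cong H^0(B,\mathcal{O}_B)^{\vee}$ is killed in the same way. Hence $h^{0,1}(Y)=h^{0,2}(Y)=0$. Combined with the Calabi--Yau condition $h^{3,0}=1$ and Poincaré/Serre duality, this pins down the full diamond except for $h^{1,1}$ and $h^{2,1}=h^{1,2}=\dim H^1(\Omega^2_Y)$, and taking the alternating sum over the diamond yields the stated identity $\chi(Y)=2(h^{1,1}-h^{2,1})$.

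Next I would compute $h^{1,1}(Y)$. Because $h^{2,0}(Y)=0$, the Lefschetz $(1,1)$-theorem gives $h^{1,1}(Y)=\rho(Y)$, so it is enough to compute the Picard number through the Shioda--Tate--Wazir formula for elliptic threefolds with a section,
$$
\rho(Y)=\rho(B)+1+\mathrm{rk}\,\mathrm{MW}(Y)+\sum_{v}(m_v-1),
$$
the sum running over the codimension-one components of the discriminant, with $m_v$ the number of fibral divisors over the generic point of that component. Here $\mathrm{rk}\,\mathrm{MW}(Y)=0$ since the Mordell--Weil group is the torsion group $\mathbb{Z}/2\mathbb{Z}$, and only the component $S$ carries a reducible fiber. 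The number of fibral divisors is read off from the dual graphs $\widetilde{\text{A}}_1$, $\widetilde{\text{C}}_2^t$, $\widetilde{\text{A}}_3$ of Figure \ref{fig:fibergraphs}, giving $m_S-1=1,2,3$ (equivalently the rank of the gauge algebra) for the SO($3$), SO($5$), SO($6$)-models. Using $\rho(B)=h^{1,1}(B)=10-K^2$ for a rational surface (Noether's formula), this gives $h^{1,1}(Y)=12-K^2$, $13-K^2$, and $14-K^2$ respectively.

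Finally I would read the Euler characteristic off the generating functions already established, specializing Tables \ref{Table:EulerSO3}, \ref{Table:EulerSO5}, \ref{Table:EulerSO6} to $\dim B=2$ and $L=-K$, namely $\chi(Y)=-36K^2$, $-20K^2$, $-12K^2$, and then solve $h^{2,1}=h^{1,1}-\tfrac12\chi(Y)$. This produces $h^{2,1}=12+17K^2$, $13+9K^2$, and $14+5K^2$, filling in Table \ref{Table:TopologyCY3}.

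The main obstacle is $h^{1,1}$: one must justify $\rho(Y)=h^{1,1}(Y)$ (handled by the Leray vanishing above) and, more delicately, count the fibral divisors correctly in the non-split cases I$_2^{\text{ns}}$ and I$_4^{\text{ns}}$. The subtlety is that for a non-split fiber the individual \emph{geometric} components need not be defined over $\mathbb{C}(S)$; what enters the Shioda--Tate--Wazir sum is the number of components defined over the generic point, which the explicit crepant resolutions of Sections \ref{sec:SO3}--\ref{sec:SO6} exhibit directly as the irreducible fibral divisors $D_i$. Once this count is identified with the number of nodes of the (possibly twisted) affine dual graph minus one, the remainder is bookkeeping with results already in hand.
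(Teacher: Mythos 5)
Your proposal is correct and follows essentially the route the paper itself relies on: the paper does not spell out a proof of this theorem (it defers to the Euler characteristic reference and to the relation $h^{1,1}(Y)=n_{\text V}^{(6)}+h^{1,1}(B)+1$ used in Section \ref{sec:stringy}), and your reconstruction via the Leray spectral sequence, the Shioda--Tate--Wazir formula with the fibral divisors counted over the generic point of $S$ (correctly handling the non-split fibers), and the specialization of the Euler characteristic tables reproduces exactly the entries of Table \ref{Table:TopologyCY3}. One small slip: Serre duality gives $H^2(B,\mathscr{O}_B)\cong H^0(B,K_B)^{\vee}$, not $H^0(B,\mathscr{O}_B)^{\vee}$, though the vanishing you need still holds since $K_B$ is not effective on a rational surface.
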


\begin{table}[htb]
\begin{center}
\begin{tabular}{|c|c| c |c| c |c|}
\hline 
Model & $\chi(Y)$ & $g$ &
$h^{1,1}(Y)$ &  $h^{2,1}(Y)$\\
\hline 
SO(3) & $-36 K^2$ & $1+6K^2$& 
$12-K^2$ & $12+17 K^2$ \\
\hline 
SO(5) & $-20 K^2$ & $1+K^2$ & 
$13-K^2$  & $13+9 K^2$\\
\hline 
SO(6) & $-12 K^2$ & $1+K^2$ & 
$14-K^2$  & $14+5K^2$\\
\hline
\end{tabular}
\begin{tabular}{|c|c|}
\hline
 $n_{\textbf{adj}}$  & $n_{V}$  \\
\hline
$g$ & $0$  \\
\hline
$g$ &  $3(g-1)$  \\
\hline
$g$ & $2(g-1)$\\
\hline
\end{tabular}
\end{center}
\caption{\label{Table:TopologyCY3} Euler characteristic and Hodge numbers for SO(3), SO(5), and SO(6)-models in the case of a Calabi-Yau threefold over a compact rational surface $B$ of canonical class $K$. 
The divisor $S$ is curve of genus $g$. 
The number of multiplets transforming in the adjoint  and  vector representations  are respectively $n_{\textbf{adj}}$  and  $n_{V}$.
}
\end{table}

\section{Application to F/M theory compactifications}
\label{sec:stringy}

We explore a particular application of our geometric results to compactifications of F/M-theory on elliptically-fibered Calabi-Yau threefolds. In particular, we determine the 6D gauge theoretic descriptions associated to the SO($n$)-models described in this paper in the special case that they are elliptically fibered Calabi-Yau threefolds. The low energy effective description of F-theory compactified on an elliptically fibered Calabi-Yau threefold is 6D $(1,0)$ supergravity. However, supergravity theories in 6D have gravitational, gauge, and mixed anomalies at one loop due to the presence of chiral matter, and therefore anomaly cancellation places strong constraints on the matter spectrum. Determining the possible 6D matter content consistent with anomaly cancellation is therefore a task of primary importance for stringy compactifications of this sort.

We begin this section by reviewing relevant aspects of supergravity theories with 8 supercharges in 5D and 6D. We then determine the number of  hypermultiplets charged in a given representation by solving the anomaly cancellation conditions and checking that they match the results of the 5D computations described earlier. The match between the 6D and 5D matter spectra is essentially due to the fact that F-theory compactified on an elliptically fibered Calabi-Yau threefold times a circle is dual to M-theory compactified on the same threefold, which implies that the gauge theory sector of the 6D $(1,0)$ theory compactified on a circle admits a description as a 5D $\mathcal N=1$ theory.

\subsection{6D $\mathcal{N}=(1,0)$ supergravity }
We collect some useful facts about 6D theories \cite{Park,Sadov:1996zm}. Six-dimensional (gauged) supergravity with 8 supercharges has SU(2) R-symmetry. 
The fermions of the theory can be formulated as symplectic Majorana--Weyl spinors, which transform in the fundamental representation of the SU(2) R-symmetry group. 
There are four types of massless on-shell supermultiplets: a graviton multiplet, $n_{\text T}$ tensor multiplets, $n_{\text V}^{(6)}$ vector multiplets characterized by a choice of gauge group, and $n_{\text H}$ hypermultiplets transforming in a representation of the gauge group.

We will call an antisymmetric $p$-form with self-dual (resp. anti-self-dual)  field strength a self-dual (resp. anti-self-dual) $p$-tensor field. In addition, we will simply refer $2$-tensor fields as `tensors'. 
The graviton multiplet contains an anti-self-dual tensor, while each tensor multiplet includes a self-dual tensor. The self-duality properties of these tensor fields cannot be derived consistently from a known action principle,  and consequently 6D $\mathcal{N}=(1,0)$ supergravity does not at present have a conventional Lagrangian formulation for $n_{\text T}>1$.

The tensor multiplet scalars parametrize the homogeneous symmetric space $\text{SO}(1,n_{\text T})/\text{SO}(n_{\text T})$ \cite{Romans:1986er}. The quaternionic scalars parametrize locally a non-compact quaternionic-K\"ahler manifold. 
The chiral tensor multiplets can induce local anomalies that have to be cancelled.

\begin{table}[htb]
\begin{center}
\begin{tabular}{|c|c|}
\hline
Multiplet & Fields \\
\hline 
Graviton &  $(g_{\mu\nu}, B^-_{\mu\nu},\psi^-_\mu)$\\\hline
Vector &  $(A_\mu, \lambda^-)$\\\hline
Tensor & $(B^+_{\mu\nu}, \phi, \chi^+)$\\\hline
Hyper &  $(q, \zeta^+)$\\\hline
\end{tabular}
\end{center}
\caption{Supermultiplets of $\mathcal{N}=(1,0)$ six-dimensional supergravity.
The indices $\mu$ and $\nu$ refer to the six-dimensional spacetime coordinates. 
The tensor $g_{\mu\nu}$ is the metric of the six0dimensional spacetime.  The fields $\psi^-_\mu,\lambda^-, \chi^+,  \zeta^+$  are symplectic Majorana--Weyl spinors.
The field $\psi^-_\mu$ is the gravitino.  
 The chirality of fermions is indicated by the $\pm$ superscript. The tensor $B^+_{\mu\nu}$ is a two-form with self-dual field strength, while  $B^-_{\mu\nu}$  is a two-form with anti-self-dual field-strength.
  The scalar field $\phi$ is a  pseudo-real field. The hypermultiplet scalar $q$ is a quaternion composed of four pseudo-real fields. 
 \label{Table:6DMatter}}
\end{table}

\subsection{5D $\mathcal{N}=1$ supergravity }
Five-dimensional (Yang-Mills-Einstein) supergravity with 8 supercharges has SU(2) R-symmetry. 
All spinors are symplectic Majorana spinors and transform in the fundamental representation of the SU(2) R-symmetry group.  
There are three types of massless on-shell supermultiplets: a graviton multiplet, $n_{\text V}^{(5)}$ vector multiplets, and $n_{\text H}^{(5)}$ hypermultiplets. The graviton multiplet contains a vector field called the graviphoton.

The scalar fields of the hypermultiplets are called hyperscalars, and transform in the fundamental representation of SU(2). 
Each hyperscalar is a complex doublet, giving altogether four real fields. The hyperscalars  collectively 
parametrize a quaternionic-K\"ahler manifold of real dimension $4n_{\text H}^{(5)}$. The vector multiplet scalars $\phi$ parametrize a real $n_{\text V}^{(5)}$ dimensional manifold called a {\em very special real manifold} which can described in terms of \emph{very special coordinates} as a hypersurface ${\mathscr{F}}=1$ of an affine real space of dimension  $n_{\text V}^{(5)}+1$. 

The dynamics of the gravity and vector fields at the two-derivative level are completely determined by a real cubic polynomial ${\mathscr{F}}$ whose coefficients are the Chern-Simons couplings appearing in the 5D action. In particular, the cubic potential ${\mathscr{F}}$ determines both the matrix of gauge couplings and the metric on the real manifold parametrized by the vector multiplet scalars. 

\begin{table}[htb]
\begin{center}
\begin{tabular}{|c|c|}
\hline
Multiplet & Fields \\
\hline 
Graviton &  $(g_{\mu\nu}, A_{\mu},\psi_\mu)$\\\hline
Vector  &  $(A_\mu,\phi, \lambda)$\\\hline
Hyper &  $(q, \zeta)$  \\\hline
\end{tabular}
\end{center}
\caption{Supermultiplets for $\mathcal{N}=1$ five-dimensional supergravity.
The indices $\mu$ and $\nu$ refer to the five-dimensional spacetime coordinates. 
The tensor $g_{\mu\nu}$ is the metric of the five-dimensional spacetime.  The fields $\psi_\mu,\lambda,\zeta$  are symplectic Majorana spinors.
The field $\psi_\mu$ is the gravitino and $A_\mu$ is the graviphoton.   
The hyperscalar is a quaternion. 
   \label{Table:5DMatter}}
\end{table}

\subsubsection*{Kaluza-Klein reduction from 6D to 5D}
If we compactify F-theory on $Y$ to 6D and then further compactify on a circle $S^1$, we anticipate the effective description will include the same field content as the compactification of M-theory on $Y$ to 5D. 
This is summarized in  Table \ref{Table:FM}.

In this section we summarize the Kaluza Klein (KK) reduction of 6D $\mathcal N=(1,0)$ on a circle and the relation of this theory to 5D $\mathcal N =1$ supergravity. To facilitate a comparison between the dual F/M theory compactifications on a smooth threefold, we study these 5D theories on the Coulomb branch, where the non-Cartan vector fields and charged hypermultiplets acquire masses due to spontaneous gauge symmetry breaking and are subsequently integrated out.

It is possible to cast the massless fields of the 6D KK reduction in the canonical 5D framework.  First, the neutral 6D hypermultiplets descend to 5D hypermultiplets,
	\begin{align}
		n_{\text H}^{(5)} = n_{\text H}^{0},	
	\end{align}
where we use $n_{\text H}^0$ to denote neutral hypermultiplets in 6D. 

We next turn our attention to the vector fields. Dimensional reduction of the 6D tensor fields produces both 5D tensor and vector fields. However, the 6D self-duality condition also descends to a 5D constraint that can be imposed at the level of the action to dualize all of the 5D  massless tensors to vectors. Thus we are free to assume that on top of the graviton multiplet, the  field content of  our 5D KK theory is comprised solely of vector multiplets and hypermultiplets. The 5D KK reduced vector fields include the graviphoton, $n_{\text T} +1$ tensors, and $n_{\text V}^{(6)}$ vectors, making for a total of $n_{\text T}+n_{\text V}^{(6)}+2$ vector fields.\footnote{We emphasize here that when a comparison is made between the 6D theory compactified on a circle and the 5D theory on the Coulomb branch, the number $n_{\text V}^{(6)}$ is equal to the number of uncharged 6D vectors.} Accounting for the fact that one of the vector fields must belong to the graviton multiplet, the total number of 5D vector multiplets is given by \cite{Cadavid:1995bk}:
\begin{equation}
n_{\text V}^{(5)}=n_{\text T}+n_{\text V}^{(6)}+1.
\end{equation}
The next step in our analysis is to understand the geometric origin of the field theoretic data via the F/M theory compactifications. 
\begin{table}
\begin{center}
\begin{tabular}{|c|}
\hline 
 6D $\mathcal{N}=(1,0)$ sugra on $\mathbb R^{1,4} \times S^1$ \\
 $\downarrow$\\
 5D $\mathcal{N}=1$ sugra on $\mathbb R^{1,4}$ \\
 \hline 
  $n_{\text V}^{(5)}=n_{\text V}^{(6)}+n_{\text T}+1$\\
$n_{\text H}^{(5)} = n^{0}_H$\\
 \hline 
 \end{tabular}
 \end{center}
 \caption{Identification between multiplets from KK reduction of 6D $\mathcal N=(1,0)$ and 5D $\mathcal N =1$ supergravity multiplets on the Coulomb branch.}
\end{table}

  \subsection{M-theory on a Calabi-Yau threefold}
  
 \begin{table}[htb]
 \begin{center}
\begin{tabular}{|c|c|c|}
\hline 
F-theory on $Y$ & M-theory on $Y$ &  F-theory on $Y\times S^1$  \\
$\downarrow$ & $\downarrow$& $\downarrow$\\
 D=6 $\mathcal{N}=(1,0)$ sugra  &  D=5 $\mathcal{N}=1$ sugra &  D=5 $\mathcal{N}=1$ sugra \\
 \hline 
$n_{\text V}^{(6)}+n_{\text T}=h^{1,1}(Y)-2$ & $n_{\text V}^{(5)}=h^{1,1}(Y)-1$ &   $n_{\text V}^{(5)}=h^{1,1}(Y)-1$\\
 $n_{\text H}^0=h^{2,1}(Y)+1$& $n_{\text H}^0=h^{2,1}(Y)+1$& $n_{\text H}^0=h^{2,1}(Y)+1$ \\
 $n_{\text T}=h^{1,1}(B)-1$ & &   \\
 \hline 
 \end{tabular}
 \end{center}
 \caption{\label{Table:FM} Compactification of F-theory and M-theory on a smooth Calabi-Yau threefold $Y$. We assume that all two-forms in the five-dimensional theory are dualized to vector fields. The numbers of neutral hypermultiplets are the same in 6D and 5D. 
 By contrast, the number of 5D vector multiplets is $n_{\text V}^{(5)}=n_{\text V}^{(6)}+n_{\text T}+1$, where we emphasize that the 5D theory is on the Coulomb branch.} 
 \end{table}

Compactification of 11D supergravity on a resolved Calabi-Yau threefold $Y$ leads to 
 5D $\mathcal{N}=1$ supergravity on the Coulomb branch, coupled to $n_{\text V}^{(5)}=h^{1,1}(Y)-1$ vector multiplets and $n_{\text H}=h^{1,2}(Y)+1$ hypermultiplets \cite{Cadavid:1995bk}. 
 The Coulomb branch of the 5D gauge sector is parametrized by the $n_{\text V}^{(5)}$ vector multiplet scalars, and  corresponds to the extended K\"ahler cone of $Y$ restricted to the unit volume locus.\footnote{There is an additional K\"ahler modulus controlling the overall volume of $Y$ which belongs to the universal hypermultiplet and is thus not counted among the 5D Coulomb branch parameters.
In this case, we expand a K\"ahler class in a basis of $h^{1,1}(Y)-1$ irreducible effective divisors identified with the coroots of the affine Dynkin diagram $\widetilde{\mathfrak{g}}^t$.} The Chern-Simons couplings determining the one-loop quantum corrected prepotential  on the Coulomb branch are identified as the triple intersection numbers of the effective irreducible divisors of $Y$, appearing as coefficients of the triple intersection polynomial.

Given a $G$-model with representation $\textbf{R}$, we consider the triple intersection polynomial $\mathcal F$ of the $G$-model  and the prepotential $\mathcal F_{\text{IMS}}$ of a 5D gauge theory with gauge group $G$ and an undetermined number of hypermultiplets in the representation $\textbf{R}$. We can determine which values of the numbers of charged hypermultiplets are necessary to get a perfect match between $\mathcal F|_{\alpha_0 = 0}$ and $\mathcal F_{\text{IMS}}$.

  \subsection{F-theory  on a Calabi-Yau threefold  
and anomaly  cancellation}
In this section, we prove that the SO(3), SO(5), and SO(6)-models define anomaly free 6D $\mathcal{N}=(1,0)$ supergravity theories. 

\subsubsection{Generalized Green-Schwarz mechanism in F-theory}

The low energy effective description of F-theory compactified on an elliptically fibered Calabi-Yau threefold $Y$ with a base $B$ is six-dimensional $\mathcal{N}=(1,0)$ supergravity coupled to 
$n_{\text T}=h^{1,1}(B)-1$ tensor multiplets, $n_{\text H}^0=h^{2,1}(Y)+1$ neutral hypermultiplets, and  $n_{\text V}^{(6)}$ massless (Cartan) vector multiplets such that  $n_{\text V}^{(6)}+n_{\text T}=h^{1,1}(Y)-2$ \cite{Morrison:1996pp}.
That is: 

\begin{equation}
h^{1,1}(Y)=n_{\text V}^{(6)}+h^{1,1}(B)+1.
\end{equation}
We assume that $Y$ is  a simply-connected elliptically fibered Calabi-Yau threefold with holonomy SU($3$). The restriction on the holonomy is stronger than the condition  that $Y$ has a trivial canonical class, in particular, it  implies that  $h^{1,0}(Y)=h^{2,0}(Y)=0$. Then we also have $h^{2,0}(B)=h^{1,0}(B)=0$ and  the Enriques--Kodaira classification  identifies  $B$ to be a rational surface or an Enriques surface. 
However, the condition that $Y$ is simply connected also requires that $B$ is simply connected and this rules out the Enriques surface \cite[Chap VI]{Barth}. Thus $B$ is a rational surface.\footnote{  A rational surface is a surface birational to the complex projective plane $\mathbb{P}^2$. Any smooth rational surface is   $\mathbb{P}^2$, the Hirzebruch surface $\mathbb{F}_n$ ($n\neq 1$), or derived from them by a finite sequence of blowups \cite[Theorem V.10]{Beauville}. An Enriques surface is a surface with $h^{1,0}=0$, $K_B^2=0$ but $K_B\neq \mathscr{O}_B$. An Enriques surface has $h^{1,1}=10$ , $h^{1,0}=h^{2,0}=0$.} The only non-trivial Hodge number of $B$ is $h^{1,1}(B)$,  its Euler characteristic is $c_2(TB)=2+h^{1,1}(B)$ and its signature $\tau(TB)=2-h^{1,1}(B)$. Moreover, since $h^{0,1}(B)=0$, 
its holomorphic Euler characteristic $\chi(\mathscr{O}_B)=h^{0,0}-h^{0,1}$  is $1$.  Thus, Noether's theorem $(12\chi(\mathscr{O}_B)= K^2+c_2(TB)$)    implies that $h^{1,1}(B)=10 -K^2$. 
The Hodge index theorem states that the intersection of two-forms in $B$ has signature $(1, h^{1,1}(B)-1)$.

Consider a 6D ${\cal N}=(1,0)$ supergravity theory with $n_{\text T}$ tensor multiplets, $n_{\text V}^{(6)}$ vector multiplets, and $n_{\text H}$ hypermultiplets. 
We assume that the gauge group is simple. 
We distinguish $n_{\text H}^0$ neutral hypermultiplets and n$_{\mathbf{R_i}}$ hypermultiplets transforming in the representation $\mathbf{R}_i$ of the gauge group. 
CPT invariance requires that $\textbf{R}_i$ is a quaternionic representation \cite{Schwarz:1995zw}, but if $\bf{R}_i$ is pseudo-real, 
but when $\bf{R}_i$ is pseudo-real, we can have half-hypermultiplets transforming under $\bf{R}_i$, which can give half-integer values for $n_{\bf{R}_i}$. 
Following \cite{GM1}, we count as neutral any hypermultiplet whose  charge is given by the zero weight of a representation. 
We denote by $\dim{\mathbf R}_{i,0}$ the number of zero weights in the representation ${\mathbf R}_{i}$. 
The total number of charged hypermultiplets is then \cite{GM1}
\begin{equation}
n_{\text H}^{\text{ch}}=\sum_i (\dim \mathbf{R}_i -\dim{\mathbf R}_{i,0}) n_{\mathbf{R_i}},
\end{equation}
and the total number of hypermultiplets is  $n_{\text H}=n_{\text H}^0+n_{\text H}^{\text{ch}}$.  
The pure gravitational anomaly  is cancelled by the vanishing of the coefficient of $\mathrm{tr}\  R^4$ in the anomaly polynomial \cite[Footnote 3]{Salam}:
\begin{equation}
\label{eqn:grav}
n_{\text H}-n_{\text V}^{(6)}+29 n_{\text T}-273=0.
\end{equation}
 Using the duality between F-theory on an elliptically fibered Calabi-Yau threefold with base $B$ and type IIB on $B$, Noether's formula implies the following for the number of tensor multiplets \cite{Sadov:1996zm}:
\begin{equation}
n_{\text T}=h^{1,1}(B)-1=9 - K^2.
\end{equation}
If the gauge group is a simple group $G$, the remaining part of the anomaly polynomial is \cite{Schwarz:1995zw}:
 \begin{equation}
\mathcal  I_8= \frac{9-n_{\text T}}{8} (\mathrm{tr} \   R^2)^2+\frac{1}{6}\  X^{(2)} \mathrm{tr}\ R^2-\frac{2}{3} X^{(4)},
 \end{equation}
where 
\begin{align}
X^{(n)}=\mathrm{tr}_{\mathbf{adj}}\  F^n -\sum_{i}n_{\mathbf R_i} \mathrm{tr}_{\mathbf R_i}\  F^n.
\end{align}
Choosing a reference representation $\mathbf F$, we have after some trace identities:
The trace identities for a representation $\mathbf{R}_{i}$ of a simple group $G$ are
\begin{equation}
\tr_{\bf{R}_{i}} F^2=A_{\bf{R}_{i}} \tr_{\bf{F}} F^2 , \quad \tr_{\bf{R}_{i}} F^4=B_{\bf{R}_{i}} \tr_{\bf{F}} F^4+C_{\bf{R}_{i}} (\tr_{\bf{F}} F^2)^2
\end{equation}
with respect to a reference representation $\bf{F}$ for each simple component $G$ of the gauge group.\footnote{We denote this representation by $\bf{F}$ as we have chosen the fundamental representation(s) for convenience. However, any representation can be used as a reference representation.} The coefficients $A_{\bf{R}_{i}}$, $B_{\bf{R}_{i}}$, and $C_{\bf{R}_{i}}$ depends on the gauge groups and are listed in \cite{Erler,Avramis:2005hc,vanRitbergen:1998pn}. We then have 
\begin{align}
X^{(2)}&=\Big(A_{\textbf{adj}}   -\sum_{i}n_{\mathbf R_i} A_{\mathbf R_i}\Big) \mathrm{tr}_{\mathbf F}  F^2\\
X^{(4)}&=\Big(B_{\textbf{adj}}   -\sum_{i}n_{\mathbf R_i} B_{\mathbf R_i}\Big) \mathrm{tr}_{\mathbf F}  F^4 +
\Big(C_{\textbf{adj}}   -\sum_{i}n_{\mathbf R_i} C_{\mathbf R_i}\Big)( \mathrm{tr}_{\mathbf F} F^2)^2
.
\end{align}
  If  $G$  does not have two independent quartic Casimir invariants, we take $B_{\bf{R}_{i}}=0$ \cite{Sadov:1996zm}. 
In a gauge theory with at least two quartic Casimirs, to have a chance to cancel the anomaly, the coefficient of $\mathrm{tr}_{\mathbf F}  F^4$ must vanish: 
\begin{align}
 B_{\textbf{adj}}   -\sum_{i}n_{\mathbf R_i} B_{\textbf{R}_i}=0.
\end{align}
We are then left with:
 \begin{align}
2 \mathcal  I_8 &= \frac{9-n_{\text T}}{4} (\mathrm{tr} \   R^2)^2+\frac{1}{3} \left(A_{\bf{adj}}-\sum_{i}n_{\bf{R}_{i}}A_{\bf{R}_{i}}\right)   ( \mathrm{tr}_{\mathbf F} F^2)\  (\mathrm{tr}\ R^2)-\frac{4}{3} \left(C_{\bf{adj}}-\sum_{i}n_{\bf{R}_{i}}C_{\bf{R}_{i}}\right)  ( \mathrm{tr}_{\mathbf F} F^2)^2.
 \end{align}
If the simple group $G$ is supported on a divisor $S$ and $K$ is the canonical class of the base of the elliptic fibration,
we can  factor $\mathcal{I}_8$ as a perfect square following Sadov's analysis \cite{Sadov:1996zm,Sagnotti:1992qw}: 
 \begin{align}
\mathcal I_8 &= \frac{1}{2}(\frac{1}{2} K\mathrm{tr} \   R^2-\frac{2}{\lambda} S\ \mathrm{tr}_{\mathbf F} F)^2.
 \end{align}
Sadov showed this factorization matches the general expression of $\mathcal  I_8$  if and only the following anomaly cancellation conditions hold  \cite{Sadov:1996zm} (see also \cite{GM1,Park}):
\begin{subequations}\label{eq:AnomalyEqn}
\begin{align}
n_{\text H}-n_{\text V}^{(6)}+29n_{\text T}-273 &=0,\label{Grav.an}\\
n_{\text T}&=9-K^2 , \\
\left(B_{\bf{adj}}-\sum_{i}n_{\bf{R}_{i}}B_{\bf{R}_{i}}\right)& = 0, \label{B.an}\\
\lambda \left(A_{\bf{adj}}-\sum_{i}n_{\bf{R}_{i}}A_{\bf{R}_{i}}\right) & =6  K S, \\
\lambda^2 \left(C_{\bf{adj}}-\sum_{i}n_{\bf{R}_{i}}C_{\bf{R}_{i}}\right) & =-3 S^2.
\end{align}
\end{subequations}
The coefficient  $\lambda$ is a  normalization factor  chosen such that the  smallest topological charge of an embedded SU($2$) instanton in $G$ is one \cite{Kumar:2010ru, Park, Bernard}. This forces $\lambda$ to be the Dynkin index of the fundamental representation of  $G$ as summarized in Table \ref{tb:normalization} \cite{Park}. 

Using adjunction ($KS+S^2=2g-2$), the last two anomaly equations give an expression for the genus of $S$:
\begin{equation}\label{Witten.an}
\lambda \left(A_{\bf{adj}}-\sum_{i}n_{\bf{R}_{i}}A_{\bf{R}_{i}}\right) -2\lambda^2 \left(C_{\bf{adj}}-\sum_{i}n_{\bf{R}_{i}}C_{\bf{R}_{i}}\right) =12  (g-1).
\end{equation}

\begin{table}[htb]
\begin{center}
\begin{tabular}{|c|c|c|c|c|c|c|c|c|c|}
\hline
 $\mathfrak{g}$ & A$_n$ ($n\geq 1$) & B$_n$  ($n\geq 3$) & C$_n$ ($n\geq 2$) & D$_n$  ($n\geq 4$)& E$_8$ & E$_7$ & E$_6$&  F$_4$ & G$_2$ \\
 \hline
 $\lambda$ & $1$ & $2$  & $1$ & $2$ & $60$ & $12$ & $6$ & $6$ & $2$ \\
 \hline  
\end{tabular}
\caption{The normalization factors for each simple gauge algebra. See \cite{Kumar:2010ru}.}
\label{tb:normalization}
\end{center}
\end{table}

\subsubsection{Anomaly cancellations for SO($3$), SO($5$), and SO($6$)-models}
We will now consider the specificity of the SO($3$), SO($5$), and SO($6$)-model. 
Since we use the vector representation of SO($3$), SO($5$), and SO($6$) as the reference representation, $\lambda$ takes the respectively the same value as for SU($2$), B$_n$ and D$_n$.

\begin{table}[hbt]
\begin{center}
\begin{tabular}{|c|c|c|c|c|c|c|c|c|c|}
\hline 
$\mathfrak{g}$ & $\lambda$ & $A_{\textbf{adj}}$ & $B_{\textbf{adj}}$ & $C_{\textbf{adj}}$ & $A_{V}$ & $B_{V}$ & $C_{V}$ & S & $g(S)$\\
\hline 
 $\mathfrak{so}$(3)&  $4$& $1$ & $0$& $1/2$ & $1$ & $0$ &$1/2$ & -4K &$1+6K^2$\\
 \hline
 $\mathfrak{so}$(5)& $2$ & $3$&  $-3$& $3$ & $1$&  $1$&$0$ & -2K&$ 1+ K^2$\\
 \hline 
 $\mathfrak{so}$(6)& $2$ & $4$&$-2$  & $3$ &$1$ & $1$ &$0$ & -2K& $1+ K^2$\\
\hline
\end{tabular}
\caption{Coefficients for the trace identities in the case of  SO($3$), SO($5$), and SO($6$). In all cases the reference representation $\mathbf{F}$ is the vector representation---namely, the $\bf{3}$ of SO($3$), the $\bf{5}$ of SO($5$), and the $\bf{6}$ of SO($6$). 
 \label{table:SONormalization} }
\end{center}
\end{table}
We will need the following trace identities \cite{PVN,Schwarz:1995zw,Sagnotti:1992qw}: 
\begin{align}
\text{SO}(3)\quad\qquad:&\quad  \tr_{\bf{adj}} F^2= \tr_{\bf{vec}} F^2 , \quad \tr_{\bf{adj}} F^4=\frac{1}{2} (\tr_{\bf{vec}} F^2)^2,\\
\text{SO}(n) \quad n\geq 5: &\quad  \tr_{\bf{adj}} F^2=(n-2) \tr_{\bf{vec}} F^2 , \quad \tr_{\bf{adj}} F^4=(n-8) \tr_{\bf{vec}} F^4+3 (\tr_{\bf{vec}} F^2)^2.
\end{align}

	For the $\text{SO}(3)$-model the only representation that we consider  is in the adjoint ($\bf{3}$) and we also use this representation as our reference representation. 
For the $\text{SO}(5)$ and $\text{SO}(6)$-model, we find matter in the adjoint and the vector representations. 
We use the vector representation as the fundamental representation. Thus,  our choice of $\lambda$ follows from the trace identities 
\begin{align}
& B_1=A_1: \quad \tr_{\bf{3}}F^2=4\tr_{\bf{2}} F^2, \\
& B_2=C_2:\quad \tr_{\bf 5} F^2=2  \tr_{\bf 4} F^2, \\
& D_3=A_3: \quad \tr_{\bf 6} F^2=2 \tr_{\bf 4} F^2.
\end{align}

We first ignore the condition for the cancellation of the gravitational anomaly, namely equation \eqref{Grav.an}. 
After fixing the conventions for the trace identities and the coefficient $\lambda$,  we are left with  linear equations that have a unique solution. 
For SO($3$), all the remaining equation gives $n_{\textbf{adj}} =g$. 
For SO($n$) with $n=5,6$, equation \eqref{B.an} gives $n_{V}=(8-n)(n_{\textbf{adj}}-1)$. 
Feeding this in equation  \eqref{Witten.an} gives $n_{\textbf{adj}}=g$ and all the remaining equations are satisfied:
		\begin{align}
		\text{SO}(3): &\quad	n_{\textbf{adj}} = 1 + 6 K^2 = g, \\
\text{SO}(5):&\quad		n_{\textbf{adj}} = 1 + K^2 = g ,~~ n_{V} = 3 K^2=3(g-1), \\ 
\text{SO}(6):&\quad		n_{\textbf{adj}} = 1+ K^2 = g , ~~ n_{V} = 2K^2=2(g-1). 
	\end{align}

We are left with the pure gravitational  anomaly, which requires checking  equation  \eqref{Grav.an}.  Since we have explicit expressions for the number of charged hypermultiplets, this is a straightforward computation.
We recall that:
\begin{align}
n_{\text T}=9-K^2, \quad 
n_{\text H}=n_{\text H}^0 +n_{\text H}^{\text{ch}}, \quad n_{\text H}^0 =h^{2,1}(Y)+1, \quad 
 n^{\text{ch}}_{\text H} =
  \sum_{i} ( \mathrm{dim}\ \mathbf{R}_i-\mathrm{dim} \mathbf{R}_{i,0}) n_{\mathbf R_i}
  \end{align}
For all  adjoint representations, the number of zero weights is the number of simple roots, which is the rank of the Lie algebra. 
The vector representation of SO($6$) does not have any zero weights, but the vector representation of SO($5$) has exactly one zero weight. 

For SO($3$), we have 
\begin{align}
{ n_{\text V}^{(6)}}&=\text{dim}(\textbf{adj})=3, \\
n_{\text H}^0 &=h^{2,1}(Y)+1=13+17K^2,  \\
n_{\text H}^{\text{ch}} &=\big(\mathrm{dim}\ {\mathfrak{so}(3)}-\mathrm{rk}\ \mathfrak{so}(3)\big) { n_{\text V}^{(6)}}=2+12 K^2.
  \end{align}

For SO($5$), we have 
\begin{align}
   n_{\text V}^{(6)}&=\text{dim}(\textbf{adj})=10, \\
n_{\text H}^0 &=h^{2,1}(Y)+1=14+9K^2,  \\
n^{\text{ch}}_{\text H} &=(\mathrm{dim}\ {\mathfrak{so}(5)}-\mathrm{rk}\  \mathfrak{so}(5))n_{\textbf{adj}}+(5-1) n_{\mathbf{5}}=8+20 K^2.
  \end{align}
  In computing $n^{\text{ch}}_{\text H}$, we use $(5-1) n_{\mathbf{5}}$ instead of $5 n_{\mathbf{5}}$ since the representation $\mathbf{5}$  has one zero weight.  
  
For SO($6$), we have 
\begin{align}
{ n_{\text V}^{(6)}}&=\text{dim}(\textbf{adj})=15, \\
n_{\text H}^0 &=h^{2,1}(Y)+1=15+5K^2,  \\
n_{\text H}^{\text{ch}} &=\big(\mathrm{dim}\ {\mathfrak{so}(6)}-\mathrm{rk}\ \mathfrak{so}(6)\big) { n_{\text V}^{(6)}}+6 n_{\mathbf{6}}=12+24 K^2.
  \end{align}
We then see immediately that in all three models, equation \eqref{Grav.an} is also satisfied.

\section*{Acknowledgements}
The authors are grateful to Chris Beasley, 
 James Halverson, Ravi Jagadeesan, Monica Kang, Cody Long, Sabrina Pasterski, and Julian Salazar for discussions. 
P.J. would like to thank the organizers of the 2018 Summer Workshop at the Simons Center
for Geometry and Physics for their hospitality and support during part of
this work.  
M.E. is supported in part by the National Science Foundation (NSF) grant DMS-1406925  and DMS-1701635 ``Elliptic Fibrations and String Theory''.
P.J.  is  supported by the Harvard University Graduate Prize Fellowship. P.J. would like to extend his gratitude to Cumrun Vafa for his tutelage and continued support.

\end{document}